\def\chapterautorefname{Chapter}
\def\chapterautorefname{\S\@gobble}
\def\sectionautorefname{\S\@gobble}
\def\subsectionautorefname{\S\@gobble}
\def\subsubsectionautorefname{\S\@gobble}
\def\appendixautorefname{\S\@gobble}
\newcommand{\appref}[1]{\autoref{app:#1}}
\newcommand{\srdsref}[1]{\autoref{#1} (\cite[\autoref*{srds-#1}]{srds-version})}
\newabbreviation[%
    type=notation,%
    category=notation-category,%
    description=A subset of vertices in an undirected graph such that all are pairwise-not-neighbors
]
{IS}{Independent-Set}{Independent Set of Vertices in an Undirected Graph}
\newabbreviation[%
    category=notation-category,%
    description=A valid coloring for vertices in an undirected graph
]
{coloring}{coloring}{Vertex Coloring}
\newglossaryentry{symb:chromatic}{%
    type=notation,%
    category=notation-category,%
    name=$\chromaticnumberformat{}{G}$,%
    description={The chromatic number of a graph $G$ (Unweighted)},
text={}
}
\newglossaryentry{symb:weighted-chromatic}{%
type=notation,%
category=notation-category,%
name=$\chromaticnumberformat{\ell}{G}$,%
description={The chromatic number of a graph $G$ with vertex weighting fuction $\ell$},
text={}
}
\newglossaryentry{symb:haspath}{%
type=notation,%
category=notation-category,%
name={$v \longrightsquiglearrow_G u$},%
description={The graph $G$ contains a path from vertex $u$ to the vertex $v$.},
text={}
}
\newglossaryentry{symb:disjointunion}{%
type=notation,%
category=notation-category,%
name={$\bigsqcup$},%
description={Disjoint Union},
text={}
}
\newglossaryentry{symb:nodedepth}{%
type=notation,%
category=notation-category,%
name={$\nodedepthformat{G\,}{v}$},%
description={The is the number of vertices in a maximal simple path in graph $G$ that ends at vertex $v$.},
text={}
}
\newglossaryentry{symb:depth}{%
type=notation,%
category=notation-category,%
name={$\depthformat{\big[\ell\big]}{G}$},%
description={The depth of a graph $G$ is the number of vertices in a maximal simple path in the graph. When given a weight function $\ell$ for vertices, we use the sum of the weight of the vertices in the path instead of the number of vertices.},
text={}
}
\newglossaryentry{symb:N}{%
type=notation,%
category=notation-category,%
name=\ensuremath{\mathbb{N}},%
description={The set of natural numbers}
}
\newabbreviation[%
type=notation,%
category=notation-category,%
description={}
]
{NPH}{NP-hard}{Nondeterministic-Polynomial Hard}
\newabbreviation[%
type=notation,%
category=notation-category,%
description={}
]
{NPC}{NP-Complete}{Nondeterministic-Polynomial Complete}
\newabbreviation[%
category=notation-category,%
]
{Greedy Schedule}{Level Schedule Function}{The \textsc{LevelSchedule} function in \autoref*{alg:greedy-schedule}}
\newglossaryentry{symb:N+}{%
type=notation,%
category=notation-category,%
name=\ensuremath{\mathbb{N}^+},%
description={
The set of natural numbers \emph{excluding} 0
(\(\mathbb{N}^+ := \mathbb{N}\setminus \left\{0\right\}\))
}
}
\newglossaryentry{symb:N0}{%
type=notation,%
category=notation-category,%
name=\ensuremath{\mathbb{N}^0},%
description={
The set of natural numbers \emph{including} 0
(\(\mathbb{N}^0 = \mathbb{N} \cup \left\{0\right\}\))
}
}
\newglossaryentry{symb:IS}{%
type=notation,%
category=notation-category,%
name=\ensuremath{\mathfrak{ISet}},%
description={The Independent Set Decision Problem \vfill
\(\mathfrak{ISet}=\left\{\left(G, k\right) \in \text{Graphs} \times \mathbb{N}\right\}\)}
}
\newglossaryentry{symb:Color}{%
type=notation,%
category=notation-category,%
name=\ensuremath{Color},%
description={Vertex coloring decision problem (see~\autoref{def:color})}
}
\newglossaryentry{symb:MinColor}{%
type=notation,%
category=notation-category,%
name=\ensuremath{MinColor},%
description={Vertex coloring with a minimal number of colors search problem (see~\autoref{def:mincolor})}
}
\newglossaryentry{symb:MinColoring}{%
type=notation,%
category=notation-category,%
name=\ensuremath{MinColoring},%
description={Minimal vertex coloring search problem (see~\autoref{def:mincoloring})}
}
\newglossaryentry{symb:OptimalLatency}{%
name=\ensuremath{OptimalLatency},%
description=,
}
\newglossaryentry{symb:Latency}{%
name=\ensuremath{Latency},%
description=,
}
\newglossaryentry{symb:OptimalSchedule}{%
name=\ensuremath{OptimalSchedule},%
description=,
}
\newglossaryentry{symb:NPH}{%
type=notation,%
category=notation-category,%
name=\ensuremath{NPH},%
description=Set of all NP-Hard Problems
}
\newglossaryentry{symb:NPC}{%
type=notation,%
category=notation-category,%
name=\ensuremath{NPC},%
description=Set of all NP-Complete Decision Problems
}
\newacronym[
category=notation-category,
description=A directed graph with no paths that visit the same vertex twice
]{DAG}{DAG}{directed acyclic graph}
\newacronym[
category=notation-category,
description=
]{wlog}{w.l.o.g}{Without loss of generality}
\newacronym[
category=notation-category,
description=
]{tx/s}{tx/s}{transactions per second}
\newacronym[
category=notation-category,
description=
]{ASMR}{ASMR}{Active State Machine Replication}
\newacronym[
category=notation-category,
description=
]{SMR}{SMR}{State Machine Replication}
\newabbreviation[%
category=notation-category,%
description=
]
{BRSchlrA}{Block-Runtime}{Block-Runtime Algorithm}
\newacronym[%
category=notation-category,%
description=
]
{scheduler}{Block-Runtime}{Block-Runtime Algorithm}
\newabbreviation[%
category=notation-category,%
description=A set of transactions in which no two transactions conflict
]
{conflict-free}{conflict-free}{sets with no conflicting transactions}
\newabbreviation[%
category=notation-category,%
description=
]
{sequentially deterministic}{sequentially deterministic}{}
\newacronym[%
category=notation-category,%
description=
]
{schedule}{schedule}{Dependency Schedule}
\newabbreviation[%
category=notation-category,%
description=
]
{scheduling graph}{scheduling graph}{Dependency Scheduling Graph}
\newglossaryentry{serializable}{%
type=term,%
category=term-category,%
name=serializable,%
description=
}
\newglossaryentry{serializability}{%
type=term,%
category=term-category,%
name=serializability,%
description=
}
\newabbreviation[%
category=notation-category,%
description=
]
{homogeneous transactions}{homogeneous transactions}{transactions with homogeneous execution times}
\newabbreviation[%
category=notation-category,%
description=
]
{heterogeneous transactions}{heterogeneous transactions}{transactions with heterogeneous execution times}
\newcommand{\clonelabel}[2]{\@bsphack 
  \expandafter\ifx\csname r@#2\endcsname\relax
  \else\protected@write\@auxout{}{\string\newlabel{#1}
    {\csname r@#2\endcsname}}
  \fi
  \expandafter\ifx\csname r@#2@cref\endcsname\relax 
  \else\protected@write\@auxout{}{\string\newlabel{#1@cref}
    {\csname r@#2@cref\endcsname}}
  \fi
  \@esphack} 
\begin{document}
    \title{Batch-Schedule-Execute: On Optimizing Concurrent Deterministic Scheduling for Blockchains (Extended Version)}
    \author[1]{Yaron Hay}
    \author[2]{Roy Friedman}
    \affil[1,2]{Computer Science Faculty, Technion, Israel}
    \affil[1]{\href{mailto:yaron.hay@cs.technion.ac.il}{yaron.hay@cs.technion.ac.il} \orcidlinkc{0009-0006-1263-7318}}
    \affil[2]{\href{mailto:roy@cs.technion.ac.il}{roy@cs.technion.ac.il} \orcidlinkc{0000-0001-6460-9665}}
    \date{}
    \maketitle
\begin{abstract}
    Executing smart contracts is a compute and storage-intensive task, which currently dominates modern blockchain's performance.
    Given that computers are becoming increasingly multicore, concurrency is an attractive approach to improve programs' execution runtime.
    A unique challenge of blockchains is that all replicas (miners or validators) must execute all smart contracts in the same logical order to maintain the semantics of \gls{SMR}.

    In this work, we study the maximal level of parallelism attainable when focusing on the conflict graph between transactions packaged in the same block.
    This exposes a performance vulnerability that block creators may exploit against existing blockchain concurrency solutions, which rely on a total ordering phase for maintaining consistency amongst all replicas.
    To facilitate the formal aspects of our study, we develop a novel generic framework for \gls{ASMR} that is strictly serializable.
    We introduce the concept of graph scheduling and the definition of the minimal latency scheduling problem, which we prove to be \glsxtrshort{NPH}.
    We show that the restricted version of this problem for homogeneous transactions is equivalent to the classic Graph Vertex Coloring Problem, yet show that the heterogeneous case is more complex.
    We discuss the practical implications of these results.
\end{abstract}
    \clearpage
    \section{Introduction}
    \label{chap:intro}

    Smart contracts~\cite{SzaboSC, EthereumWhitepaper} are used in many blockchains to enable rich semantics required by the FinTech industry, as well as to realize the Web 3.0 vision~\cite{EthereumYellowpaper}.
    Yet, in many modern blockchains, local smart contracts' execution and validation serves as a major performance bottleneck.
    This is true for both HyperLedger Fabric~\cite{FastFabric} as well as other blockchains whose consensus-based ordering mechanism can support anywhere between 10K-1M \gls{tx/s}, but local smart contracts execution, in contrast, is limited to as low as hundreds of \gls{tx/s}~\cite{FastFabric, ParBlockchain, Diablo, SRBB, BlockSTM}.

    Most efforts to reduce smart contract execution and validation times have revolved around improving the system's design aspects and improving the software engineering of the virtual machine implementation~\cite{FastFabric, SRBB}.
    A few works have also addressed applying parallelism to local smart contracts execution~\cite{HerlihySC, ParBlockchain, BlockSTM}, which is critical due to the fact that modern CPUs are highly parallel architectures.
    As we know from DB theory, ensuring an ordering only among conflicting transactions is enough to guarantee \gls{serializability}~\cite{DB-textbook}.
    Alas, the main difference between concurrent smart contracts and database concurrency control~\cite{DBCC} is that in the former, all replicas must ensure the \emph{same} logical total order on all transactions, which requires the serialization order to be deterministic.
    Previous concurrency efforts in blockchains, and state machine replication in general~\cite{Schneider1990}, relied on the total ordering induced from the consensus protocol to derive a unique ordering between conflicting transactions.
    This is commonly known as the \emph{Order-Execute} Paradigm (XO).
    However, as we show in \autoref{fig:gap}, there are scenarios where following the consensus-induced ordering leads to poor utilization, up to completely sequential executions.
    This is despite the fact that, in this example, serialization can be obtained for the same workload with a very high level of~concurrency.

    \begin{figure}
        \centering
        \begin{subfigure}{0.3095\linewidth}
            \centering
            \includegraphics[width=\linewidth]{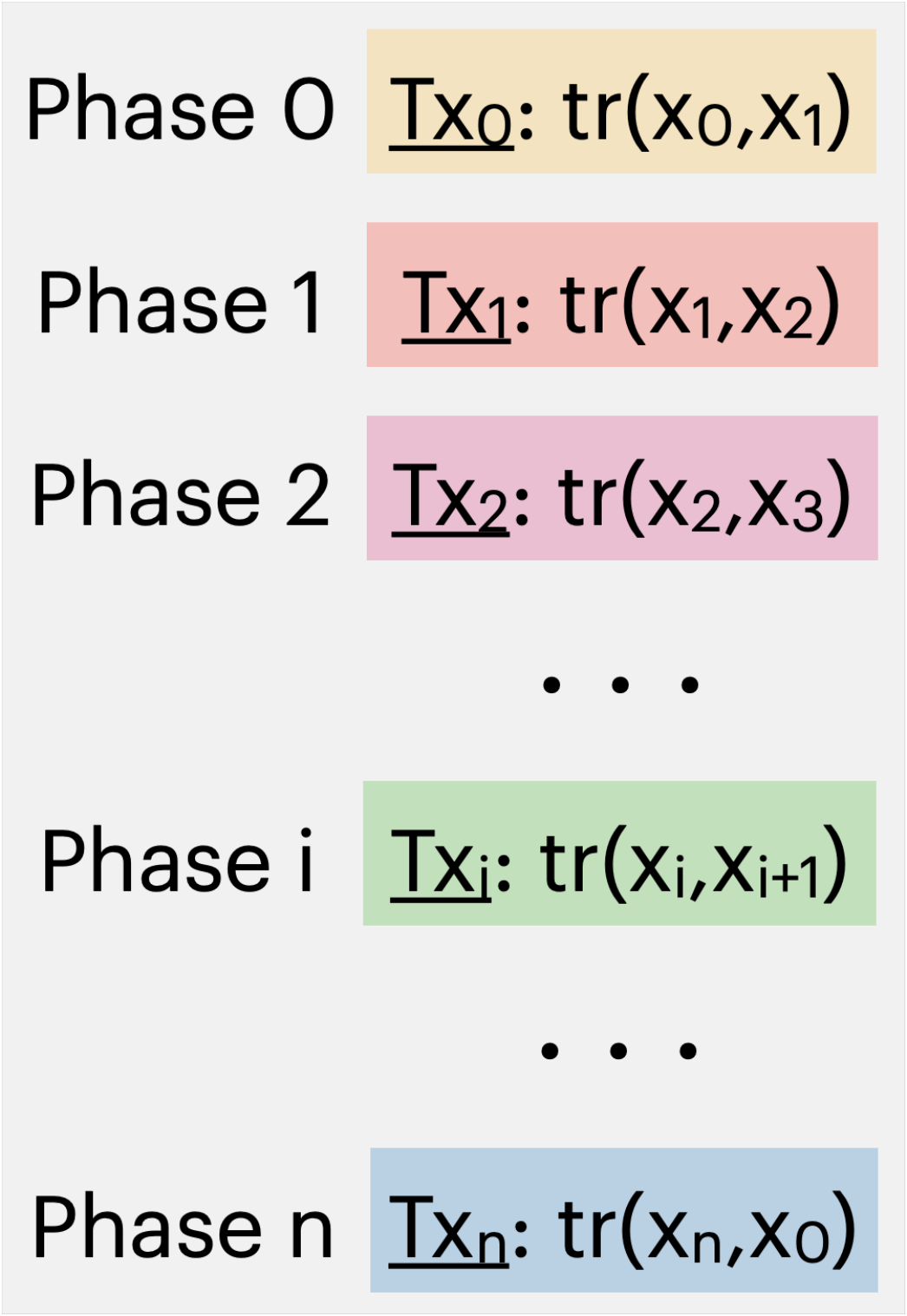}
            \caption{Consensus \\ induced scheduling}
            \label{fig:gap:min}
        \end{subfigure}
        \begin{subfigure}{0.67\linewidth}
            \centering
            \includegraphics[width=\linewidth]{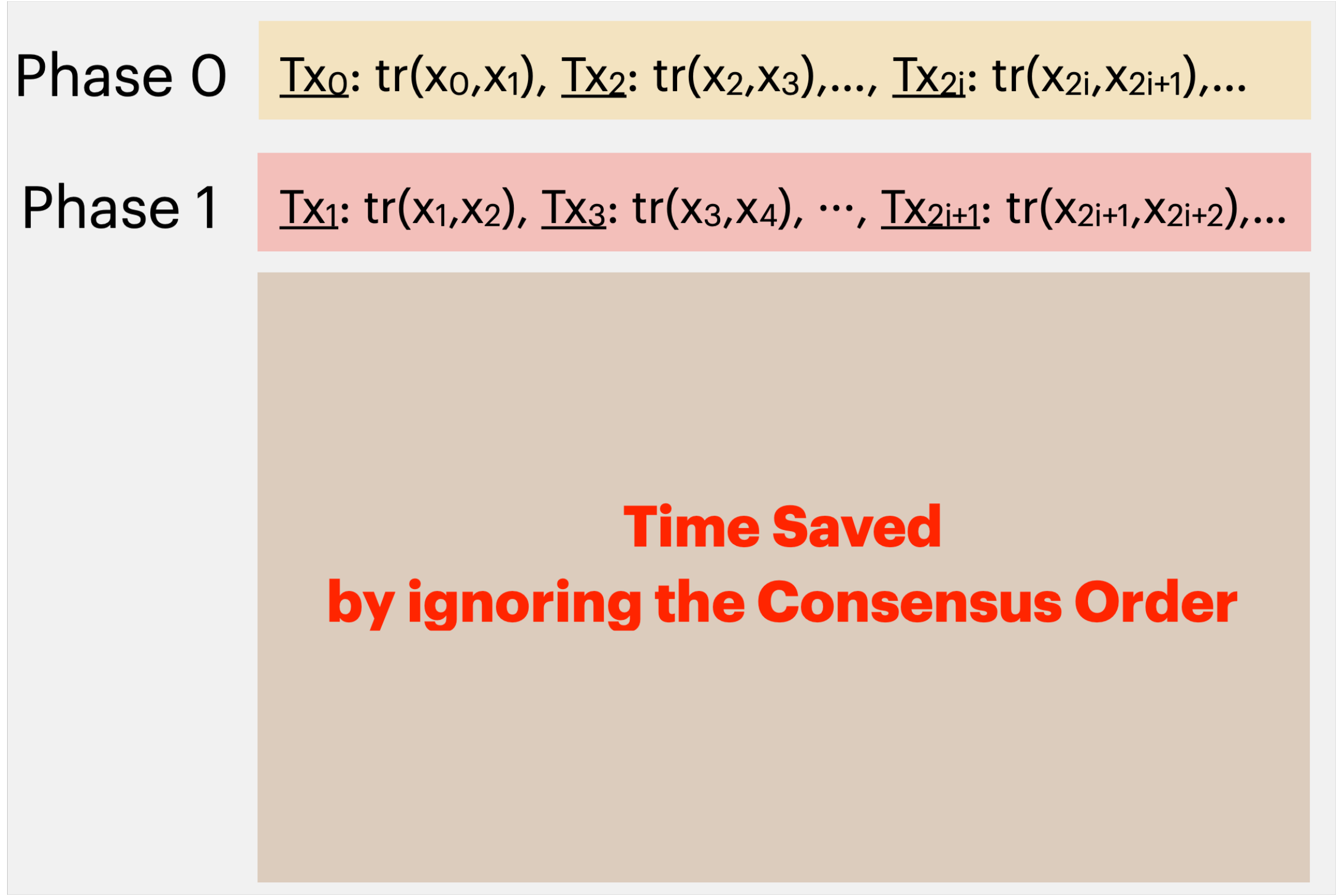}
            \caption{Maximal concurrency\\\ }
            \label{fig:gap:max}
        \end{subfigure}
        \caption{The gap between scheduling a given block's transactions using consensus induced ordering vs. the maximal potential concurrency -- the former takes a linear number of execution steps while the latter only~two.}
        \label{fig:gap}
    \end{figure}

    In this work, we launch a formal study of this gap in order to better understand it and what can be done to narrow it.
    It
    is well known that dependencies between conflicting transactions can be expressed using a conflict graph.
    Hence, we start by introducing a formal framework that enables us to derive transaction scheduling decisions, or a \emph{schedule}, from an undirected conflict graph.
    We prove that such schedules are deterministically serializable, and define latency as well as optimality based on the graph's depth.
    With these, we show an equivalence between finding optimal schedules and minimal graph coloring of the conflict graph when all transactions have similar execution times, which we refer to as \emph{homogeneous transactions}.
    On the positive side, this means that minimal coloring can be used to generate optimal schedules for homogeneous transactions.
    \glsunset{NPH}
    On the other hand, this implies that the problem is \gls{NPH}~\cite{Karp1972}.
    As for \emph{heterogeneous transactions}, we prove that minimal coloring does not guarantee optimal schedules.
    Even
    the generalized weighted minimal coloring problem does not help here.
    Rather,
    we propose a few practical approaches whose exploration is left for future work.

    \paragraph*{Summary of Contributions}
    We
    explore the possibility of generating highly concurrent scheduling algorithms for executing transactions, or smart contracts, that are packaged within the same block of a blockchain system.
    Such schedules must ensure correct executions of all transactions, in particular, in the face of conflicts.
    Such schedules must also be deterministic, as all replicas of the system, validators or miners, must be able to generate and follow them locally and reach the same results everywhere.
    More~specifically:
    \begin{enumerate}[leftmargin=*]
        \item We expose an inherent performance vulnerability that blocks' creators can exploit against existing solutions for concurrent smart contract executions.
        To our knowledge, we are the first to report this.
        \item To alleviate this vulnerability, we propose to break away from the existing approach of relying on the consensus total order, and consider alternative deterministic serializable partial orderings for intra-block transactions.
        \item We establish a formal link between concurrent smart contract execution and deterministic scheduling.
        \item We present a generic formal framework for deriving scheduling decisions, or a schedule, from an undirected conflict graph, and introduce the concept of \emph{sequentially deterministic} schedules.
        This facilitates the exploration of scheduling algorithms and proving their correctness,
        as such proofs only need to show that the resulting schedules are sequentially deterministic.
        \item We prove an equivalence between finding optimal schedules and minimal graph coloring of the conflict graph when all transactions have similar execution times, referred to as \emph{homogeneous} transactions.
        \item For heterogeneous transactions, we exemplify that neither \ coloring nor\ weighted coloring ensures optimal~schedules.
    \end{enumerate}

    \noindent We call this new paradigm \emph{Batch-Schedule-Execute} (BSX) to succeed the XO pattern for it three steps: \emph{Batching} transactions into unordered blocks, creating an efficient deterministic \emph{Schedule} and then executing accordingly.
    We suggest an optional \emph{Order} step (BSXO) to signify that a unique order can be derived form the schedule when applicable.

    \paragraph*{Paper Roadmap} 
    The background and formal preliminaries for this work are presented in \autoref{chap:motivation}.
    We describe the performance vulnerability in \autoref{sec:vulnerability}.
    The generic formal framework is introduced in \autoref{chap:deterministic-scheduling}.
    We introduce the concept of graph scheduling in \autoref{sec:graphing}, and
    discuss its optimality goals in \autoref{chap:optimal}.
    We explore using graph coloring for finding an optimal schedule when transactions are homogeneous in \autoref{chap:homo}.
    Heterogeneity is considered in \autoref{chap:hetero}.
    Related works are surveyed in \autoref{chap:related}, and we conclude with a discussion in \autoref{chap:discussions}.

    \clearpage
    \section{Background}
    \label{chap:motivation}

    A blockchain system is composed of \emph{client nodes}, \emph{validator nodes}, and \emph{observers}.
    In the common Blockchain model, clients submit transactions to validators for execution.
    Validators are responsible for both maintaining the ledger of transactions and executing clients' transactions w.r.t. to the ledger's state.
    All validators must agree on the order in which transactions appear in the ledger, and (logically) execute transactions in that order, thereby realizing a \emph{replicated state machine} semantics~\cite{Schneider1990}.
    Further, consecutive transactions in the ledger are grouped into blocks.
    Observers can access the ledger and read it.

    In this paper, we focus on blockchain systems in which validators continuously agree on the next block of transactions to be executed, and then each executes it locally.
    This can be done in a pipeline, as it is not required to wait until the local execution of a block completes before agreeing on the next block.
    Given the definition of concurrent transactions~\cite{DB-textbook}, all transactions in the same block are
    concurrent~\cite{ParBlockchain, BlockSTM}.
    Therefore, in principle, they can be executed in any order, provided that all validators execute them in the same logical~order.

    \paragraph*{Concurrency} We define an \emph{execution} of a blockchain system to consist of all actions taken by its clients and its validators.
    Our definiton of execution is similar to the definition of a ``schedule'' in Database Theory \cite{DB-textbook}. This is because we define the term ``schedule'' in \autoref{sec:graphing} with semantics that are typical in traditional computer science theory.
    We also use the traditional definitions for execution equivalence~\cite{DB-textbook}, \gls{serializability}~\cite{DB-textbook}, and \emph{strict \gls{serializability}}~\cite{Linearizability}.
    Recall that executions are equivalent if they have the same results for every initial global state. We denote this using~$\sigma \equiv \sigma'$.
    \emph{Strict \Gls{serializability}} also enforces the respective partial real-time order of transaction in addition to \gls{serializability}.
    We envision that especially in consortium blockchains, the number of hardware threads, e.g., using upcoming AMD EPYC ``Turin'' (5gen) CPUs~\cite{AMD-EPYC}, is likely to be larger than the typical number of transactions per block.
    Hence, we focus on the maximum level of parallelism attainable.

        \subsection{State Machine Replication} \label{sec:smr}

    Blockchains are often viewed as a specific instance of \glsxtrfull{SMR}~\cite{Schneider1990}.
    Specifically, we assume a service replicating a \emph{state}, which is composed of a collection of \emph{objects} associated with \emph{operations} or \emph{transactions} that can be invoked on them by clients of the system.
    Transactions may change the content of objects, and may return a \emph{result}, but their execution must be \textbf{deterministic} and can only be affected by the value of the current state and the values of their input parameters.

    In principle, a replicated state machine is assumed to operate normally despite potential replica failures, whose exact types are out of the scope; our work is indifferent to them.
    In active replication, transactions are executed in \emph{all} replicas according to a consistent total order of all transactions.
    The required consistent total ordering is often obtained by repeated execution of some consensus protocol~\cite{Paxos, RAFT, PBFT, tendermint};\ see \autoref{fig:asmr}.

    \begin{figure}
        \centering
        \includegraphics[width=0.7\linewidth]{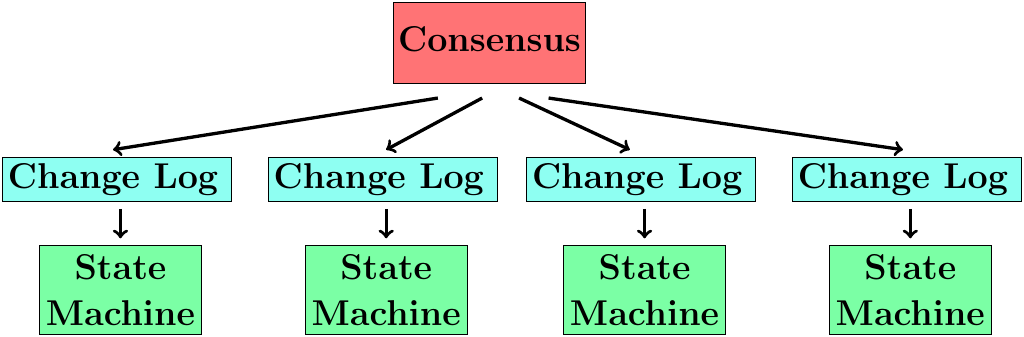}
        \caption{An illustration of Active Replication. }
        \label{fig:asmr}
    \end{figure}

    \subsection{Transaction Conflicts}

    \paragraph{Conflict Graph} Intuitively, a pair of transactions may conflict if each invokes at least one operation on a shared data object that clashes with the corresponding operation of the other transaction~\cite{DB-textbook}\ (see the exact formal definition for read and write operations in \autoref{sec:def-rwsets-conflicts}).\ 
    Given a set of transactions, it is possible to create a \emph{conflict graph} whose vertices are the transactions and there is an (undirected) edge connecting every pair of conflicting~transactions.

    \paragraph{Total Order Tie Breaker} To enable concurrent, yet \gls{serializable}, execution of conflicting transactions and maintain the same logical ordering, previous works~\cite{KD04, ParBlockchain, COS, BlockSTM} used the total ordering order to turn the conflict graph into a \gls{DAG} by directing its edges according to the total ordering.
    The resulting dependency graph is used to ensure that any transaction $tx$ is logically executed before $tx'$ whenever $tx$ is (totally) ordered before~$tx'$.

    In contrast, in this work, we study other methods for directing the conflict graph edges that yield optimal, or near-optimal, concurrency and latency.
    In particular, we explore the feasibility of applying minimal coloring to the conflict graph and directing its edges according to the respective colors associated with the transactions. 
    That is, given two conflicting transactions $tx$ and $tx'$ such that \texttt{color}($tx$) $<$ \texttt{color}($tx'$), we direct the corresponding edge in the conflict graph such that it points to $tx'$.

     \subsection{Transaction Conflicts and Read/Write Sets} \label{sec:def-rwsets-conflicts}

    \paragraph{Read-Write Sets} Each transaction $tx$ is associated with a set of objects it accesses in the global state.
    For simplicity, we assume that transactions may invoke \textsc{read} and \textsc{write} operations on these objects.
    Transaction
    $tx$ is assumed to be associated\footnotemark{} with the sets\ of objects in the global state it \emph{may} read from or write to.
    \footnotetext{
        As in~\cite{KD04,ParBlockchain,COS}, we assume that read sets and write sets are known, e.g., due to explicit annotations, static analysis, or speculative~execution.
    }
    These are denoted $\readset{tx}$ (read-set) and $\writeset{tx}$ (write-set).
    We note that the read-set must include all objects that may be read from in \textbf{any} possible execution.
    This requirement is useful to guarantee the safety in \autoref{thm:graph-sequentially-deterministic}.
    This applies to the write-set as well.

    \newcommand{\true}{\texttt{true}}
    \newcommand{\false}{\texttt{false}}

    \begin{definition}[Conflicts \smaller{[for Read and Write Operations]}]
        Given two distinct transactions $tx_1$ and $tx_2$, we say that $tx_1$ and $tx_2$ are \emph{conflicting} if any of the following holds:
        ($i$) $\readset{tx_1} \cap \writeset{tx_2} \neq \emptyset$ (read-write conflict),
        ($ii$) $\writeset{tx_1} \cap \readset{tx_2} \neq \emptyset$ (write-read conflict),
        ($iii$) $\writeset{tx_1} \cap \writeset{tx_2} \neq \emptyset$ (write-write conflict).
        We denote $tx_1 \conflicts tx_2$ if and only if $tx_1$ and $tx_2$ are conflicting.
    \end{definition}

    \begin{observation}
        Concurrent executions of a set of non-conflicting transactions always end with the same results.
    \end{observation}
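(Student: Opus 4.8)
The plan is to reduce any concurrent execution to a fixed canonical serial execution and to argue that equivalence is preserved at every step. I would begin by unfolding the definition of non-conflicting: since the transactions are pairwise non-conflicting, for any two distinct transactions $tx_i, tx_j$ in the set we have $\readset{tx_i} \cap \writeset{tx_j} = \emptyset$, $\writeset{tx_i} \cap \readset{tx_j} = \emptyset$, and $\writeset{tx_i} \cap \writeset{tx_j} = \emptyset$. Two immediate consequences are that the write-sets are pairwise disjoint, so every object of the global state is written by at most one transaction, and that no transaction reads any object written by another.

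From these disjointness properties I would derive that, fixing any initial global state, each transaction reads exactly the initial values of the objects in its read-set, regardless of the interleaving: the only way its observed values could differ across executions is if some other transaction wrote to an object in its read-set, which the read--write and write--read disjointness forbid. Here the requirement emphasized earlier---that $\readset{tx}$ and $\writeset{tx}$ over-approximate the objects touched in \emph{any} execution---is essential, since it guarantees this holds on every execution path and not merely on the one observed. By the determinism assumed in \autoref{sec:smr}, each transaction therefore produces the same result and the same writes in every execution. Because the write-sets are disjoint, the final state is obtained by applying each transaction's (uniquely determined) writes on top of the initial state, independently of order; hence all executions end with the same state and the same results.

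The step I expect to be the main obstacle is that a \emph{concurrent} execution may interleave the individual operations of different transactions arbitrarily, so it is not enough to permute whole transactions. I would handle this with an adjacent-transposition (commutativity) argument: any two consecutive operations belonging to different transactions act on disjoint objects---which is exactly what the three disjointness conditions rule out as a conflict---so swapping them leaves the state and every returned value unchanged. A finite number of such swaps transforms an arbitrary interleaving into the serial execution that runs the transactions in a fixed canonical order, and since each swap preserves equivalence, every concurrent execution is equivalent to this canonical one and therefore to every other. Care must be taken to verify that a read and a write on two different objects, and two writes on different objects, genuinely commute, and that the bubble-sort-style reordering terminates; both are routine given the disjointness established above.
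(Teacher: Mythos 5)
Your proof is correct. Note, however, that the paper states this as an \emph{observation} and supplies no proof at all---it is treated as folklore from database theory (non-conflicting transactions commute), so there is no ``paper approach'' to compare against. Your argument is the standard one that justifies this folklore: unfold the three disjointness conditions, observe that no transaction's reads can be affected by another's writes, invoke the determinism of transaction execution assumed in \autoref{sec:smr}, and then discharge the arbitrary interleaving of individual operations by an adjacent-transposition argument that reduces any interleaving to a canonical serial order. The only points worth tightening are minor: your claim that each transaction reads ``exactly the initial values'' should be qualified for objects a transaction both writes and re-reads itself (this is internal to the transaction and covered by determinism, so nothing breaks); and the commutation case analysis should explicitly include two reads of the \emph{same} object, since read-read overlap is not excluded by the conflict definition but is trivially commutative. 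Your emphasis on the read/write sets being over-approximations valid in every execution is well placed---that is exactly the role the paper assigns to that assumption.
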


    \section{The Performance Vulnerability}
    \label{sec:vulnerability}
    \autoref{fig:gap:min}
    shows an example of a potential block of transactions, in which transaction $Tx_i$ accesses objects $x_i$ and $x_{i+1}$.
    Clearly, in this case, every pair of transactions $Tx_i$ and $Tx_{i+1}$ conflict.
    When using the total ordering induced by their indexes, $Tx_i$ must be executed before $Tx_{i+1}$; this leads to a fully sequential execution.

   On the other hand, it is easy to verify that it is possible to split the set of transactions into two subsets: all even numbered transactions vs. all odd transactions.
   We can then execute all even transactions concurrently followed by a concurrent execution of all odd transactions (assuming enough cores are available).
   That is, we found a schedule that can execute all transactions in two \emph{phases} (or steps).

    A natural question is how typical the example of \autoref{fig:gap} is.
    Here, we separate between what a malicious block creator can do and what happens to an honest but uninformed block creator, whenever total ordering is used to turn conflicts into~dependencies.
    To create maximal damage, a malicious attacker can look for the longest conflict chain, or path, in the conflict graph, and then ensure that all transactions belonging to this path appear consecutively in the block total order.

    Yet, even when an honest, but uninformed, block creator places transactions in an arbitrary order, this could lead to significant performance loss, when the intra-block total order is preserved.
    At the theoretical level, in~\appref{bound-proof}, we show the following (the proof is long, technical, and requires its own non-trivial formal machinery):
    Assuming the length of the longest conflict chain is non-negligible, there is a non-negligible chance that a random total ordering (chosen by an honest node) of these transactions would imply significantly sub-optimal concurrency level.
    We also report in \appref{attack} on an empirical evaluation that supports this theoretical result.
    Altogether, this motivates exploring how we can obtain optimal parallelization using other methods. 

    In more details, as we show in this work, the existence of a simple path of $k$ vertices in the conflict graph corresponds to total orders that imply a minimal latency of at least $k$ (for homogeneous blocks).
    We further prove in \autoref{chap:homo} below that the chromatic number of a conflict graph is the lowest latency attainable for the respective set of transactions and~conflicts.

    The combination of these two facts is used to evaluate the ratio between the worst latency and the best latency for a given block.
    That is, assuming $\ell$ is a lower bound for the longest simple path in a given conflict graph and $\chi$ is an upper bound for its chromatic number, we can use $\frac{\ell+1}{\chi}$ as a lower bound on the ratio between the worst and best possible latencies.

    In the experiment we report in \appref{attack}, we randomly draw a large number of $G(n,p)$ graphs, where $n$ is the number of nodes (or transactions) and $p$ is the probability of having an edge between any two nodes (or a conflict between any two transactions).
    For each such graph, we compute the respective $\ell$ and $\chi$ and compute average $\frac{\ell+1}{\chi}$ values for each pair of $n$ and $p$ values.
    For the vast majority of cases, this yields a non-negligible slowdown.
    \clearpage

    \section{Generic Block \glsfmtshort{ASMR} Framework}
    \label{chap:deterministic-scheduling}

    We introduce a generic framework, called \glsxtrshort{scheduler} Framework, for executing blocks of transactions concurrently in a sequentially deterministic manner and for reasoning about their correctness.
    Particularly, we prove conditions that \glsxtrshort{scheduler}[s] must meet to ensure that the resulting framework is indeed deterministically serializable.
    The blocks
    are executed according to their order, i.e., the execution of transactions must start after the execution of all transactions from previous blocks ended.
    In subsequent sections, we turn our attention to specific \glsxtrshort{scheduler}[s].

    Our framework follows traditional \gls{SMR} techniques by building the transaction history using a multi-consensus protocol~\cite{Paxos}.
    We assume that the consensus protocol creates a reliable stream of blocks, each of which consists of a batch of transactions~\cite{FR97}.
    In our framework, blocks' transactions are executed using a swappable \gls{scheduler},
    that makes scheduling decisions and executes the transactions accordingly.
    The formal definition for a \gls{scheduler} is in \appref{asmr-framework}.

    \autoref{fig:main-loop-flowchart}
    details the main loop for an \gls{ASMR} using a provided \glsxtrlong{scheduler} $\A$ for executing blocks of transactions.
    \begin{figure}
    
        \centering
        \includegraphics
        [width=0.8\linewidth]
        {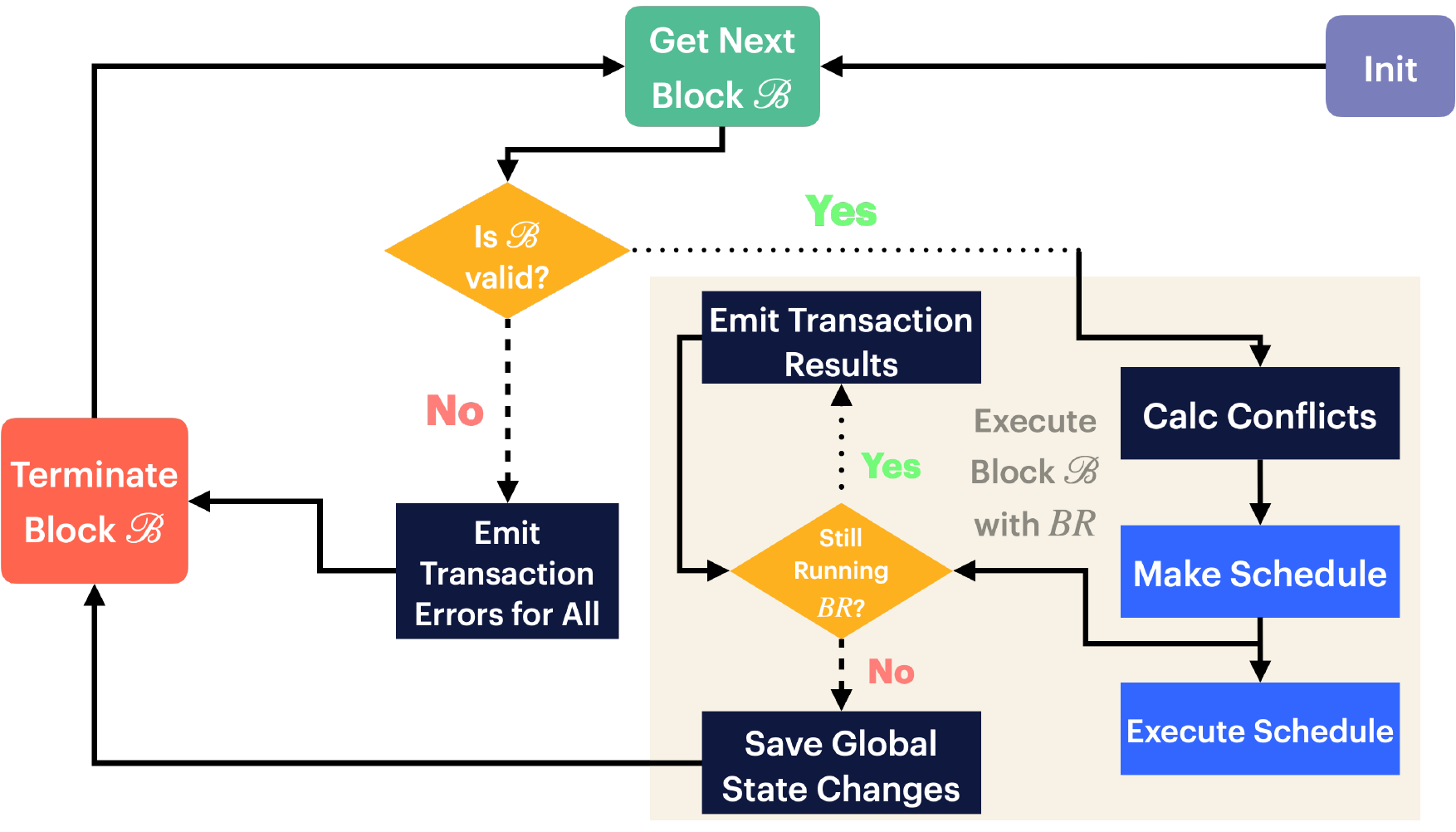}
        \caption{A Flow-Chart of the main loop. }
        \label{fig:main-loop-flowchart}
    \end{figure}
    Intuitively, the main loop of the \gls{ASMR} algorithm works as follows: The next block that was decided on by the consensus protocol is fetched and validated.
    The validation includes checking the sequence number, the hash of the previous block, and similar syntactic issues are verified.
    Then begins the processing of this specific block: the constraints are prepared for the execution of all its transactions.
    For classic total ordering semantics, this would be the process of retrieving the sequential order of transactions in the block.
    Alternatively, this could be the process of computing which transactions are in conflict with other transactions.
    Next, the \gls{scheduler} $\A$ creates a deterministic \gls{schedule}.
    Transactions are executed according to the \gls{schedule}, while the results of the finished transactions are emitted.
    Finally, all global state changes are committed, resulting with the newest version of the global state before continuing to the next~block.
     An execution of \gls{scheduler} $\A$ for some block $\Block$ is any of the executions of these steps for the specific block
    $\Block$ and given some~state.

    \begin{definition}[Sequentially Deterministic \gls{scheduler}]
        We say that a \gls{scheduler} $\A$ is \emph{deterministic} if for every block $\Block$ and for any two executions $\sigma_\Block$ and $\sigma'_\Block$ of $\A$ for block $\Block$, the equivalence $\sigma_\Block \equiv \sigma'_\Block$ is valid.
        Furthermore, $\A$ is \emph{sequentially deterministic} if also for every block $\Block$ there exists a serial execution $\tau_\Block$ such that for every execution $\sigma_\Block$ of $\A$ for $\Block$, $\sigma_\Block \equiv \tau_\Block$~holds.
    \end{definition}
    In \appref{asmr-framework}, we prove that this \gls{ASMR} framework is strictly serializable if and only if the \gls{scheduler} $\A$ is sequentially deterministic.
    In the following sections, we consider specific \gls{scheduler}[s], and prove that they are in fact sequentially deterministic.

    \clearpage
    \section{Graph Scheduling}\label{sec:graphing}

    \subsection{Graph Schedules} \label{sec:schedule}

    \begin{figure}
        \begin{subfigure}{0.3\linewidth}
            \centering
            \includegraphics[width=0.95\linewidth]{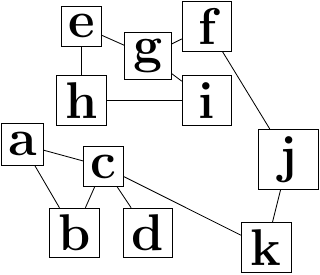}
            \caption{Conflicts Graph}
            \label{fig:conflict-graph-example-def}
        \end{subfigure}
        \hfill
        \begin{subfigure}{0.3\linewidth}
            \centering
            \includegraphics[width=\linewidth]{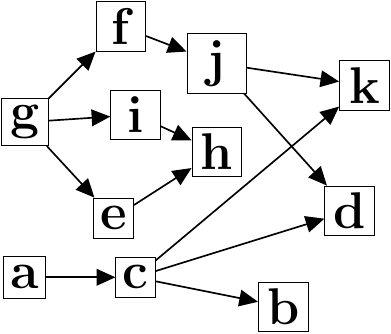}
            \caption{valid schedule
            }
        \end{subfigure}
        \hfill
        \begin{subfigure}{0.3\linewidth}
            \centering
            \includegraphics[width=\linewidth]{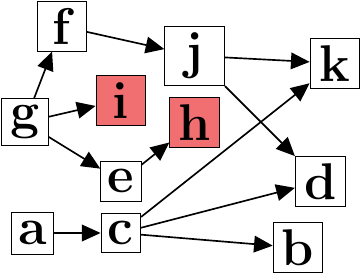}
            \caption{invalid schedule
            }
        \end{subfigure}
        \caption{Conflicts Graph and Schedulers.}
    \end{figure}

    In a similar fashion to previous sections, we assume that for each block $\Block$, $\T$ represents the set of transactions and that conflicts are known for every pair of transactions in $\T$, i.e., the relation $\conflicts$ is known\footnotemark{}.
    \footnotetext{
        Conflicts can be computed directly from the read-sets and write-sets.
        As noted in previous sections, we assume that the read-sets and write-sets are known due to explicit annotations, static analysis, or speculative execution.
    }
    Using the conflict relation of the block $\Block=\left(\T, \conflicts\right)$, we construct its \emph{conflict graph} $\confGraph{\Block} \triangleq \left(\T, \conflicts\right)$, an undirected graph, such that its vertices are the transactions in $\T$ and its undirected edges represent the conflict relation~$\conflicts$.

    \begin{definition}[\Glsxtrshort{schedule} and \Glsxtrshort{scheduling graph}]
        \label{def:schedule}
        Given a subset $\Schd \subseteq \T \times \T$, let $G$ be the directed graph $G = \left(\T, \Schd\right)$.
        If $G$ is a \gls{DAG}, we denote $\SchG{\Schd} \triangleq G$; and say that $\GSchd$ is the \emph{\gls{scheduling graph}} induced by $\Schd$ and that $\Schd$ is the corresponding \emph{graph-schedule (\gls{schedule})}.
    \end{definition}

    \begin{definition}[Valid \Glsxtrshort{schedule}]
        A given \gls{schedule} $\Schd$ is said to be \emph{valid} if the corresponding $\GSchd$ satisfies the following property:
    For all pairs of transactions $tx, tx' \in \T$ such that $tx \conflicts tx'$,  $\GSchd$ contains a directed path either from $tx$ to $tx'$ or vice versa.
    \end{definition}

Hereafter, we only refer to valid schedules unless specifically noted otherwise and drop the word ``valid''.

    \subsection{Scheduling-Graph Oriented \glsfmtshort{scheduler}} \label{sec:graphed-schdlr}

    In this section, we present a \gls{scheduler} that is based directly on the conflict graph.
    It requires explicit synchronization, but only between conflicting transactions.
    The type of synchronization required can be implemented, for example, using an efficient Go-like signaling mechanism \cite{golang}, as elaborated below.
    We call this \gls{scheduler} the \emph{Scheduling Graph \gls{scheduler}}, presented in \autoref{alg:graphschdr}, and denote it by~$\GRAPHA$.
    The method used to create new schedules is intentionally not implemented, since $\GRAPHA$ is presented as a generic \gls{scheduler} for all valid graph schedules.
    The correctness of the \emph{Graph \gls{scheduler}} relies almost entirely on the validity of the graph schedule, as it is used to prevent conflicting transactions from running concurrently.
    In \autoref{sec:greedy} and in subsequent sections, we discuss particular techniques for creating graph schedules that can be seen as specific extensions of $\GRAPHA$.

    \begin{algorithm}[]
        \caption{\Gls{scheduling graph} \gls{scheduler} $\GRAPHA$}
        \label{alg:graphschdretc}
        \footnotesize
    \begin{algorithmic}[1]
        
       \Function{CreateExecutionDS}{schedule $\Schd$, global-state s}
            \State \textbf{add} dummy transactions $start, end \notin \T$ and the edges \Statex
                \begingroup \hspace{2em} \tiny
                $\tiny\left\{\left(start, tx\right) \mid \indeg[\GSchd]{tx} = 0\right\}$ $ \cup$ $\tiny\left\{\left(tx, start\right) \mid \outdeg[\GSchd]{tx} = 0\right\}$
                \endgroup \hspace{2em}
            \ForAll{$\left(tx_i, tx_j\right) \in \Schd$}
                \State \textbf{create} new signal primitive $\texttt{signal}\!\left(tx_i,tx_j\right)$ with locked state
            \EndFor
        \EndFunctionExplicit
        \Function{$\GRAPHA\!::$init-execution}{schedule $\Schd$, global-state s}
            \State \texttt{signals}, $start$, $end$ $\gets$ \Call{CreateExecutionDS}{$\Schd$, s}
            \ForAll{$tx\in\T$}
                \Spawn[($tx$)]
                    \State \textbf{wait until all} $\texttt{signal}\!\left(*, tx\right)$ are unlocked \label{alg:graphed:execute-schedule:wait-locked}
                    \vspace{0.5em}
                    \State \textbf{serve} $\readset{tx}$ by taking the latest version of each object 
                    \State \textbf{execute} $tx$
                    \State \textbf{store} new values of $\writeset{tx}$ \vspace{0.5em}
                    \State \textbf{emit} transaction results for $tx$
                    \State \textbf{unlock all} $\texttt{signal}\!\left(tx, *\right)$ \label{alg:graphed:execute-schedule:unlock-signals}
                    \EndSpawn
            \EndFor 

            \State \Return execution $\gets$ (\texttt{signals}, $start$, $end$)
        \EndFunctionExplicit
    \algstore{alg4}
    \end{algorithmic}
        \label{alg:graphschdr}
        \label{alg:graphed:execute-schedule}
    \begin{algorithmic}[1]
        \algrestore{alg4}
        \Procedure{$\GRAPHA\!::$start-execution}{execution e}
            \State \textbf{unlock all} $\texttt{signal}\!\left(start, *\right)$
        \EndProcedure

        \Function{$\GRAPHA\!::$is-execution-running}{execution e}
            \If{all $\texttt{signal}\!\left(*, end\right)$ are unlocked} \Return true
            \Else \ \Return false
            \EndIf
        \EndFunction

        \Function{$\GRAPHA\!::$next-execution-results}{execution e}
            \State \Return transaction results that have been emitted
        \EndFunction

        \Function{$\GRAPHA\!::$state-changes}{execution e}
            \State \Return state changes 
        \EndFunction
    \end{algorithmic}
    \end{algorithm}

    \begin{theorem} \label{thm:graph-sequentially-deterministic}
    \Gls{scheduler} $\GRAPHA$ is sequentially deterministic, assuming only valid schedules are used as input.
    \end{theorem}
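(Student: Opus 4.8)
The plan is to prove the stronger sequential-determinism property directly, as it subsumes plain determinism: once I exhibit one serial execution $\tau_\Block$ with $\sigma_\Block \equiv \tau_\Block$ for \emph{every} execution $\sigma_\Block$ of $\GRAPHA$ on $\Block$, transitivity of $\equiv$ yields $\sigma_\Block \equiv \sigma'_\Block$ for any two executions. The construction rests on two ingredients: (i) the signalling discipline of $\GRAPHA$ forces every execution to respect the reachability order of the scheduling graph $\GSchd$, and (ii) validity of $\Schd$ ensures this reachability order orients every conflicting pair in $\conflicts$.

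First I would prove a precedence lemma: in any execution, for every edge $(tx_i, tx_j) \in \Schd$ the step in which $tx_i$ stores $\writeset{tx_i}$ strictly precedes the step in which $tx_j$ serves $\readset{tx_j}$. This is immediate from the code of $\GRAPHA$, since $\texttt{signal}(tx_i, tx_j)$ is created locked, is unlocked only after $tx_i$ emits its results, and $tx_j$ blocks on every incoming signal before touching any object. Taking the transitive closure, whenever $\GSchd$ has a directed path from $tx$ to $tx'$, all of $tx$'s accesses precede all of $tx'$'s. Because $\GSchd$ is a \gls{DAG} by \autoref{def:schedule}, these dependencies are acyclic, so unlocking $\texttt{signal}(start,*)$ releases the sources, by induction on the topological order each transaction eventually receives all its incoming signals, and the execution terminates without deadlock. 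Fixing any topological ordering of $\GSchd$ and running the transactions serially in that order then defines the witness $\tau_\Block$.

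Next I invoke validity: for conflicting $tx \conflicts tx'$, $\GSchd$ contains a directed path in one direction, so by the precedence lemma the two never overlap and the earlier writer's effects are installed before the later reader acts. Hence in every execution $\sigma_\Block$ each conflicting pair is serialized in the direction fixed by $\GSchd$, so the precedence graph of $\sigma_\Block$ --- the conflict graph with each edge oriented by the actual order of the conflicting accesses --- coincides with $\conflicts$ oriented by $\GSchd$'s reachability and is therefore acyclic. By the classical serializability theorem, $\sigma_\Block$ is then equivalent to the serial schedule given by any topological sort of this precedence graph. Since $\tau_\Block$ topologically sorts $\GSchd$ it respects every precedence edge and is one such sort; any two such sorts differ only by transpositions of adjacent non-conflicting transactions, each of which preserves the final state and emitted results by the Observation that non-conflicting transactions commute. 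Therefore $\sigma_\Block \equiv \tau_\Block$ on every initial state.

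The main obstacle is the soundness of that commutation step. The Observation licenses swapping only \emph{genuinely} non-conflicting transactions, yet $\conflicts$ is defined through the \emph{declared} sets $\readset{\cdot}$ and $\writeset{\cdot}$, which are over-approximations of what is actually touched. The argument therefore hinges on the requirement stressed in \autoref{sec:def-rwsets-conflicts}: these sets must cover every object accessed in \emph{any} possible run. I must make this explicit --- proving that if two transactions are non-conflicting under their declared sets then on \emph{no} initial state does one read an object the other writes --- so that the Observation applies uniformly across all initial states and the ``serve $\readset{tx}$ by the latest version'' step delivers to each transaction exactly the state produced by its reachability predecessors. Relative to this semantic core, the graph-theoretic scaffolding (acyclicity, topological order, deadlock-freedom) is comparatively routine.
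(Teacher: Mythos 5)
Your proposal is correct and follows essentially the same route as the paper's proof: both arguments rest on the signal discipline forcing precedence along every directed path of $\GSchd$, validity orienting every conflicting pair, and conflict-equivalence of each execution to a topological sort of $\GSchd$. The only difference is organizational --- you fix a single canonical witness $\tau_\Block$ and obtain determinism by transitivity, whereas the paper proves serializability and determinism as two separate parts and combines them; your explicit attention to the over-approximated read/write sets is exactly the requirement the paper states in \autoref{sec:def-rwsets-conflicts}.
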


    \begin{proof}
        We prove this by showing that every execution of $\GRAPHA$ is serializable, and that it is deterministic.
        Fix some block $\Block$ for which we will show the necessary properties.

            {(\texttt{serializable})} We start by proving that each execution of $\GRAPHA$ for the given block $\Block$ is \gls{serializable}.
        Consider some execution $\sigma$ of $\GRAPHA$ for $\Block$.
        Let $\GSchd$ be the \gls{scheduling graph} of the schedule for block $\Block$ induced by the \emph{valid} \gls{schedule} $\Schd$.
        Let $\tau$ be a serial execution obtained by topologically sorting the transactions in the \gls{DAG} $\GSchd$.
        Given two transactions $tx \conflicts tx'$, we know that by definition \gls{schedule} $\Schd$ contains a path between them. \gls{wlog} $tx \longhashpath_{\Schd} tx'$ with the path $tx=tx_0,tx_1, \ldots, tx_n=tx'$.
        Consider now how $\GRAPHA$ executes given $\Schd$.
        For every $0 < i \leq n$ the thread for $tx_i$ waits until all signals $\texttt{signal}\!\left(*, tx_i\right)$ are unlocked before it starts executing (line~\ref{alg:graphed:execute-schedule:wait-locked}).
        In particular, $tx_i$ must wait until the signal $\texttt{signal}\!\left(tx_{i-1}, tx_i\right)$ is unlocked by the thread of $tx_{i-1}$.
        This happens when the thread of $tx_{i-1}$ reaches line~\ref{alg:graphed:execute-schedule:unlock-signals}, that is, only after $tx_{i-1}$ finishes executing.
        We can apply this to each of the edges of the path and deduce that, in $\sigma$, $tx=tx_0$ must finish executing before $tx'=tx_n$ starts.
        
        Also, $tx$ must be ordered before $tx'$ in $\tau$, since any valid topological order must respect the transitive dependency between these two transactions.
        Therefore, all conflicts in $\Block$ are ordered in the same way in both $\tau$ and $\sigma$.
        Moreover, for any two non-conflicting transactions, the values read and written by such transactions are independent of their relative order, and in particular, are the same in $\tau$ and in $\sigma$.
        In summary, $\tau$ is a valid sequential execution that is conflict-equivalent to $\sigma$, and thus $\sigma$ is \gls{serializable}~\cite{DB-textbook}.

            {(\texttt{deterministic})} Next, we show that $\GRAPHA$ is deterministic.
        Consider two executions $\sigma$ and $\sigma'$ of $\GRAPHA$ for $\Block$.
        Since we assume that the creation of the schedule is deterministic, both instances of \autoref{alg:graphschdr} use the same \gls{scheduling graph}.
        Also, for arguments identical to those above, there exist two serial \gls{schedule}[s] $\tau \equiv \sigma$ and $\tau' \equiv \sigma'$ that were obtained by topologically sorting $\GSchd$.

        Now, we know that two arbitrary conflicting transactions are ordered in the same way in both $\tau$ and $\tau'$. This is guaranteed by definition because they are both topological sorts of the same \gls{scheduling graph}.
        Therefore, $\tau \equiv \tau'$ since they are conflict equivalent; thus $\sigma \equiv \tau \equiv \tau' \equiv \sigma'$.
        In summary, we have $\sigma \equiv \sigma'$ as needed.
    \end{proof}
    \subsection{Latency} \label{sec:graph-latency}
    
    We assume that $\T$ is accompanied by a mapping function $\len:\T \longrightarrow \symb{N+}$ that assigns an \emph{execution duration}, or \emph{length}, for each transaction.
    The length of each transaction need not be the exact execution time, but rather an abstract sense of time that allows for distinguishing how much longer one takes compared to the other.

    \begin{definition}[Latency]
        The latency of a \gls{schedule} $\Schd$ is the weight\-ed depth of $\GSchd$, that is, the maximum weighted length\footnote{When the path length is measured by vertices rather than~edges.} of any simple path in $\GSchd$, denoted:
         \begin{equation}
         \footnotesize
            \Lt[\len]{\Schd} \triangleq
            \Depth[\len]{\GSchd} =
            \max \left\{ \lenOf{P}  \mid P \text{ simple path in }\Schd \right\}.
        \end{equation}
        When the weighted length of a path $P$ is the sum of the length of its vertices, $\lenOf{P} \triangleq \sum_{v \in P} \lenOf{v}$.
    \end{definition}

    \autoref{thm:latency-preservation} explains why, in a sense, the latency of a \gls{schedule} is the time it takes to execute it.
    This explains the motivation behind the above definitions.
    
    \begin{theorem}[Latency Preservation of $\GRAPHA$]
        \label{thm:latency-preservation}
        Let $\Schd$ be an optional \gls{schedule} of $\Block$.
        Assume that \autoref{alg:graphed:execute-schedule} is executed in an environment such that:
        \begin{enumerate*}
            [label={\smaller(}\roman*{\smaller)}]
            \item the number of processors is unbounded,
            \item the delay for passing synchronization signals is negligible,
            \item the cost of executing an operation on a global object has no additional synchronization overhead compared to a single-threaded implementation, and
            \item $\lenOf{tx}$ is the time duration it takes to execute the transaction $tx$.
        \end{enumerate*}
        Then, the time it takes to execute \autoref{alg:graphed:execute-schedule} for a given \gls{schedule} $\Schd$ and the block of $\Block$ is exactly $\Lt[\ell]{\Schd}$.
    \end{theorem}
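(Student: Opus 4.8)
The plan is to reduce the asynchronous, signal-driven execution to a purely combinatorial quantity on $\GSchd$ by tracking, for each transaction, the exact wall-clock time at which it completes. I would introduce a completion-time function $C(\cdot)$ on the vertices of $\GSchd$, where $C(tx)$ denotes the instant at which the thread for $tx$ reaches line~\ref{alg:graphed:execute-schedule:unlock-signals} and finishes. Under the idealized assumptions, the thread for $tx$ is never blocked waiting for a free processor (processors are unbounded) and a signal becomes visible the instant it is unlocked (negligible delay), so the only thing $tx$ waits for is the completion of every predecessor in $\GSchd$. I would then argue that the algorithm realizes exactly the recurrence $C(tx) = \lenOf{tx} + \max\{\, C(tx') \mid (tx',tx)\in\Schd \,\}$, with the convention that the maximum over the empty set is $0$; the empty case is precisely a source, which is released at time $0$ when \textsc{start-execution} unlocks $\texttt{signal}(start,*)$. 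The dummy vertices $start$ and $end$ perform no execution and contribute nothing to this recurrence, and the whole run terminates once every transaction has finished, i.e., at time $\max_{tx\in\T} C(tx)$.

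The core step is to solve this recurrence and match it against the weighted depth. Processing the vertices of the \gls{DAG} $\GSchd$ in a topological order, a direct induction shows that $C(tx)$ equals the maximum of $\sum_{v\in P}\lenOf{v}$ over all (simple) paths $P$ of $\GSchd$ ending at $tx$. The base case is a source, whose only terminating path is the singleton of weight $\lenOf{tx}$; in the inductive step, a heaviest path ending at $tx$ is exactly a heaviest path ending at some predecessor $tx'$ with $tx$ appended, which is what the $\max$ in the recurrence selects. Taking the maximum over all vertices then yields the total running time as $\max_{tx\in\T} C(tx) = \max_P \sum_{v\in P}\lenOf{v} = \Depth[\len]{\GSchd} = \Lt[\len]{\Schd}$, the last two equalities being the definition of latency.

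The step I expect to be the main obstacle is justifying the word \emph{exactly} in the statement, i.e., establishing the recurrence for $C$ with matching upper and lower bounds rather than a single inequality. For the lower bound $C(tx)\ge \lenOf{tx}+\max_{(tx',tx)\in\Schd} C(tx')$, I would reuse the causal argument already isolated in the proof of \autoref{thm:graph-sequentially-deterministic}: the thread for $tx$ blocks on line~\ref{alg:graphed:execute-schedule:wait-locked} until \emph{all} incoming signals are unlocked, and $\texttt{signal}(tx',tx)$ is unlocked only after $tx'$ completes, so $tx$ cannot begin before the last of its predecessors finishes, after which it runs for $\lenOf{tx}$. For the matching upper bound, the two idealized assumptions are what force the start time to coincide with — rather than merely exceed — $\max_{(tx',tx)\in\Schd} C(tx')$: unbounded processors guarantee the thread is scheduled immediately, and negligible signal delay guarantees it resumes the moment its last predecessor reaches line~\ref{alg:graphed:execute-schedule:unlock-signals}, with the execution itself taking precisely $\lenOf{tx}$ by the assumption that lengths are true execution durations. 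Combining the two bounds gives the exact recurrence, and with it the exact running time; the only remaining bookkeeping is to confirm that $start$ and $end$ inject no delay (released at time $0$ and acting as a pure collector, respectively), so that the wall-clock termination time is genuinely $\max_{tx\in\T} C(tx)$.
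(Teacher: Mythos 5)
Your proposal is correct and follows essentially the same approach as the paper: the paper's proof simply asserts that since no thread waits except for its incoming signals and processors are unbounded, the running time equals the weighted depth of $\GSchd$, which is $\Lt[\ell]{\Schd}$ by definition. Your completion-time recurrence and the induction along a topological order are a rigorous elaboration of exactly that critical-path observation, filling in the upper/lower-bound details the paper leaves implicit.
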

    \begin{proof}
        Since the number of processors is unbounded, each transaction may be executed without interruptions as soon as it receives all of its signals.
        No thread of any transaction waits unless it has not received one of the signals it waits on.
        Thus, the total execution time is the same as the weighted depth of $\GSchd$, which is equal to $\Lt[\ell]{\Schd}$ by definition.
    \end{proof}
    \subsection{Level-Induced Graph Scheduling} \label{sec:greedy}
    We now present an extension to $\GRAPHA$ \ \gls{scheduler} that creates a schedule from any complete partition of the transactions to \gls{conflict-free}.
    We call these partitions \emph{legal}.
    We call it the \emph{Level(-induced) Graph \gls{scheduler}} (\autoref{alg:greedy-schedule}) and denote it by $\GREEDYA$.
    \autoref{alg:greedy-schedule} describes the new method of creating valid schedules using \gls{Greedy Schedule}.
    \begin{algorithm}
        \footnotesize
\begin{algorithmic}[1]

        \Function{LevelSchedule}{Ordered Partition $\bigsqcup_{i=1}^{k-1} B_i = \T$}
            \Statex \hspace{1em}\ \ \textbf{Require:} $B_i$ is \glsxtrshort{conflict-free} i.e. $\forall tx,tx'\in B_i \: : \: tx \notconflicts tx'$
            \State $\GSchd = (T, \Schd) \gets \left(\emptyset, \emptyset\right)$
            \State $B_0 \gets \emptyset$
            \For{$B_i = B_1,\ldots, B_k$ ($i$ increases)}
                \State $T \gets T \cup B_i$
                \For{$B_j = B_{i-1},\ldots, B_0$ ($j$ decreases)}
                    \State $E \gets \left\{\left(tx_j,tx_i\right) \in B_j \times B_i \mid  tx_j \conflicts tx_i\right\}$ \label{alg:greedy-schedule:iter_start} \label{alg:greedy-schedule:confpairs}
                    \State $P \gets \left\{\left(tx,tx'\right) \in \T \times \T \mid  \GSchd \text{ contains a path } tx \rightsquiglearrow tx'\right\}$ \label{line:greedy-check}
                    \State $\Schd \gets \Schd \cup \left(E \setminus P\right)$ \label{alg:greedy-schedule:iter_end}
                \EndFor
            \EndFor
            \State \Return $\GSchd$
        \EndFunctionExplicit
        \algstore{alg3}
            \end{algorithmic}
        \caption{Level Graph \gls{scheduler}, $\GREEDYA$}
        \label{alg:greedy-schedule} \label{alg:graphed:make-schedule}
    \begin{algorithmic}[1]
    \algrestore{alg3}
    \Function{$\GREEDYA\!::$make-schedule}{Block $\Block=\left(\T, \conflicts\right)$}
        \State $\bigsqcup_{i=1}^{k-1} B_i \gets$ Obtain a deterministic partition of $\T$ with $\confGraph{\Block}$
        \Statex \hspace{1em}\ \ \textbf{Require:} $B_i$ is \glsxtrshort{conflict-free}, i.e., $\forall tx,tx'\in B_i \: : \: tx \notconflicts tx'$
        \State order $B_1, \ldots, B_k$ in some deterministic way \label{alg:greedy:make-schedule:ordered-part}
        \State $\Schd \gets \Call{LevelSchedule}{\left[B_1, \ldots, B_k\right]}$
        \State \Return $\Schd$
    \EndFunctionExplicit
\end{algorithmic}
    \end{algorithm}
    \begin{lemma} \label{lemma:valid-greedy-schedule}
    When given a legal partition, the \Glsxtrshort{Greedy Schedule} in \autoref{alg:greedy-schedule} returns a valid \gls{schedule}.
    \end{lemma}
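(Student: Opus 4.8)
The plan is to check the two requirements separately: first that $\GSchd = \left(\T, \Schd\right)$ is a \gls{DAG} (so that $\Schd$ is a \gls{schedule} at all, per \autoref{def:schedule}), and then that every conflicting pair is joined by a directed path (validity).

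For acyclicity, I would observe that every edge placed into $\Schd$ on line~\ref{alg:greedy-schedule:iter_end} comes from the set $E \subseteq B_j \times B_i$ computed on line~\ref{alg:greedy-schedule:confpairs}, where the inner loop ranges only over $j = i-1, \ldots, 0$ (and $B_0 = \emptyset$). Hence every edge points from a vertex of a lower-indexed block to a vertex of a strictly higher-indexed block. Assigning to each transaction the index of the block containing it therefore yields a value that strictly increases along every edge, and thus strictly increases along any directed path; no vertex can repeat, so $\GSchd$ has no directed cycle. This simultaneously shows that $\Schd$ is a valid \gls{schedule} and that the block order $B_1, \ldots, B_k$ is a topological order of $\GSchd$.

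For validity, I would fix an arbitrary conflicting pair $tx \conflicts tx'$. Since the partition is legal, each $B_i$ is conflict-free, so $tx$ and $tx'$ lie in two \emph{distinct} blocks; \gls{wlog} $tx \in B_j$ and $tx' \in B_i$ with $j < i$. The key is to examine the moment the outer loop is at $B_i$ and the inner loop is at $B_j$: at that point $(tx, tx') \in B_j \times B_i$ with $tx \conflicts tx'$, so $(tx, tx') \in E$. There are two cases, according to the set $P$ computed on line~\ref{line:greedy-check}. If $(tx, tx') \in P$, then $\GSchd$ already contains a directed path $tx \rightsquiglearrow tx'$. Otherwise $(tx, tx') \in E \setminus P$ is added on line~\ref{alg:greedy-schedule:iter_end}, creating the direct edge $tx \rightsquiglearrow tx'$. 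In both cases a directed path from $tx$ to $tx'$ exists once this iteration completes.

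The argument is closed by the invariant that $\Schd$ only ever grows — edges are added and never removed — so any directed path present at some step persists until termination; combined with the fact that each cross-block conflicting pair is examined exactly once (each $B_i$ is visited once by the outer loop, and for that $B_i$ each earlier $B_j$ once by the inner loop), this gives a directed path for every conflicting pair in the final $\GSchd$. I expect the main obstacle to be not any deep combinatorics but the careful bookkeeping of this persistence-of-paths invariant together with direction consistency — ensuring the pair is always oriented from the lower-indexed to the higher-indexed block, so that the orientation of the added edge matches the direction of the required path.
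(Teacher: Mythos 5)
Your proof is correct and follows essentially the same route as the paper's: identify the single iteration of the inner loop at which a conflicting cross-block pair $(tx_j, tx_i)$ with $j<i$ is examined, and note that either a path already exists (the set $P$) or a direct edge is added. You additionally verify acyclicity via the block-index ordering and the persistence of paths under edge insertion — details the paper's own proof leaves implicit — but this is added rigor, not a different argument.
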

    \begin{proof}
        Let two transactions be such that $tx_i \conflicts tx_j$. \Gls{wlog} assume that $tx_i \in B_i \neq B_j \ni tx_j$.
        Since $tx_i \conflicts tx_j$, some iteration of \hyperref[alg:greedy-schedule:iter_start]{lines~\ref*{alg:greedy-schedule:iter_start}}~--~\ref{alg:greedy-schedule:iter_end} checks if $\Schd$ already contains a path between them and otherwise adds a direct edge between them (line~\ref{alg:greedy-schedule:iter_end}).
        Moreover, for two $tx, tx' \in B_i$ we know that $tx \notconflicts tx'$ because $B_i$ is \gls{conflict-free}.
        Thus, for any two transactions $tx_i \conflicts tx_j$ the \gls{schedule} contains a directed path between them, so the \gls{schedule} is valid.
    \end{proof}
    \subsubsection{Completeness of Level-induced Scheduling}
    We use the term \emph{level schedule} to refer to schedules created from some legal partition using the \gls{Greedy Schedule}, and coin \emph{level(-induced) graph scheduling} for the process of creating such schedules.
    In this section, we show that Level Graph Scheduling is complete; that is, there exists an equivalent level schedule for any valid graph schedule that may be executed by \autoref{alg:graphed:execute-schedule} having a latency that is no worse.
    This is the conceptual sense of \autoref{thm:greedy-Completeness} and its proof.

    \begin{lemma}
        For some arbitrary block of transactions $\Block=\left(\T, \conflicts\right)$, denote two arbitrary valid schedules $\Schd$ and $\Schd'$.
        If (\ref{eq:sched-conf-equiv}) below holds, then all executions of \autoref{alg:graphed:execute-schedule} for both $\Schd$ and $\Schd'$ are equivalent.
        \begin{equation} \label{eq:sched-conf-equiv}
        \small 
            \forall \;\; tx_a \conflicts tx_b \ (tx_a, tx_b \in \T) \:: \:tx_a \rightsquiglearrow_\Schd tx_b \ \Longrightarrow \  tx_a \rightsquiglearrow_{\Schd'} tx_b
        \end{equation}
    \end{lemma}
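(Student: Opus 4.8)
The plan is to reduce the claim to the standard database-theoretic fact that two serial executions are equivalent whenever they order every pair of conflicting transactions in the same way, and then to show that the hypothesis~(\ref{eq:sched-conf-equiv}) forces exactly this agreement between $\Schd$ and $\Schd'$. First I would fix an arbitrary execution $\sigma$ of \autoref{alg:graphed:execute-schedule} for $\Schd$ and an arbitrary execution $\sigma'$ for $\Schd'$. Reusing the serializability argument from the proof of \autoref{thm:graph-sequentially-deterministic}, each of these is conflict-equivalent to a serial execution obtained by topologically sorting the corresponding \gls{scheduling graph}: we have $\sigma \equiv \tau$ for some topological sort $\tau$ of $\GSchd$, and $\sigma' \equiv \tau'$ for some topological sort $\tau'$ of $\SchG{\Schd'}$. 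By transitivity of $\equiv$, it then suffices to prove $\tau \equiv \tau'$.

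The key combinatorial step is to show that $\tau$ and $\tau'$ order every conflicting pair identically, and I would establish this by proving the stronger statement that the \emph{direction} of every conflict is the same in $\Schd$ and $\Schd'$. Take any conflicting pair $tx_a \conflicts tx_b$. Since $\Schd$ is valid, $\GSchd$ contains a directed path in at least one direction between them; since $\GSchd$ is acyclic, it contains such a path in exactly one direction. Assume \gls{wlog} that $tx_a \rightsquiglearrow_\Schd tx_b$. The hypothesis~(\ref{eq:sched-conf-equiv}) then yields $tx_a \rightsquiglearrow_{\Schd'} tx_b$, and acyclicity of $\SchG{\Schd'}$ rules out the reverse path, so the conflict is oriented the same way in both schedules. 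Because a topological sort must respect every directed path, this common orientation is precisely the relative order of $tx_a$ and $tx_b$ in both $\tau$ and $\tau'$; hence they agree on this pair.

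Finally, since $\tau$ and $\tau'$ are serial executions that order all conflicting pairs in the same way, they are conflict-equivalent, so $\tau \equiv \tau'$; chaining the equivalences gives $\sigma \equiv \tau \equiv \tau' \equiv \sigma'$, as required. The main obstacle is the direction argument: the hypothesis is only a one-way implication, and one might worry that $\Schd'$ could additionally orient some conflicting pair in the opposite direction, breaking the correspondence. The point to get right is that \emph{validity} guarantees each conflict is oriented in \emph{at least} one direction while \emph{acyclicity} guarantees it is oriented in \emph{at most} one direction; together these upgrade the single implication in~(\ref{eq:sched-conf-equiv}) to full agreement on the orientation of every conflict, which is exactly what the equivalence of the serial executions requires.
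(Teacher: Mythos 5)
Your proof is correct and follows the same route as the paper, which only records a one-line sketch (``all executions are conflict equivalent''); you supply the details that sketch omits. In particular, your observation that validity gives at least one orientation per conflict, acyclicity gives at most one, and the hypothesis then forces the orientations in $\Schd$ and $\Schd'$ to coincide is exactly the right way to upgrade the one-way implication in~(\ref{eq:sched-conf-equiv}) to conflict equivalence of the corresponding topological sorts.
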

    \begin{proofsketch}
        All executions are conflict equivalent~\cite{DB-textbook}.
        $\square$
    \end{proofsketch}

    \begin{algorithm}
        \footnotesize
        \begin{algorithmic}[1]
            \Function{ConvertToColoring}{$\Schd$}
                \State \textbf{let} $c:V\to\symb{N+}$
                \State $S \gets \left\{v \mid \indeg[\GSchd]{v} = 0\right\}$
                \State $l \gets 0$
                \While{$\Schd \neq \emptyset$} \Comment{Must stop because $\GSchd$ is a \gls{DAG}}
                    \State $l \gets l + 1$ \Comment{After the \textbf{loop} $l$ holds the number of colors}
                    \ForAll{$v\in S$}
                        \State $c\left(v\right) \gets l$
                    \EndFor
                    \State $S \gets \left\{v \mid \exists \left(u, v\right) \in \Schd \land u \in S\right\}$
                        \Comment{Performs BFS\ on $\GSchd$}
                \EndWhile
                \State \Return $c:V\to\left\{1, \ldots, l\right\}$
            \EndFunctionExplicit
        \end{algorithmic}
        \caption{Code for the $\textproc{ConvertToColoring}$ function}
        \label{alg:converttocoloring}
    \end{algorithm}

    \Cref{thm:convert-to-partition} introduces the \textproc{ConvertToColoring} function (\autoref{alg:converttocoloring}). This is part of the theoretical machinery that we need to prove \autoref{thm:greedy-Completeness} and other theorems later~on.
        Its proof appears in \appref{convert-to-partition}.
    \begin{lemma} \label{thm:convert-to-partition}
        Given a valid schedule $\Schd$ for some block $\Block=(\T=V, \conflicts$\-$=E)$, find the partition\footnote{We calculate the partition using a function that assigns numbers, or colors, to each transaction because of the relation to graph coloring. Each number (color) represents one of the disjoint sets of the partition. The relation to graph coloring is further explored in subsequent sections.} $\sqcup_{i=1}^k T_i$ using the $\Call{ConvertToColoring}{\Schd}$ function from \autoref{alg:converttocoloring}.
        As $\Schd$ is valid, $\sqcup_{i=1}^k T_i$ is a legal partition of~$\T$ (using $k$ nonempty~sets).
    \end{lemma}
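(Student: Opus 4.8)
The plan is to identify precisely what color $\textproc{ConvertToColoring}$ assigns and then read off the two defining properties of a legal partition from that characterization. I would first observe that the set $S$ behaves as an expanding wavefront: after initializing $S$ to the sources of $\GSchd$ (the vertices with $\indeg[\GSchd]{v}=0$), each iteration replaces $S$ by the set of all successors of the current $S$. Hence a vertex $v$ lies in $S$ during iteration $l$ exactly when there is a directed walk of $l-1$ edges from some source to $v$, and because $c(v)$ is overwritten on every such visit, its final value is the largest such $l$. In a \gls{DAG} every walk is a simple path, and the longest path ending at $v$ necessarily starts at a source, so $c(v)$ equals the number of vertices on a longest directed path of $\GSchd$ ending at $v$. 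This also yields termination: the wavefront empties once $l$ exceeds the (finite) longest path length of the acyclic $\GSchd$, and every vertex is reached from some source (walk backwards along predecessors until one is hit), so every transaction receives exactly one color.

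Next I would check that $\{T_i\}_{i=1}^{k}$ with $T_i=\{v \mid c(v)=i\}$ is genuinely a partition into $k$ nonempty sets. Disjointness and coverage are immediate from $c$ being a total function into $\{1,\dots,k\}$, where $k$ is the final value of the counter. For nonemptiness of every class, fix a globally longest path $x_1\rightsquiglearrow\cdots\rightsquiglearrow x_k$; I claim $c(x_i)=i$ for each $i$. The prefix $x_1,\dots,x_i$ witnesses $c(x_i)\ge i$, and if some path ending at $x_i$ had more than $i$ vertices, appending the suffix $x_i\to x_{i+1}\to\cdots\to x_k$ — which is vertex-disjoint from that path by acyclicity — would yield a path of more than $k$ vertices, contradicting maximality. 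Hence every color in $\{1,\dots,k\}$ is realized.

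The crux is conflict-freeness of each class, and the single fact that drives it is edge-monotonicity: for every edge $(u,v)\in\Schd$ we have $c(v)\ge c(u)+1$. Indeed, a longest path ending at $u$ has $c(u)$ vertices and cannot already contain $v$ (otherwise $v\rightsquiglearrow u\to v$ would be a cycle), so extending it by the edge $(u,v)$ produces a simple path of $c(u)+1$ vertices ending at $v$. Inducting along any directed path $tx=y_0\to\cdots\to y_m=tx'$ then gives $c(tx')\ge c(tx)+m>c(tx)$ whenever $m\ge 1$. Now take any conflicting pair $tx\conflicts tx'$: since $\Schd$ is valid, $\GSchd$ contains a directed path between them, say $tx\rightsquiglearrow_{\Schd} tx'$, whence $c(tx)\ne c(tx')$ and the two lie in different classes. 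Thus no class contains a conflicting pair, so $\sqcup_{i=1}^{k}T_i$ is legal.

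The main obstacle is not any single step but getting the characterization of $c$ exactly right: the overwriting assignment means the wavefront computes the \emph{longest} (not the shortest) source-to-vertex distance, and one must confirm the loop guard really terminates on the acyclic structure. Once edge-monotonicity $c(v)\ge c(u)+1$ is established from acyclicity, both the conflict-freeness (via schedule validity) and the nonemptiness of all color classes follow with little additional work.
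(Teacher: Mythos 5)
Your proof is correct and takes essentially the same route as the paper's: both characterize $c(v)$ as the longest-path depth of $v$ in the \gls{DAG} $\GSchd$ and then use validity of $\Schd$ plus acyclicity to show that conflicting transactions receive strictly increasing (hence distinct) colors. You are somewhat more thorough than the paper on termination, on why concatenated paths remain simple, and on nonemptiness of all $k$ color classes, but no new idea is involved.
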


    Two graph schedulers are \emph{equivalent} if all executions of \autoref{alg:graphed:execute-schedule} are equivalent for both schedules.

    \begin{theorem}[Completeness of Level Scheduling] \label{thm:greedy-Completeness}
        For any valid schedule $\Schd$, there exists a legal partition $\bigsqcup_{i=1}^k B_i = \T$ such that $\Schd$ is equivalent to $\Schd'=\Call{LevelSchedule}{\sqcup_{i=1}^k B_i}$, and (\ref{eq:greedyschedule-lte}) holds.
        \begin{equation} \label{eq:greedyschedule-lte}
        \small 
            \Lt[\ell]{
                \smash{
                \underbrace{
                \textproc{LevelSchedule}\!\left(\sqcup_{i=1}^k B_i\right)
                }_{\Schd'}
                }\vphantom{\textproc{LevelSchedule}\!\left(\sqcup_{i=1}^k B_i\right)}
            } \vphantom{\underbrace{\textproc{LevelSchedule}\!\left(\sqcup_{i=1}^k B_i\right)}_{\Schd'}}
            \leq \Lt[\ell]{\Schd}
        \end{equation}
    \end{theorem}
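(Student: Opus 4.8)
The plan is to exhibit the legal partition explicitly, build $\Schd'$ from it, and then handle equivalence and the latency bound separately. First I would run $\Call{ConvertToColoring}{\Schd}$ (\autoref{alg:converttocoloring}) to obtain a level function $c:\T\to\symb{N+}$, set $B_i := c^{-1}(i)$ ordered by increasing $i$, and note that by \autoref{thm:convert-to-partition} the partition $\bigsqcup_{i=1}^{k} B_i$ is legal, hence an admissible input to $\textproc{LevelSchedule}$; I then define $\Schd' := \Call{LevelSchedule}{\bigsqcup_{i=1}^{k} B_i}$, which is a valid schedule by \autoref{lemma:valid-greedy-schedule}. The single fact about $c$ that I would isolate up front and reuse twice is strict monotonicity along $\Schd$: a vertex is colored $l$ exactly when it sits in the BFS frontier at step $l$, so its final color equals the number of vertices on a longest source-to-it path in $\SchG{\Schd}$; since every edge $(u,v)$ of $\SchG{\Schd}$ extends such a path, $c(v)>c(u)$, and therefore $tx\rightsquiglearrow_{\Schd}tx' \Rightarrow c(tx)<c(tx')$.

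For equivalence I would verify hypothesis (\ref{eq:sched-conf-equiv}) of the preceding conflict-equivalence lemma and invoke it. Take a conflicting pair $tx_a\conflicts tx_b$ with $tx_a\rightsquiglearrow_{\Schd}tx_b$; by monotonicity $c(tx_a)<c(tx_b)$, so $tx_a$ lies in an earlier block than $tx_b$. Every edge added by $\textproc{LevelSchedule}$ in \autoref{alg:greedy-schedule} points from a lower-indexed block to a higher-indexed one (the inner loop only emits pairs $(tx_j,tx_i)$ with $j<i$), so every directed path of $\SchG{\Schd'}$ runs from lower to higher color; combined with validity of $\Schd'$, which forces \emph{some} path between the conflicting pair, the path between $tx_a$ and $tx_b$ must be oriented $tx_a\rightsquiglearrow_{\Schd'}tx_b$. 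Thus (\ref{eq:sched-conf-equiv}) holds and the lemma yields that $\Schd$ and $\Schd'$ are equivalent.

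The main obstacle is the latency inequality (\ref{eq:greedyschedule-lte}), which I would prove by lifting a heaviest path of $\SchG{\Schd'}$ into $\SchG{\Schd}$. Let $P'=v_1\to\cdots\to v_m$ be a simple path in $\SchG{\Schd'}$ attaining $\Lt[\ell]{\Schd'}$. Each edge $v_t\to v_{t+1}$ joins a conflicting pair (all $\textproc{LevelSchedule}$ edges do) with $c(v_t)<c(v_{t+1})$, so by validity of $\Schd$ together with the contrapositive of monotonicity (a path the other way would force $c(v_{t+1})<c(v_t)$) there is a path $v_t\rightsquiglearrow_{\Schd}v_{t+1}$ in $\SchG{\Schd}$. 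Concatenating these $m-1$ paths yields a walk through $v_1,\dots,v_m$ along which $c$ strictly increases; strict monotonicity rules out repeated vertices, so the concatenation is in fact a \emph{simple} path $Q$ of $\SchG{\Schd}$ containing $\{v_1,\dots,v_m\}$. Since $\ell>0$ on every vertex, $\lenOf{Q}\ge\sum_{t=1}^{m}\lenOf{v_t}=\lenOf{P'}=\Lt[\ell]{\Schd'}$, and because $\Depth[\ell]{\SchG{\Schd}}$ is the maximum weighted length over simple paths, $\Lt[\ell]{\Schd}=\Depth[\ell]{\SchG{\Schd}}\ge\lenOf{Q}\ge\Lt[\ell]{\Schd'}$. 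The delicate point throughout is that one monotone invariant on $c$ simultaneously fixes edge orientations (for equivalence) and keeps the lifted concatenation acyclic and simple (for latency), so I would state and prove that invariant once at the outset.
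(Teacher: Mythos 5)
Your proposal is correct and follows essentially the same route as the paper's proof in \appref{greedy-main-proof}: both build the partition via $\textproc{ConvertToColoring}$, use the depth/color monotonicity of $c$ along $\Schd$ to lift each edge of $\Schd'$ to a directed path in $\Schd$ (ruling out the reverse orientation by the same contradiction on depths), and conclude the latency bound by containing a heaviest path of $\SchG{\Schd'}$ inside a simple path of $\SchG{\Schd}$. You are somewhat more explicit than the paper in two spots --- verifying hypothesis (\ref{eq:sched-conf-equiv}) for the equivalence claim and arguing that the concatenated lift is simple because $c$ strictly increases --- but these are refinements of the same argument, not a different one.
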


    For lack of space, the proof is delayed to \appref{greedy-main-proof}.
    
    \clearpage
    \section{Optimal Graph Scheduling and Graph Coloring} \label{chap:optimal}
    In \autoref{chap:deterministic-scheduling} and \autoref{sec:graphing}, we presented a technique for designing concurrent \gls{serializable} \gls{scheduler}[s] \gls{ASMR} without introducing inter-replica state inconsistencies.
    $\GRAPHA$ is a generic \gls{scheduler} for valid graph schedules with a proof of correctness.
    Below,
    we define a formal optimization problem that minimizes latency and analyze it from a theoretical computational perspective.

    \paragraph{Optimality Goals.} There are multiple possible ways to define an optimal \gls{schedule}.
    The definition of an optimal \gls{schedule} depends on the property we wish to optimize.
    For example, we may wish to optimize the execution latency of the entire block or the average latency among the transactions in the block.
    Other options include improving the tail latency, e.g., of the 0.95 quantile execution time, etc.
    These parameters also depend on the environment used for execution, mainly the number of cores available.
    Specifically, a higher number of cores may change the values of these metrics for a given \gls{schedule}.
    Here, we focus on optimizing the total latency of the block assuming an unbounded number of cores is available.

    \subsection{Minimizing Latency}
    Below, we assume that the block is given as the triplet of the transactions, the conflicts and their lengths $\Block = \left(\T, \conflicts, \len\right)$.

    \begin{definition}[Optimal \Glsfmttext{schedule}]
        An \emph{optimal} \gls{schedule} is a valid \gls{schedule} with a minimal latency among all valid schedulers of $\Block$.
        This value is called the \emph{optimal latency} of $\Block$ and denoted as:
        \begin{equation}
            \MinLt[\len]{\Block} \triangleq
            \min_{\Schd}\left(\Lt[\len]{\Schd}\right).
        \end{equation}
    \end{definition}

    \paragraph{Search Problem.} We define the formal search problem called $OptimalSchedule$ such that any optimal \gls{schedule} is a valid solution.
    \begin{equation}
    \footnotesize 
        \symb{OptimalSchedule} \triangleq \biggl\{\left(\Block, \Schd\right) \biggl|\: \begin{array}{cc}\Schd \text{ is a \glsxtrshort{schedule} of } \Block \\ \land\: \Lt[\len]{\Schd} = \MinLt[\len]{\Block}\end{array}\biggl\}
    \end{equation}

    \subsection{NP Hardness} \label{sec:nphardness}
    In~\appref{nphardness}, we prove that determining minimal latency and optimizing it is \gls{NPH} by showing a
    reduction from the Graph Vertex Coloring Problem (GCP) that is known to be \gls{NPH}~\cite{npc-guide}.
    Recall, that in the GCP problem the graph is partitioned into disjoint \glsxtrshort{IS}[s], each representing one ``\emph{color}''.
    An \gls{IS} is a vertex set containing no adjacent vertices~\cite{npc-guide}.
    Thus, by definition we deduce two key observations: one being that a subset of transactions $T\subseteq\T$ is \gls{conflict-free} if and only if $T$ is an \gls{IS} in the conflict graph $\confGraph{\Block}$; and the other is that the legal partitions we referred to in \autoref{sec:greedy} are valid colorings of the conflict~graph.
    The minimum possible number $\chromatic{\cdot}$ of colors needed to color a graph is called its \emph{chromatic number}.

    \clearpage
    \section{Homogeneous Transactions}
    \label{chap:homo}
    \label{chap:equal-lengh}

    In this section, we assume that all transactions have a similar execution time.
    We remove this in \autoref{chap:hetero}.

    \paragraph{Homogeneity in Execution Times} In many practical cases, all transactions in a given block have the same execution duration.
    This is true, e.g., in workloads where all transactions do the same basic operations, e.g., transactions that update a register or transfer an asset from one account to another.
    \begin{definition}[\Glsxtrshort{homogeneous transactions}]
        A block of transactions $\Block=\left(\T, \conflicts\right)$ with a corresponding length function $\len:\T \to \symb{N+}$ is said to be a block of \emph{\gls{homogeneous transactions}} if some $c\in \symb{N+}$ exists such that the following 
        condition
        \ holds : $ 
            \label{eq:homo-condition}
            \forall\, tx \in \T : \lenOf{tx} = c
        $. 
        
        In such a case, we may formally omit the unit length function $\ell$ and use the value $1 \in \symb{N+}$ as a substitute whenever $\lenOf{\cdot}$ was previously used.
        We use the symbol $\mathds{1}$ to represent the length function $\len:\T\to\left\{1\right\}$.
        We may use $\mathds{1}$ as a substitute for the length function $\len$ or to eliminate any doubt that we are referring to a homogeneous block, regardless of the actual value of $c$.
    \end{definition}

    \paragraph{$\epsilon$-Homogeneity} Note that in some cases, we might also consider a block of transactions to be ``almost'' homogeneous if the execution durations of all transactions are fairly similar.
    Precisely, consider a case where time differences may exist s.t. $\forall \; tx_1, tx_2 \in \T :\: \left|\lenOf{tx_1} - \lenOf{tx_2}\right| \leq \epsilon$ for some $\epsilon > 0$.
    If $\epsilon$ is a negligible amount of time, we call such a block \emph{$\epsilon$-homogeneous} and consider it as a homogeneous block.
    The exact details of $\epsilon$ and what makes $\epsilon$ negligible are left out since they usually depend on the type of the workload, the environment used for execution, and other factors related to the specific applications and the use case.

    In \autoref{sec:lower}, we show that using a minimal vertex coloring we can find an optimal \gls{schedule} for \glsxtrshort{homogeneous transactions}.
    In \autoref{sec:color-equiv}, we show that finding an optimal schedule in the homogeneous case is equivalent to vertex~coloring.

    \subsection{Optimal Graph Schedules via Min Coloring} \label{sec:lower}
    We call the \gls{scheduler} that is a combination of $\GREEDYA$ and a minimal coloring algorithm the \emph{Minimal Coloring Graph \gls{scheduler}} ($\MINCOLORA$).
    Under the assumption that transactions are homogeneous and that there are more available cores than the maximal number of transactions that obtain the same color, \autoref{thm:lower-bound} implies that $\GREEDYA$ yields the fastest execution time for a block of transactions when combined with a minimal vertex colorer.
    Obviously, the maximal number of transactions of the same color is bounded by the number of transactions in a block.
    For a consortium blockchain, it is quite reasonable to assume that the validators would run on servers with multiple hundreds of cores~each.

   \begin{algorithm}
        \caption{Pseudocode for $\MINCOLORA$}
        \label{alg:mincolorgschd}
        \footnotesize
\begin{algorithmic}[1]
        \Function{$\MINCOLORA\!::$make-schedule}{Block $\Block=\left(\T, \conflicts\right)$}
            \State $\bigsqcup_{i=1}^{k-1} \T_i \gets$ some minimal coloring of $\confGraph{\Block}$ deterministically
            \State order $\T_1, \ldots, \T_k$ in some deterministic way \label{alg:mincolor:make-schedule:ordered-part}
            \State $\Schd \gets \Call{LevelSchedule}{\left[\T_1, \ldots, \T_k\right]}$
            \State \Return $\Schd$
        \EndFunctionExplicit
\end{algorithmic}
    \end{algorithm}

    \paragraph{Minimal Coloring Level Schedule} 
    Suppose that for a block $\Block$ of \gls{homogeneous transactions} we are given a minimal coloring $c_{\min}:\T\to\left\{1,\ldots, k\right\}$ for the corresponding conflict graph $\confGraph{\Block}$.
    Now, denote the \gls{schedule} $\Schd_{c_{\min}}$ obtained by using the \gls{Greedy Schedule} using the legal partition derived from the coloring.
    For clarity, the ordered partition $\bigsqcup_{i=1}^k T_i = \T$ used by \autoref{alg:greedy-schedule} is constructed from $c$ such that $T_i := \left\{tx\in\T \mid c\left(tx\right) = i\right\}$.
    The following two lemmas\ discuss the properties of $\Schd_{c_{\min}}$.

    \begin{lemma} \label{thm:convert-to-partition-num-colors}
        The number of colors used in the process described in \Cref{thm:convert-to-partition} is exactly $\Depth{\SchG{\Schd}}$.
    \end{lemma}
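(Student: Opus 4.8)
The plan is to track the BFS front $S$ across the iterations of \autoref{alg:converttocoloring} and to show that the color finally assigned to a vertex equals the number of vertices on a longest directed path ending at that vertex; the total number of colors is then the maximum of these values over all vertices, which is exactly the (unweighted) depth $\Depth{\GSchd}$. Write $S_l$ for the value of $S$ during the $l$-th iteration of the loop, so that $S_1 = \{\, v \mid \indeg[\GSchd]{v} = 0 \,\}$ is the set of sources and $S_{l+1}$ is the set of out-neighbours of $S_l$ in $\GSchd$. Reading the loop as iterating this front until it empties — which it must, since $\GSchd$ is acyclic — I will argue that each vertex receives, as its final color, the index of the last iteration in which it belongs to $S$.

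First I would establish, by induction on $l$, the characterization that $v \in S_l$ if and only if $\GSchd$ contains a directed path on exactly $l$ vertices from some source to $v$. The base case $l = 1$ is immediate from the initialization. For the step, $v \in S_{l+1}$ means some $u \in S_l$ has an edge $(u, v)$; by the induction hypothesis $u$ is the endpoint of an $l$-vertex source path, and appending $v$ yields an $(l+1)$-vertex path to $v$, while the converse follows by taking the penultimate vertex of a witnessing path. The crucial use of the \gls{DAG} assumption appears here: because $\GSchd$ has no directed cycles, every directed walk is already a simple path, so ``reachable after $l - 1$ advances of the front'' coincides with ``endpoint of a simple path on $l$ vertices'', with no spurious repetition.

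Since the loop overwrites $c(v)$ with the current index $l$ each time $v \in S_l$, the final value is $c(v) = \max\{\, l \mid v \in S_l \,\}$, which by the characterization equals the number of vertices on a longest directed path from a source to $v$; as any longest path ending at $v$ must begin at a source (otherwise it could be extended backwards, using acyclicity to see the new predecessor is off the path), this is precisely the node depth of $v$, and in particular every vertex is colored. Finally, because a prefix of any $l$-vertex source path is an $(l-1)$-vertex source path, the nonempty fronts are exactly $S_1, \ldots, S_L$ with $L = \max_v c(v)$ and $S_l = \emptyset$ thereafter; hence the loop runs $L$ times and returns $L$ colors. As $L$ is the maximum over all vertices of the longest source-to-vertex path length, it equals the number of vertices on a longest simple path in $\GSchd$, i.e.\ $\Depth{\SchG{\Schd}}$, as claimed. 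I expect the only delicate point to be the contrast with ordinary BFS: the repeated re-coloring makes the final label track the \emph{longest} rather than the shortest source path, so the argument must lean on the monotone way $c(v)$ is updated together with the acyclicity that rules out longer walks masquerading as shorter ones.
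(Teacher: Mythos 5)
Your proof is correct and follows essentially the same route as the paper: the paper does not give this lemma a standalone proof but treats it as immediate from the construction in \Cref{thm:convert-to-partition}, whose proof already identifies the final color of each vertex with its node depth in $\GSchd$, so that the number of colors is the maximum node depth, i.e.\ $\Depth{\SchG{\Schd}}$. Your contribution is simply to make that identification rigorous via the induction on the BFS front $S_l$ (showing $v\in S_l$ iff $v$ ends an $l$-vertex source path), which is a careful spelling-out of exactly the argument the paper leaves implicit.
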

    \begin{lemma}
        \label{thm:lt-numbercolors}
        $\Lt[\unitlen]{\Schd_{c_{\min}}} = k$.
    \end{lemma}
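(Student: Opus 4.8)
The plan is to rewrite the quantity as the unweighted depth of the scheduling graph and then sandwich that depth between $k$ and $k$. By the definition of latency with the constant weighting $\unitlen$, the weighted length of a path equals its number of vertices, so $\Lt[\unitlen]{\Schd_{c_{\min}}} = \Depth{\SchG{\Schd_{c_{\min}}}}$ (the same unweighted depth appearing in \autoref{thm:convert-to-partition-num-colors}). It therefore suffices to show $\Depth{\SchG{\Schd_{c_{\min}}}} = k$, and I would establish the two inequalities separately.

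For the upper bound $\Depth{\SchG{\Schd_{c_{\min}}}} \le k$, I would recall that $\Schd_{c_{\min}}$ is produced by \textproc{LevelSchedule} (\autoref{alg:greedy-schedule}) on the color-ordered partition $\bigsqcup_{i=1}^k T_i$ with $T_i = \{tx \mid c_{\min}(tx)=i\}$. Every edge it adds joins two \emph{conflicting} transactions, which hence lie in distinct classes, and it is oriented from the smaller class index to the larger one: the inner loop only ever considers pairs $(tx_j, tx_i)$ with $j < i$. Thus the color index strictly increases along every directed edge, and so along every simple path of $\SchG{\Schd_{c_{\min}}}$; such a path meets each of the $k$ classes at most once and therefore has at most $k$ vertices.

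For the lower bound $\Depth{\SchG{\Schd_{c_{\min}}}} \ge k$, I would route through \autoref{thm:convert-to-partition} and \autoref{thm:convert-to-partition-num-colors}: running \textproc{ConvertToColoring} (\autoref{alg:converttocoloring}) on the valid schedule $\Schd_{c_{\min}}$ returns a \emph{legal} partition of $\confGraph{\Block}$, i.e.\ a proper coloring, using exactly $\Depth{\SchG{\Schd_{c_{\min}}}}$ colors. Because $c_{\min}$ is a \emph{minimal} coloring, $k = \chromatic{\confGraph{\Block}}$, and the chromatic number is by definition the least number of colors in any proper coloring; applying this to the coloring just produced gives $\Depth{\SchG{\Schd_{c_{\min}}}} \ge k$. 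Combining the two bounds yields $\Lt[\unitlen]{\Schd_{c_{\min}}} = \Depth{\SchG{\Schd_{c_{\min}}}} = k$.

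The main obstacle is the lower bound, and in particular locating precisely where minimality is indispensable. A direct attempt to exhibit a simple path that visits all $k$ colors can genuinely fail for a non-minimal $k$-coloring: a graph that is a disjoint union of two edges admits a proper $3$-coloring whose induced scheduling graph has depth $2 \ne 3$, so minimality cannot be bypassed. Passing the lower bound through \autoref{thm:convert-to-partition-num-colors} is exactly what makes the role of minimality transparent, reducing ``there is a long path'' to ``no proper coloring uses fewer than $\chromatic{\cdot}$ colors.'' The only routine check I would still carry out is that the path-pruning step ($E \setminus P$ in \autoref{alg:greedy-schedule}) does not disturb the monotone-color orientation: it merely omits redundant edges while leaving every retained edge color-increasing, so the upper-bound argument is unaffected.
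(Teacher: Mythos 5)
Your proof is correct and follows essentially the same route as the paper's: the lower bound is the paper's contradiction argument recast directly (run \textproc{ConvertToColoring} on $\Schd_{c_{\min}}$, invoke \autoref{thm:convert-to-partition} and \autoref{thm:convert-to-partition-num-colors}, and use minimality of $c_{\min}$ to rule out a proper coloring with fewer than $k$ colors), while your upper bound is exactly the paper's \autoref{lemma:greedy-schedule-depth} (color indices strictly increase along edges added by \textproc{LevelSchedule}). If anything, you are slightly more careful than the paper's stated proof, which argues only the $\geq k$ direction explicitly and leaves the $\leq k$ direction to that auxiliary lemma.
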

    \noindent For brevity, the proof is delayed to \appref{homo-proofs}.
    \begin{theorem}
        \label{thm:lower-bound}
        $\Lt[\unitlen]{\Schd_{c_{\min}}} = \MinLt[\unitlen]{\T}$.
    \end{theorem}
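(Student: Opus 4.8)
The plan is to prove the equality by sandwiching $\MinLt[\unitlen]{\T}$ between two bounds, both anchored at the chromatic number. Recall from \autoref{thm:lt-numbercolors} that $\Lt[\unitlen]{\Schd_{c_{\min}}} = k$, where $k$ is the number of colors used by the minimal coloring $c_{\min}$; since $c_{\min}$ is minimal, $k = \chromatic{\confGraph{\Block}}$ is exactly the chromatic number of the conflict graph. The upper bound $\MinLt[\unitlen]{\T} \leq \Lt[\unitlen]{\Schd_{c_{\min}}}$ is then immediate: by \autoref{lemma:valid-greedy-schedule} the schedule $\Schd_{c_{\min}}$ is valid, and $\MinLt[\unitlen]{\T}$ is defined as the minimum latency over all valid schedules, so it cannot exceed the latency of this particular one.

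The lower bound $\MinLt[\unitlen]{\T} \geq k$ is the heart of the argument, and it is where the \textproc{ConvertToColoring} machinery is put to use. I would fix an arbitrary valid schedule $\Schd'$ and first note that, in the homogeneous setting, every vertex carries unit weight, so its latency counts the vertices on a longest path, i.e. $\Lt[\unitlen]{\Schd'} = \Depth{\SchG{\Schd'}}$. Feeding $\Schd'$ into \textproc{ConvertToColoring} then produces a legal partition, that is, a valid coloring of $\confGraph{\Block}$ by \autoref{thm:convert-to-partition}, whose number of colors equals $\Depth{\SchG{\Schd'}}$ by \autoref{thm:convert-to-partition-num-colors}. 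Since no valid coloring can use fewer than $\chromatic{\confGraph{\Block}} = k$ colors, we conclude $\Lt[\unitlen]{\Schd'} = \Depth{\SchG{\Schd'}} \geq k$. As $\Schd'$ was arbitrary, minimizing over all valid schedules yields $\MinLt[\unitlen]{\T} \geq k$, and combining with the upper bound gives $\MinLt[\unitlen]{\T} = k = \Lt[\unitlen]{\Schd_{c_{\min}}}$.

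I expect the lower bound to be the only delicate step. The conceptual obstacle is to see that the depth of an arbitrary scheduling \gls{DAG} is bounded below by the chromatic number not through some ad hoc path-counting argument, but because \textproc{ConvertToColoring} canonically converts any valid schedule of depth $d$ into a $d$-coloring of the conflict graph; the chromatic number is therefore a genuine floor on the depth of every valid schedule. Once \autoref{thm:convert-to-partition} and \autoref{thm:convert-to-partition-num-colors} are available, this collapses to the definitional fact that nothing beats the chromatic number, and the remaining steps are routine.
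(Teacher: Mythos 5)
Your proof is correct and follows essentially the same route as the paper's: both rely on \Cref{thm:lt-numbercolors} to pin $\Lt[\unitlen]{\Schd_{c_{\min}}}$ at $k$ and on the \textproc{ConvertToColoring} machinery (\Cref{thm:convert-to-partition}, \Cref{thm:convert-to-partition-num-colors}) to turn an arbitrary valid schedule into a coloring whose size equals the schedule's depth, so that the chromatic number bounds every schedule's latency from below. The only cosmetic difference is that you argue the lower bound directly over all valid schedules while the paper phrases it as a proof by contradiction against a hypothetical better schedule.
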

    \noindent This proof is also delayed to \appref{homo-proofs}.

    \begin{corollary}[Optimal Homogeneous Schedules using Minimal Coloring] \label{thm:optimal-mcgbr}
        Given some correct minimal vertex coloring algorithm $C$, the resulting schedule of the execution of (\ref{eq:homomincolorsolve}) is an optimal schedule for any homogeneous block $\Block = \left(\T, \conflicts, \len \equiv \unitlen\right)$.
        \begin{equation}
            \Schd = \textproc{LevelSchedule}\Big(C\big(\confGraph{\Block}\big)\Big)
            \label{eq:homomincolorsolve}
        \end{equation}
    \end{corollary}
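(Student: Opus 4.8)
The plan is to obtain the corollary by chaining the results already established for homogeneous blocks, since all the substantive work lives in \autoref{thm:lower-bound}; what remains is to check that the hypotheses of the earlier statements are met. First I would observe that because $C$ is a correct minimal vertex coloring algorithm, $C\big(\confGraph{\Block}\big)$ returns a minimal coloring $c_{\min}:\T\to\{1,\ldots,k\}$ with $k=\chromatic{\confGraph{\Block}}$. As recorded in \autoref{sec:nphardness}, valid colorings of the conflict graph are exactly the legal partitions: setting $T_i := \{tx\in\T \mid c_{\min}(tx)=i\}$ gives a partition $\bigsqcup_{i=1}^k T_i = \T$ whose blocks are conflict-free, since no two identically colored transactions are adjacent in $\confGraph{\Block}$ and hence none conflict. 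Consequently the argument handed to \textproc{LevelSchedule} in (\ref{eq:homomincolorsolve}) is a legal partition, and the schedule $\Schd$ it produces is precisely the schedule $\Schd_{c_{\min}}$ analyzed by the preceding lemmas.

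With this identification in place I would finish in two short steps. By \autoref{lemma:valid-greedy-schedule}, \textproc{LevelSchedule} applied to a legal partition returns a valid schedule, so $\Schd$ is a valid schedule of $\Block$ and is therefore a legitimate candidate for optimality. By \autoref{thm:lower-bound}, we have $\Lt[\unitlen]{\Schd}=\Lt[\unitlen]{\Schd_{c_{\min}}}=\MinLt[\unitlen]{\T}$, so $\Schd$ attains the optimal latency. A valid schedule whose latency equals the optimal latency is, by the very definition of an optimal schedule, optimal for $\Block$; this yields the claim.

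The only real obstacle is bookkeeping rather than mathematics: one must confirm that the coloring emitted by $C$ genuinely qualifies as a legal partition (so that \autoref{lemma:valid-greedy-schedule} applies) and that the object built in (\ref{eq:homomincolorsolve}) is literally the $\Schd_{c_{\min}}$ to which \autoref{thm:lower-bound} refers. Once these two identifications are secured, validity follows from \autoref{lemma:valid-greedy-schedule} and latency-optimality from \autoref{thm:lower-bound}, and no further estimation is needed.
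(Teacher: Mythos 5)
Your proposal is correct and follows exactly the route the paper intends: the corollary is an immediate consequence of \autoref{thm:lower-bound} (latency optimality of $\Schd_{c_{\min}}$) together with \autoref{lemma:valid-greedy-schedule} (validity of level schedules from legal partitions) and the observation from \autoref{sec:nphardness} that color classes of a valid coloring are conflict-free, so the output of (\ref{eq:homomincolorsolve}) is literally $\Schd_{c_{\min}}$. The paper leaves these identifications implicit; your write-up just makes the bookkeeping explicit.
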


    \paragraph{Relation to Previous Works.}
    Within the context of the $\GRAPHA$ framework, previous works, such as \cite{KD04, ParBlockchain, COS}, can be viewed as techniques for transforming the conflict graph into a schedule. This is by directing the edge between every pair of conflicting transactions according to their respective order in the total ordering.
    In contrast, we derive the edge-directing decisions in a way that guarantees that we minimize the block latency instead of relying on some predetermined order.
    Rearranging the logical order of transactions within the same block is possible because these transactions correspond to concurrent operations.
    Our formalization captures all possible graph schedulers, and serves as a generic proof platform for such schedulers, while previous works target one specific~scheduler.

    Further, the schedule we generate in \autoref{alg:greedy-schedule} has no redundant edges whenever there is already a dependency path between conflicting transactions (\hyperref[alg:greedy-schedule:iter_start]{lines~\ref*{alg:greedy-schedule:iter_start}}~--~\ref{alg:greedy-schedule:iter_end}).
    Having fewer edges in the scheduler keeps the synchronization logic simpler and reduces the overhead added between the execution of transactions.

    \subsubsection{Improving Tail-Latency} \label{sec:homo-tail-latency}
    \label{sec:homotail-proofs}

    In \autoref{sec:lower} we showed how to minimize the latency of execution of the block of transactions as a unified task.
    We call this latency the \emph{block-latency}.
    However, clients do not necessarily need to wait for the entire block to complete before receiving the results of their transactions.
    In fact, results can be returned to clients as soon as individual transaction executions finish, even before the entire block completes\footnote{This is the behavior of the main loop for \gls{ASMR} in \autoref{fig:main-loop-flowchart}. }.

    Minimizing block-latency is our primary goal; our secondary goal is to improve the latency of the individual transaction latencies, after minimizing the block-latency.
    \autoref{fig:tail-latency-good-vs-bad} depicts the impact the order of transactions in the \gls{schedule} has on the tail latency.
    In both case~(a) and case~(b) the block latency remains the same.
    However, a vertical ``flip'' of the schedule dramatically changes the average \& tail latency.
    In case~(a) most of the transactions finish early, but in case~(b) most of the transactions wait until the rest of the transactions finish.
    This makes the average and tail latencies~higher.

\begin{figure}
	\centering
	\subfloat[\normalfont{Low Tail Latency}]{
		\includegraphics[width=0.3\linewidth]{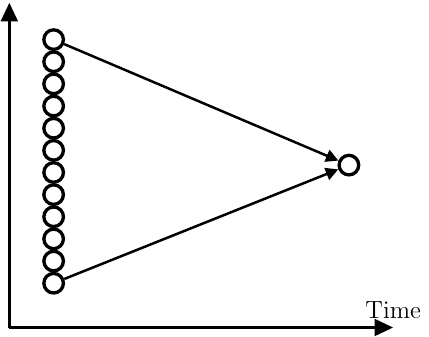}
	}
	\hspace{3em}
	\subfloat[\normalfont{High Tail latency}]{
		\includegraphics[width=0.3\linewidth]{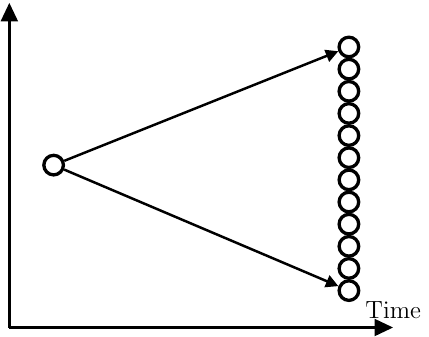}
	}
	\caption{A vertical ``flip'' of \gls{schedule} (a), results in another \gls{schedule} (b) with the same block latency.
	The average \ \& tail latencies\ are vastly different between the two.}
	\label{fig:tail-latency-good-vs-bad}
\end{figure}

    \paragraph{Reordering Colors} \autoref{fig:tail-latency-good-vs-bad} suggests to reorder the colors so that the ``lower'' colors have more transactions than the ``higher'' ones.
    For the homogeneous case, this solution seems viable through \Cref{thm:reorder-coloring-homo} and \Cref{thm:reorder-mincoloring-homo}.
    The former ensures that the block latency is still bounded by the same bound even after reordering the partition (colors), while the latter ensures that the block latency stays optimal even after reordering colors.
    The proofs can be found in~\appref{homotail-proofs}.

    \begin{claim}
        \label{thm:reorder-coloring-homo}
        Given a valid partition of transactions $\T = \sqcup_{i=1}^k T_i$ and a $k$-permutation $\sigma$, let $\Schd$ be the \gls{schedule} created from the \gls{Greedy Schedule} for the given partition $\sqcup_{i=1}^k T_i$ and let $\Schd_\sigma$ be the one for the reordered partition $\sqcup_{i=1}^k T_{\sigma\left(i\right)}$, then
        $ 
            \Lt{\Schd}, \Lt{\Schd_\sigma} \leq k
        $. 
    \end{claim}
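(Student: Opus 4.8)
The plan is to exploit the layered (``leveled'') structure that \autoref{alg:greedy-schedule} imposes on its output. The crucial observation is that, when run on any ordered partition with levels $B_1,\ldots,B_k$, the construction only ever inserts edges of the form $(tx_j, tx_i)$ with $tx_j \in B_j$, $tx_i \in B_i$, and $j < i$: this is immediate from \hyperref[alg:greedy-schedule:iter_start]{lines~\ref*{alg:greedy-schedule:iter_start}}~--~\ref{alg:greedy-schedule:iter_end}, where the outer loop ranges over the levels $B_i$ with $i$ increasing and the inner loop considers only $B_j$ with $j < i$. Hence every edge of the resulting scheduling graph points from a strictly lower level to a strictly higher one, and this remains true regardless of which conflict pairs are skipped at line~\ref{line:greedy-check}, since skipping edges can only make the graph sparser.

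First I would show that, as a consequence, the level indices strictly increase along every directed path. Combined with the fact that each $B_i$ is \glsxtrshort{conflict-free}, hence an independent set with no edges internal to a level, this forces any simple path to contain at most one vertex per level. Since the partition has exactly $k$ levels, no simple path can have more than $k$ vertices.

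Next I would invoke the definition of latency in the homogeneous (unit-length) setting, under which $\Lt{\Schd} = \Depth{\SchG{\Schd}}$ equals the maximum number of vertices over all simple paths of $\SchG{\Schd}$. By the previous paragraph this quantity is at most $k$, so $\Lt{\Schd} \leq k$. The identical argument applies verbatim to $\Schd_\sigma$: the reordered partition $T_{\sigma(1)},\ldots,T_{\sigma(k)}$ is again a legal partition into exactly $k$ \glsxtrshort{conflict-free} levels, so \autoref{alg:greedy-schedule} once more produces a graph whose edges respect the permuted level order; every simple path of $\SchG{\Schd_\sigma}$ therefore visits at most one vertex per level, i.e.\ at most $k$ vertices, giving $\Lt{\Schd_\sigma} \leq k$ as well.

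The argument is purely structural, so I anticipate no real difficulty. The single point needing care is justifying that a simple path cannot revisit a level: this holds precisely because all edges go forward in the ordering, so once a path has left level $i$ it can only continue to levels of index greater than $i$ and can never return to an index at most $i$.
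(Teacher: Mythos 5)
Your proposal is correct and follows essentially the same route as the paper: the paper proves this claim by citing \Cref{lemma:greedy-schedule-depth}, whose proof is exactly your observation that \autoref{alg:greedy-schedule} only adds edges from a lower-indexed level to a higher-indexed one, so any path has at most $k-1$ edges and hence depth (and unit-length latency) at most $k$, applied to both the original and the permuted partition. You merely inline that lemma's argument rather than invoking it as a separate statement.
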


    \begin{claim}
        \label{thm:reorder-mincoloring-homo}
        If the partition from \Cref{thm:reorder-coloring-homo} is also a minimal coloring, then
        $ 
            \Lt{\Schd} = \Lt{\Schd_\sigma}
        $. 
    \end{claim}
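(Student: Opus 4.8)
The plan is to prove the equality by sandwiching both latencies between the same value $k$ from above and from below. The upper bounds $\Lt[\unitlen]{\Schd} \le k$ and $\Lt[\unitlen]{\Schd_\sigma} \le k$ are already handed to us by \Cref{thm:reorder-coloring-homo}, so the whole task collapses to establishing the matching lower bound $\Lt[\unitlen]{\Schd}, \Lt[\unitlen]{\Schd_\sigma} \ge k$; once both directions are in place, equality to $k$ (and hence to each other) is immediate.

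First I would record that permuting the color classes cannot disturb legality: the reordered partition $\sqcup_{i=1}^{k} T_{\sigma(i)}$ is built from exactly the same sets $T_1, \ldots, T_k$, each of which is conflict-free by hypothesis, merely listed in a different order. Hence it is again a legal partition, and by \Cref{lemma:valid-greedy-schedule} both $\Schd = \textproc{LevelSchedule}(\sqcup_i T_i)$ and $\Schd_\sigma = \textproc{LevelSchedule}(\sqcup_i T_{\sigma(i)})$ are valid schedules of the same block. This is what lets me treat $\Schd_\sigma$ as a legitimate competitor in the optimization.

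Next I would invoke optimality. Because the partition is assumed to be a \emph{minimal} coloring, its number of classes $k$ is the chromatic number of the conflict graph, and by \Cref{thm:lower-bound} together with \Cref{thm:lt-numbercolors} the optimal latency of the homogeneous block is exactly $\MinLt[\unitlen]{\T} = k$. By the very definition of $\MinLt[\unitlen]{\T}$ as the minimum latency over all valid schedules of $\Block$, every valid schedule has latency at least $\MinLt[\unitlen]{\T} = k$. Applying this to the two valid schedules identified above yields $\Lt[\unitlen]{\Schd} \ge k$ and $\Lt[\unitlen]{\Schd_\sigma} \ge k$. Combining with the upper bounds from \Cref{thm:reorder-coloring-homo} gives $\Lt[\unitlen]{\Schd} = \Lt[\unitlen]{\Schd_\sigma} = k$, which is the claimed equality.

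I expect no genuine obstacle here: the only step that needs a moment's care is the lower bound, and it is essentially free once one recognizes that in the homogeneous setting no valid schedule can beat the chromatic number. A permutation of colors cannot push the latency below the optimum $k$ (the universal lower bound) nor above $k$ (the sibling claim's upper bound), so it must leave it pinned at $k$. All the real work — that minimal coloring attains the optimum — was already discharged in \Cref{thm:lower-bound}, so this claim is a short corollary of the reordering bound and optimality rather than an independent argument.
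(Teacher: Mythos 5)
Your proof is correct and rests on the same key fact as the paper's: \Cref{thm:lower-bound} pins the latency of any schedule derived from a minimal coloring to the optimum. The paper simply applies \Cref{thm:lower-bound} to both $\Schd$ and $\Schd_\sigma$ directly (since the permuted partition is still a minimal coloring) and concludes $\Lt{\Schd} = \MinLt{\T} = \Lt{\Schd_\sigma}$; your sandwich via the upper bound from \Cref{thm:reorder-coloring-homo} and the definitional lower bound $\Lt{\cdot} \geq \MinLt{\T} = k$ is a slightly longer route to the same conclusion.
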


    \paragraph{Reordering Only Works for Minimal Colorings} \Cref{thm:reorder-mincoloring-homo} raises a question: is the latency preserved after reordering the colors, even if the coloring is not necessarily minimal?
    Although it is a desirable property, block latency can, in fact, change as a result of applying a permutation to the color order.
    Consider the following example with five transactions $1,\ldots, 5$ with their pairwise conflicts described in \autoref{fig:homo-conf-graph}.
    An optimal \gls{schedule} for these transactions with $\Lt{\cdot} = 3$ using the coloring $c_{min}$~(\ref{eq:homo-c_min}) appears in \autoref{fig:homo-cmin-greedy-schedule}.
    \begin{equation}
\small
        \label{eq:homo-c_min}
        c_{\min} = \left\{\begin{array}{@{}c@{}}
                              1\\2
        \end{array}\middle|\begin{array}{@{}c@{}}
                               3\\5
        \end{array}\middle|\begin{array}{@{}c@{}}
                               4
        \end{array}\right\}
\hspace{.4em}
        c_{4} = \left\{\begin{array}{@{}c@{}}
                           1\\2
        \end{array}\middle|\begin{array}{@{}c@{}}
                               3
        \end{array}\middle|\begin{array}{@{}c@{}}
                               4
        \end{array}\middle|\begin{array}{@{}c@{}}
                               5
        \end{array}\right\}
\hspace{.4em}
        \sigma\left(c_{4}\right) = \left\{\begin{array}{@{}c@{}}
                                              1\\2
        \end{array}\middle|\begin{array}{@{}c@{}}
                               3
        \end{array}\middle|\begin{array}{@{}c@{}}
                               5
        \end{array}\middle|\begin{array}{@{}c@{}}
                               4
        \end{array}\right\}
    \end{equation}
        \clonelabel{eq:homo-c_4}{eq:homo-c_min}
        \clonelabel{eq:homo-sigma-c_min}{eq:homo-c_min}
        \clonelabel{eq:homo-sigma-c_all}{eq:homo-c_min}
    \begin{figure*}
        \centering
        \begin{subfigure}{0.2\linewidth}
            \centering
            \includegraphics[width=0.8\linewidth]{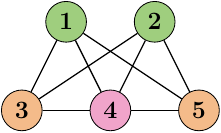}
            \caption{conflict-graph}
            \label{fig:homo-conf-graph}
        \end{subfigure}
        \hfill
        \begin{subfigure}{0.25\linewidth}
            \centering
            \includegraphics[width=0.61\linewidth]{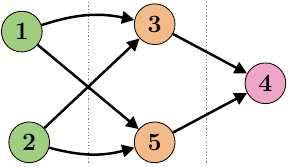}
            \\
            Latency is 3 (Optimal)
            \caption{minimal coloring $c_{min}$}
            \label{fig:homo-cmin-greedy-schedule}
        \end{subfigure}
        \hfill
        \begin{subfigure}{0.25\linewidth}
            \centering
            \includegraphics[width=0.8\linewidth]{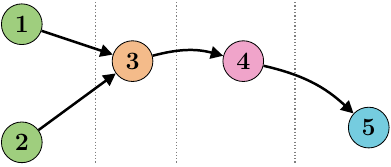}
            \\
            Latency is 4 (Suboptimal)
            \caption{coloring $c_{4}$}
            \label{fig:homo-c4-greedy-schedule}
        \end{subfigure}
        \hfill
        \begin{subfigure}{0.25\linewidth}
            \centering
            \includegraphics[width=0.6\linewidth]{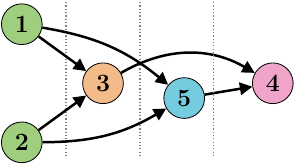}
            \\
            Latency is 3 (Optimal)
            \caption{coloring $\sigma\left(c_{4}\right)$}
            \label{fig:homo-c4-sigma-greedy-schedule}
        \end{subfigure}
        \caption{5 homogeneous transactions and results of the \gls{Greedy Schedule} for them given three colorings~(\ref{eq:homo-sigma-c_all}).}
        \label{fig:homo-greedy-schedules}
    \end{figure*}
    
    Now, consider the non-optimal coloring $c_4$ (\ref{eq:homo-c_4}) and the coloring $\sigma\left(c_{4}\right)$ (\ref{eq:homo-sigma-c_min}) obtained via the 4-permutation $\sigma = \left(3 \quad 4\right)$ that swaps 3 and 4.
    Both $c_4$ and $\sigma\left(c_{4}\right)$ are \emph{not} minimal colorings.
    The resulting \gls{schedule}[s] are depicted in \autoref{fig:homo-greedy-schedules}.
    The \gls{schedule} created from $c_4$ has $\Lt{\cdot} = 4$ as seen in \autoref{fig:homo-c4-greedy-schedule}; but the \gls{schedule} created from $\sigma\left(c_{4}\right)$ has $\Lt{\cdot} = 3$ as seen in \autoref{fig:homo-c4-sigma-greedy-schedule}.
    Interestingly, this schedule is optimal and equivalent to the one in \autoref{fig:homo-cmin-greedy-schedule}.

    \paragraph{Other Corollaries} The example above also shows other interesting facts about homogeneous transactions:
    \begin{enumerate*}[label={\smaller(}\roman*{\smaller)}]
        \item A permutation on the order of the colors may change the block latency.
        \item A non-optimal coloring may yield an optimal \gls{schedule}.
        \item A permutation of its colors can yield a non-optimal \gls{schedule}.
    \end{enumerate*}

    \subsection{Equivalence to Vertex Coloring} \label{sec:color-equiv}

    In \appref{nphardness}, we give a polynomial reduction from the vertex coloring problem to the homogenous latency problem.
    Beyond this reduction, here we also show a direct one-to-one relationship between the colors of a minimal coloring and the structure of an optimal \gls{schedule}; this is  captured by \autoref{thm:minhomolt-eq-chromatic}.
    This strong equivalence relation allows us to directly inherit theoretical results that apply for vertex coloring to the scheduling problem.
    For example, we deduce in this way that for the graph scheduling problem, even $\alpha$-approximation is \gls{NPH}~\cite{color-approx-nph}.

    \begin{theorem}[$\texttt{Lt}^\texttt{*}\equiv_\unitlen\chromaticnumbersymbol$] \label{thm:minhomolt-eq-chromatic}
    For any undirected graph $G =\left(V,E\right)$ ($G$ also represents a homogeneous block when $V$ is the transactions and $E$ is the conflict relation)
    $
    \chromatic{G} = \MinLt[\unitlen]{G}
    $. 
    \end{theorem}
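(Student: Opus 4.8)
The plan is to prove the equality $\chromatic{G} = \MinLt[\unitlen]{G}$ as two matching inequalities, each realized by an explicit construction that converts one kind of object into the other. Presenting both directions constructively is what turns this into a genuine structural equivalence --- a minimal coloring yields an optimal schedule, and an optimal schedule yields a minimal coloring --- rather than a mere coincidence of two numerical quantities.

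For $\MinLt[\unitlen]{G} \leq \chromatic{G}$, I would begin with an arbitrary minimal coloring of $G$, which by definition partitions $V$ into exactly $\chromatic{G}$ independent (\gls{conflict-free}) classes. Ordering these classes and feeding the resulting legal partition into \textproc{LevelSchedule} produces a schedule $\Schd_{c_{\min}}$, which is valid by \autoref{lemma:valid-greedy-schedule}. \autoref{thm:lt-numbercolors} then gives $\Lt[\unitlen]{\Schd_{c_{\min}}} = \chromatic{G}$. Since $\Schd_{c_{\min}}$ is one particular valid schedule, minimizing over all valid schedules can only decrease the latency, so $\MinLt[\unitlen]{G} \leq \chromatic{G}$.

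For the reverse inequality $\chromatic{G} \leq \MinLt[\unitlen]{G}$, I would take any optimal schedule $\Schd^{*}$, so that $\Lt[\unitlen]{\Schd^{*}} = \MinLt[\unitlen]{G}$, and run it through \textproc{ConvertToColoring}. By \autoref{thm:convert-to-partition} its output is a legal partition, hence a valid coloring of $G$, and by \autoref{thm:convert-to-partition-num-colors} it uses exactly $\Depth{\SchG{\Schd^{*}}}$ colors. Under the unit length function this depth equals the latency $\Lt[\unitlen]{\Schd^{*}}$, so we have exhibited a valid coloring with $\MinLt[\unitlen]{G}$ colors; since $\chromatic{G}$ is the minimum over all valid colorings, $\chromatic{G} \leq \MinLt[\unitlen]{G}$ follows. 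Chaining the two inequalities closes the argument.

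I expect the reverse direction to require the most care. The delicate point is the identification of the unweighted depth $\Depth{\SchG{\Schd^{*}}}$ with the unit-weighted latency $\Lt[\unitlen]{\Schd^{*}}$: the depth counts the vertices on a longest simple path, whereas the latency is defined as a sum of vertex lengths along such a path, and I must confirm these agree when every length is $1$. I would also verify explicitly that \textproc{ConvertToColoring} may be applied to an \emph{arbitrary} optimal schedule and not merely to those emitted by \textproc{LevelSchedule}; this is precisely where the validity hypothesis of \autoref{thm:convert-to-partition} is used, so I would first observe that every optimal schedule is in particular valid before invoking that lemma. As a consistency check, combining \autoref{thm:lt-numbercolors} with the already-established \autoref{thm:lower-bound} recovers the same equality, which reassures us that the two constructions agree.
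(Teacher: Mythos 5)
Your proof is correct and follows essentially the same route as the paper, which establishes the equality via the two conversion lemmas \autoref{thm:conv-minvc-minlt} and \autoref{thm:conv-minlt-minvc}: \textproc{LevelSchedule} turns a minimal coloring into a valid schedule of latency $\chromatic{G}$, and \textproc{ConvertToColoring} turns an optimal schedule into a valid coloring with $\MinLt[\unitlen]{G}$ colors. If anything, your reverse direction is slightly more economical, since the paper's proof of \autoref{thm:conv-minlt-minvc} routes through \autoref{thm:lower-bound} whereas you appeal directly to \autoref{thm:convert-to-partition-num-colors} and the minimality of $\chromatic{G}$ over all valid colorings.
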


    The proof of this theorem appears in \appref{minhomolt-eq-chromatic}.

    \clearpage

    \section{Heterogeneous Transactions}
    \label{chap:hetero}
    \label{sec:variable-length}

    \subsubsection{Minimal Coloring is not Enough}
    Here, we show why the \gls{Greedy Schedule} procedure does not create an optimal \gls{schedule} given an arbitrary minimal coloring.

    Consider an example with seven transactions, $1,2,3, \ldots, 7$, with durations 1, 10, 100 and 1000 respectively.
    Conflicts and lengths are as illustrated in \autoref{fig:hetero-conf-graph}.
    \autoref{fig:hetero-c1} is the schedule produced from the minimal coloring $c_1$ (\ref{eq:hetero-example-c1}).
    \begin{equation}
        \small
        \label{eq:hetero-example-c1}
        c_1 \! =\! \left\{\!\begin{array}{@{}c@{}}
                         1
        \end{array}\middle|\begin{array}{@{}c@{}}
                               2
        \end{array}\middle|\begin{array}{@{}c@{}}
                               3\\6
        \end{array}\middle|\begin{array}{@{}c@{}}
                               5\\7
        \end{array}\middle|\begin{array}{@{}c@{}}
                               4
        \end{array}\!\right\}
              c_2 = \left\{\begin{array}{@{}c@{}}
                         1
        \end{array}\middle|\begin{array}{@{}c@{}}
                               2
        \end{array}\middle|\begin{array}{@{}c@{}}
                               3\\5
        \end{array}\middle|\begin{array}{@{}c@{}}
                               4\\6
        \end{array}\middle|\begin{array}{@{}c@{}}
                               7
        \end{array}\right\}
        c_3 = \left\{\begin{array}{@{}c@{}}
                                                    1
        \end{array}\middle|\begin{array}{@{}c@{}}
                               2
        \end{array}\middle|\begin{array}{@{}c@{}}
                               3\\6
        \end{array}\middle|\begin{array}{@{}c@{}}
                               4
        \end{array}\middle|\begin{array}{@{}c@{}}
                               5\\7
        \end{array}\right\}
    \end{equation}
        \clonelabel{eq:hetero-example-c2}{eq:hetero-example-c1}
        \clonelabel{eq:hetero-example-c3}{eq:hetero-example-c1}
        \clonelabel{eq:hetero-example-all}{eq:hetero-example-c1}
    This \gls{schedule}'s latency is the maximum weighted path $\Lt{\cdot} = 1+1+1000+100+1000 = 2102$.
    Another minimal coloring $c_2$ (\ref{eq:hetero-example-c2}) can be obtained from $c_1$ by moving 4 and 5 to different colors.
    \autoref{fig:hetero-c2} is the schedule produced from $c_2$, whose latency is the maximum weighted path $\Lt{\cdot} = 1+1+100+1000+100 = 1202$.
    Both schedules have different latencies, even though they were both created from two minimal colorings of the same conflict graph.

    \begin{figure*}
        \centering
        \begin{subfigure}{0.2\linewidth}
            \includegraphics[width=\linewidth]{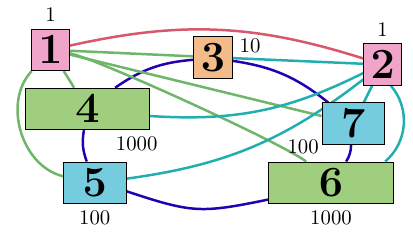}
            \caption{conflict-graph}
            \label{fig:hetero-conf-graph}
        \end{subfigure}
        \hfill
        \begin{subfigure}{0.25\linewidth}
            \centering
            \includegraphics[width=\linewidth]{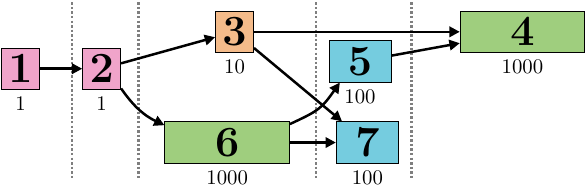}
            \\
            Latency is 2102 (Suboptimal)
            \caption{the minimal coloring $c_{1}$}
            \label{fig:hetero-c1}
        \end{subfigure}
        \hfill
        \begin{subfigure}{0.25\linewidth}
            \centering
            \includegraphics[width=\linewidth]{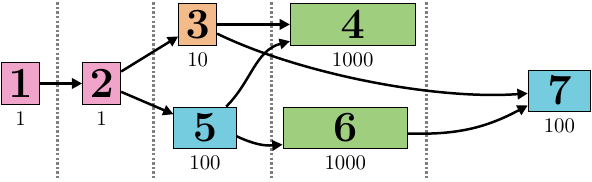}
            \\
            Latency is 1202 (Suboptimal)
            \caption{the minimal coloring $c_{2}$}
            \label{fig:hetero-c2}
        \end{subfigure}
        \hfill
        \begin{subfigure}{0.25\linewidth}
            \centering
            \includegraphics[width=\linewidth]{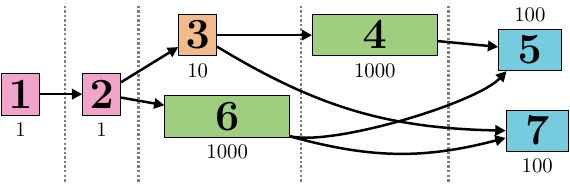}
            \\
            Latency is 1112 (Optimal)
            \caption{the minimal coloring $c_{3}$}
            \label{fig:hetero-c3}
        \end{subfigure}
        \caption{Heterogeneous transactions and the results of the \gls{Greedy Schedule} for them given unweighted colorings (\ref{eq:hetero-example-all}).}
    \end{figure*}

    \paragraph{Color Reordering} In \autoref{sec:lower}, we showed that for \gls{homogeneous transactions}, the optimal latency is preserved even when reordering the colors of a minimal coloring\ (\Cref{thm:reorder-mincoloring-homo}).
    Now, we show that this does not hold for heterogeneous transactions.
    Recall the coloring $c_1$.
    Its colors can be rearranged to create yet another minimal coloring $c_3 = \sigma\left(c_1\right)$ (\ref{eq:hetero-example-c3}) using the 5-permutation $\sigma = \left(4 \quad 5\right)$ that swaps 4 and 5 (and leaves the rest the same).
    \autoref{fig:hetero-c3} is the schedule produced from $c_3$.
    The latency of this second \gls{schedule} is the maximum weight of a path, $\Lt{\cdot} = 1+1+10+1000+100 = 1112$, which is clearly
    optimal for the specific example.
    Thus, rearranging the colors of a minimal coloring can change the latency.

    \paragraph{Tail Latency and Average Latency} In Section~\ref{sec:homo-tail-latency} we described a way to improve the average transaction latency and tail latency by rearranging the colors
    for batches of homogenous transactions.
    However, in the example above, $c_1$ and $c_3$ (\ref{eq:hetero-example-c1})\ represent two isomorphic colorings that only differ in the colors' ``names''.
    Therefore, tactics that try to improve transaction latency by reordering the colors risk turning a schedule with minimal block latency to one with a higher block latency (here, higher than the optimal by~89\%).

    \subsubsection{Weighted Graph Coloring}
    \paragraph{WGCP and Optimal Schedules} \label{example:hetero-weighted-unweighted}We showed above that the algorithm for homogeneous transactions $\MINCOLORA$ does not work for heterogeneous transactions when some arbitrary minimal coloring is used. One idea to overcome this is to turn to the generalized minimal weighted coloring problem~\cite{GUAN97}.
    The goal in the Minimal Weighted Vertex Coloring Problem (aka Minimal WGCP) is to find a coloring that minimizes the sum of the weights of its colors.
    The weight of a color is the maximal weight of a node colored by it.
    Alas, on its own, WGCP cannot be a substitute for GCP (denote the \gls{scheduler} $MWCGBR$).
    As we show here, applying the \gls{Greedy Schedule} to a minimal weighted coloring does not necessarily yield an optimal schedule for heterogeneous transactions.
    Consider four heterogeneous transactions $a$, $b$, $c$, and $d$, whose conflicts and lengths are shown in \autoref{fig:hetero-weightedcoloring-conf-graph}.
    For this graph, it is known that a minimal \emph{weighted} coloring must use 3 colors, although the smallest possible number of colors is 2.
    \autoref{fig:hetero-weightedcoloring-schd-weighted-optimal} shows a schedule s.t. $\Lt{\cdot} = 6$.
    Also, the optimal latency here is $6$, because the conflict $a \conflicts b$ implies a lower bound on the latency, which is the sum of their lengths, $\lenOf{a} + \lenOf{b} = 5+1 = 6$.
    \begin{figure*}
        \centering
        \begin{subfigure}{0.23\linewidth}
            \centering
            \includegraphics[width=\linewidth]{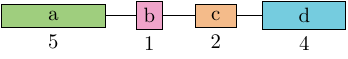}
            \caption{The conflict-graph}
            \label{fig:hetero-weightedcoloring-conf-graph}
        \end{subfigure}
        \hfill
        \begin{subfigure}{0.25\linewidth}
            \centering
            \includegraphics[width=0.75\linewidth]{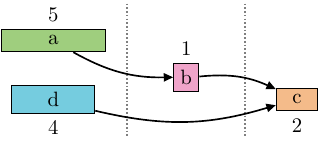}
            \\
            Latency is 8 (Suboptimal)
            \caption{from \emph{weighted}-coloring $c_{w}$}
            \label{fig:hetero-weightedcoloring-schd-weighted}
        \end{subfigure}
        \hfill
        \begin{subfigure}{0.25\linewidth}
            \centering
            \includegraphics[width=0.75\linewidth]{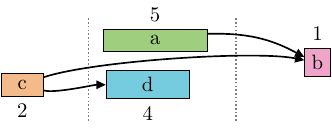}
            \\
            Latency is 6 (Optimal)
            \caption{
                from \emph{weighted}-coloring $c_w'$
            }
            \label{fig:hetero-weightedcoloring-schd-weighted-optimal}
        \end{subfigure}
        \hfill
        \begin{subfigure}{0.25\linewidth}
            \centering
            \includegraphics[width=0.75\linewidth]{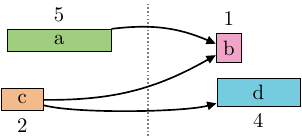}
            \\
            Latency is 6 (Optimal)
            \caption{from coloring $c_{uw}$}
            \label{fig:hetero-weightedcoloring-schd-unweighted}
        \end{subfigure}
        \caption{4 heterogeneous transactions and the results of the \gls{Greedy Schedule} given weighted colorings\ (\ref{eq:minimal-weighted-coloring-all}).}
        \label{fig:hetero-weighted-unweighted}
    \end{figure*}
    It is known that the coloring (\ref{eq:minimal-weighted-coloring}) using 3 colors is a minimal \emph{weighted} coloring for this graph. Denote it by $c_w$.
    \begin{equation}
        \small
                \label{eq:minimal-weighted-coloring}
        c_w = \left\{\begin{array}{@{}c@{}}
                         a\\d
        \end{array}\middle|\begin{array}{@{}c@{}}
                               b
        \end{array}\middle|\begin{array}{@{}c@{}}
                               c
        \end{array}\right\}
                \hspace{1em}
        c_w' = \left\{\begin{array}{@{}c@{}}
                          c
        \end{array}\middle|\begin{array}{@{}c@{}}
                               a\\d
        \end{array}\middle|\begin{array}{@{}c@{}}
                               b
        \end{array}\right\}
                \hspace{1em}
        c_{uw} = \left\{\begin{array}{@{}c@{}}
                            a\\c
        \end{array}\middle|\begin{array}{@{}c@{}}
                               b\\d
        \end{array}\right\}
    \end{equation}
        \clonelabel{eq:reordered-minimal-weighted-coloring}{eq:minimal-weighted-coloring}
        \clonelabel{eq:minimal-unweighted-coloring}{eq:minimal-weighted-coloring}
        \clonelabel{eq:minimal-weighted-coloring-all}{eq:minimal-weighted-coloring}
    The schedule produced from $c_w$ is shown in \autoref{fig:hetero-weightedcoloring-schd-weighted}.
    Its latency is $\Lt[\ell]{\cdot} = 5+1+2 = 8$, and it is not optimal.
    But the reordering $c'_w$ (\ref{eq:reordered-minimal-weighted-coloring}) of the $c_w$ does yield an optimal schedule (\autoref{fig:hetero-weightedcoloring-schd-weighted-optimal}).

    Interestingly, the optimal schedule in \autoref{fig:hetero-weightedcoloring-schd-unweighted} can also be produced from the \emph{unweighted} minimal coloring $c_{uw}$ (\ref{eq:minimal-unweighted-coloring}).
    \clearpage

    \section{Related Work}
    \label{chap:related}

    There is a large body of works on improving the throughput of the transaction ordering mechanism, with many reporting ordering throughput of 10s of thousands and even up to 1 million \gls{tx/s} in certain settings~\cite{hotstuff, FireLedger, Jeolton, Narwhal, ResilientDB, Dumbo, Kauri}.

FastFabric~\cite{FastFabric} uses a combination of optimizations in order to increase the throughput of HyperLedger Fabric (HLF) from $\approx$ 1,000 \gls{tx/s} to $\approx$ 20,000 \gls{tx/s}.
    HLF uses an optimistic concurrency control for executing transactions known as \emph{Execute-Order-Validate} (XOV)~\cite{Alysson2017}.
    Alas, its optimistic concurrency control causes a large percentage of transactions to abort, reducing the effective throughput of the system dramatically.

    Kotla and Dahlin~\cite{KD04} proposed CBASE for parallelizing Byzantine \gls{SMR} by placing a parallelizer between the consensus module and the multithreaded execution layer.
    The
    parallelizer generates a conflicts graph and turns it into a \gls{DAG} according to the transactions' ordering.
    Execution threads execute the transactions according to the \gls{DAG}.
    ParBlockchain~\cite{ParBlockchain} computes a \gls{DAG}, similar to CBASE, but separately per block.

    Enabling concurrent maintenance and access to the \gls{DAG} itself can become a bottleneck too.
    This is addressed in~\cite{COS}, where coarse-grain locking, fine-grain locking, and wait-free implementations are proposed and evaluated.

    Block-STM~\cite{BlockSTM} avoids the need to know transactions' read-sets and write-sets\ by  speculatively executing transactions concurrently and committing a transaction only if it and all of its previous transactions are already validated.
    Invalid transactions are re-executed.

    The above parallelization efforts rely on the total ordering service in the \emph{Order-Execute} (OX) paradigm to break symmetry between conflicting transactions in a deterministic manner, thereby ensuring the replicated state machine semantics.
    Instead, in this work we explored other options that offer greater concurrency potential for the scheduling, such as~coloring.

    Theoretical aspects of scheduling, and in particular variants that try to minimize the total execution time of a batch of tasks with time durations, have been extensively studied.
    Many such works~\cite{AG17,AKLM+15,npc-guide} search for a schedule in which all tasks meat their execution deadline.
    In other cases, conflicts express which jobs cannot be placed on the same machines~\cite{AMPT23,BJW94,ELW16}.
    Other works~\cite{BBB+16,ullman1975np, ullman1973polynomial,Wo00} assume a partial order of dependencies is provided which dictates what tasks must terminate before others start.
    Often, jobs arrive continuously, in which case an online competitive ratio is seeked~\cite{ACL18,AE02,AG17,AKLM+15,AMPT23,APT22,ELW16,ENW02,LST90}.

    In deterministic databases~\cite{DetOverview,Granola,Bohm,PMV,Caracal,Calvin,Aria}, all replicas involved in the execution of transactions must ensure consistent serializability throughout the system.
    The stated motivation for these databases includes lack of concurrency control related to aborts and deadlocks, and simplified commit decisions, replication and failure recovery~\cite{Caracal}.
    In many such systems~\cite{Bohm,PMV,Caracal,Calvin,Aria}, batches include thousands of transactions or more, meaning that the potential maximal concurrency level is higher than the number of available cores.
    For these, defining and proving optimality becomes harder.
    Our lower bounds therefore apply to them too.

    \clearpage
    \section{Discussion} \label{chap:discussions}

    Our fundamental contribution includes identifying an inherent performance vulnerability resulting from carrying the consensus total order onto the serialization order of transactions within a block, and launching a formal study about how to avoid this while ensuring deterministic serializability.
    The change we proposed to the semantics of the total ordering to be projected only onto the blocks, without restricting the logical order of transactions within the same block.
    This new BSXO paradigm, also has substantial implications in the area of speeding up the execution of smart contracts and the overall performance of blockchains.

    We describe a generic framework for \gls{ASMR}, with an injectable \gls{scheduler} responsible for scheduling decisions and execution.
    Our framework is independent of the specific consensus protocol and apply to most of the distributed models ranging from crash failures and up to Byzantine failures.
    It can be adjusted for various block mining and validation, with the added benefit that miners no longer have complete control over the logical order of transactions within a block.
    Graph scheduling allows for representing schedules whose correctness is independent of the architecture used for execution.
    They also allow for defining a universal measurement (latency) that we use for optimizing execution time at the semantic level of the \gls{ASMR}. 
    We prove that the optimization problem is \gls{NPH}.
    We solve the homogenous case using Minimal GCP and prove an equivalence to it.
    We show a technique that helps to reduce average transaction latency and tail latency without compromising the block latency.
    Even though finding a minimal coloring is an \gls{NPH} problem, we believe that it may be possible to use modern SAT solvers~\cite{marijn-solver} for finding minimal colorings.
    Such solvers are quite effective for small to medium instances.
    Further,
in empirical evaluations that we have performed for the greedy coloring algorithm, we have found that for a large selection of graphs, the number of colors it finds is within $1$ of the chromatic number.
    Additionally, its throughput when executing on a standard desktop computer was equivalent to over $500$ Ktps for blocks of up to $5$K transactions.
    This suggests that if we settle for an almost minimal coloring, then the task of coloring is not a performance~bottleneck.
    
    We showed that the heterogeneous case is even more complex to solve and remains largely unsolved.
    Yet, on the positive side, we were able to reduce the problem of optimal scheduling to that of finding level schedules for $\GREEDYA$.

    We also note that in~\appref{batched}, we present a batched execution scheduler, in which all transactions of color $i$ are started immediately after \emph{all} transactions of color $i-1$ finish.
    This serves as another exmple of potential schedulers.
    While this simple model obtains optimal execution time for homogeneous transactions, in the case of heterogeneous transactions, it leads to longer execution times than graph scheduling.
    This model emphasizes the strength of this novel graph scheduling model.

    \clearpage
    \providecommand\makebibliography{\luaexec{require("lib/cls/bbl").currentEngine:print()}}
    \makebibliography
    \clearpage
    \appendix
    \section{The Necessary Number of Phases for Execution}\label{app:bound-proof}
    \paragraph{Formal Motivation.}
    \autoref{fig:gap} shows a specific example for a conflict graph whose maximum concurrency level is as low as two phases, while the maximal number of necessary phases, $n$, can be achieved by requiring a specific logical total ordering.
    Below, we show that situations in which the total ordering order significantly reduces the potential concurrency are quite common.
    
    We use the term \emph{the necessary number of phases for execution} given some total order to refer to the minimum number of phases needed to execute all transactions while still forcing the logical order of the total order.
    A phase cannot contain any conflicting transactions, But transactions can be moved to earlier phases if they do not conflict with any transaction in any of the phases between their original phase and the one they are being shifted to. This also applies to moving transactions to later phases. We can completely eliminate a phase if we move all of its transactions to other phases. 

    When starting the above process on a phase sequence that corresponds to a given total order, we can reduce the number of phases without violating the logical order restriction.
    This process obviously can be executed until, at some point, we can no longer eliminate any of the remaining phases.
    We state that it is always possible to use this process to create a minimum number of phases or, in other words, the necessary number of phases for execution. 
    We use the term \emph{the concurrency level} of the total ordering to refer to $\frac{n}{p}$ when $n$ is the number of transactions, and $p$ is the necessary number of phases for execution while maintaining the logical order of the total ordering.
    The smaller the number of phases, the higher the concurrency level~is.

    In a follow-up work, we study various types of conflict graphs in terms of their longest conflict chain and chromatic number.
    The significance of a long conflict chain is that it enables a malicious miner or validator to launch the following simple performance attack:
    If the order of transactions within the block is consistent with the longest chain $ch$, then adhering to this order implies that the execution would require at least $ch$ phases.
    The larger the ratio between $ch$ and the chromatic number, the more effective this attack is since more potential concurrency is prevented.
    
    Here, we extend the example from  \autoref{fig:gap} to a more broad collection of conflict graphs and also show that there are various orderings that can achieve many levels of concurrency.
    This implies that an honest miner or validator that creates a block and chooses the ordering of the transactions within the block in an arbitrary concurrency oblivious manner, has a non-negligible chance of imposing an order that reduces the level of cocurrency, which is otherwise attainable for these same transactions.
    
    Recall that all transactions are safe to execute concurrently within the same phase.
    Also, if all transactions in a stage take the same time to execute, then the time it takes to execute the entire phase is equivalent to the time it takes to execute a single transaction when enough execution cores are available.
    Hence, in this subsection, we will measure the execution time in terms of the concurrency~level. 

    \begin{theorem}
        Given $n$ transactions and some nontrivial\footnote{A nontrivial conflict relation is one that contains at least one conflict, i.e. not empty.} conflict relation $\conflicts$, let $M$ be the minimum number of phases needed to execute all transactions safely with the maximum concurrency level (with any of the possible total orders)\footnote{Later in the paper, we show that for homogeneous transactions, $M$ is the chromatic number of the corresponding conflict graph.}; 
        and let $ch$ be the maximum length of a conflict chain ($tx_1 \conflicts tx_2 \conflicts tx_3 \conflicts \dots \conflicts tx_{ch}$).

        Then, there are \emph{at least} $\alpha$ distinct concurrency levels for the given collection of transactions where
        $\alpha$ is:
        $$\alpha = \left\lceil \frac{ch - (M-1)}{M-1} \right\rceil.$$
    \end{theorem}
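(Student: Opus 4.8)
The plan is to reinterpret the ``necessary number of phases'' combinatorially and then to count how many distinct values it can take as the total order varies. Given a total order $<$ on the $n$ transactions, orient every conflict edge from the earlier to the later transaction; this yields a \gls{DAG} $D_<$, and the phase-reduction process described above assigns each transaction the length of the longest directed path ending at it. Hence the necessary number of phases $p(<)$ equals the number of vertices on a longest directed path in $D_<$. Since every directed path in $D_<$ is a simple path in the conflict graph, $p(<) \le ch$ for all $<$, with equality for any order consistent with a maximum conflict chain. At the other extreme, ordering the transactions by the colors of a minimal coloring makes colors strictly increase along any directed path, so $p(<) \le M$ there; combined with \autoref{thm:minhomolt-eq-chromatic} and \autoref{thm:lower-bound} (which give $M = \chi(\confGraph{\Block}) = \MinLt[\unitlen]{\cdot} = \min_< p(<)$), this order realizes $p(<) = M$. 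Thus the attainable values of $p$ all lie in the integer interval $[M, ch]$, both endpoints are attained, and (since $\conflicts$ is nontrivial) $M \ge 2$. Because the concurrency level $n/p$ is a strictly decreasing function of $p$, distinct values of $p$ give distinct concurrency levels, so it suffices to show that at least $\alpha$ distinct phase counts occur.

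The heart of the argument is a ``no large gaps'' lemma: whenever some order attains $p = b$ with $b > M$, there is another order attaining some value $p' \in [\,b-(M-1),\, b-1\,]$. Granting this, let $M = q_1 < q_2 < \cdots < q_m = ch$ be the distinct attainable phase counts, sorted. Starting from an order with $p = ch$ and repeatedly applying the lemma produces a strictly decreasing sequence of attainable values reaching down to $M$, so consecutive sorted values satisfy $q_{i+1}-q_i \le M-1$. Telescoping gives
\[
 ch - M \;=\; \sum_{i=1}^{m-1}\bigl(q_{i+1}-q_i\bigr) \;\le\; (m-1)(M-1),
\]
whence $m \ge 1 + \tfrac{ch-M}{M-1} = \tfrac{ch-1}{M-1} \ge \tfrac{ch-(M-1)}{M-1}$ (the last step using $M\ge2$), and since $m$ is an integer, $m \ge \bigl\lceil \tfrac{ch-(M-1)}{M-1}\bigr\rceil = \alpha$. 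This yields at least $\alpha$ distinct phase counts and therefore at least $\alpha$ distinct concurrency levels.

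I expect the ``no large gaps'' lemma to be the main obstacle, and it is precisely where the chromatic number $M$ enters. The naive idea of interpolating between the chain order and the coloring order by adjacent transpositions fails: reversing a single oriented edge can splice a long path feeding into one endpoint together with a long path leaving the other, changing $p$ by more than $M-1$ in one step (already for a simple path graph). The fix I would pursue is a controlled surgery on an order realizing $p=b$: fix a longest directed path $u_1 \to \cdots \to u_b$ and relocate a carefully chosen vertex or short initial segment of this path (for instance, turning its source into a sink, or flattening a minimal prefix) so that the longest path is provably shortened by at least one. The upper bound $p' \ge b-(M-1)$ should follow from the fact that, with only $M$ colors available, any directed path meets each color class at most once, so the relocation cannot destroy more than $M-1$ vertices' worth of path length before a competing path of comparable length reappears. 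Making this simultaneous ``drops by at least $1$, by at most $M-1$'' guarantee precise is the crux; the counting step above is then immediate.
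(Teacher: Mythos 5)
Your reduction of the theorem to a ``no large gaps'' lemma is sound, and the telescoping count that converts that lemma into the bound $m \ge \lceil (ch-(M-1))/(M-1)\rceil$ is correct (as is the identification of the necessary phase count with the longest directed path in the conflict graph oriented by the total order). But the lemma itself --- that whenever $p=b>M$ is attainable, some value in $[\,b-(M-1),\,b-1\,]$ is also attainable --- is exactly the content of the theorem, and you leave it unproven. You correctly observe that adjacent transpositions can change the longest path by more than $M-1$, but your proposed repair (``controlled surgery'' relocating a vertex or prefix of a longest path, with the claim that the drop is at least $1$ and at most $M-1$) is only a strategy, not an argument; nothing in the proposal establishes either side of that two-sided guarantee, and you say yourself that making it precise is the crux. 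As written, the proof has a hole precisely where the hypothesis on $M$ must do its work.

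For comparison, the paper avoids perturbing an arbitrary order altogether. It fixes an optimal phase partition $\varphi_1,\ldots,\varphi_M$ and, for each target $k$ with $M\le k\le ch$, builds a concrete order from the length-$k$ prefix of the longest conflict chain: each chain transaction is its own singleton phase, every phase $\varphi_i$ that loses a transaction to the prefix is merged into one of those singletons, and the at most $M-2$ untouched (``residual'') phases are prepended. Minimizing this sequence yields an attainable phase count $k+t_k$ with $t_k\in\{0,\ldots,M-2\}$ (at least $k$ because of the chain, at most $k+t$ by construction), and a direct pigeonhole count over $k=M,\ldots,ch$ --- each distinct value being hit by at most $M-1$ choices of $k$ --- gives $\alpha$ distinct values. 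This sidesteps the need to control how the longest path responds to local modifications, which is where your approach stalls. If you want to salvage your route, the cleanest fix is to prove your lemma by this same embedding construction rather than by surgery on a longest path.
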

    \begin{proof}
        Assume that the maximal concurrency is deduced from the phases $\varphi_1,\ldots,\varphi_M$.
        The minimality of $M$ implies that all pairs of distinct phases $\varphi_i$ and $\varphi_j$ must have a pair of conflicting transactions.
        Now for any $M \leq k \leq ch$  we attempt to create a total ordering based on a prefix of the given conflict chain: $$tx_1 \to tx_2 \to tx_3 \to \cdots \to tx_k$$ and try to embed the phases $\varphi_1,\ldots,\varphi_M$ into this ordering in a legal way.
        Here, we start with each transaction $tx_1,\ldots,tx_k$ being its own singleton phase.

        Now, we denote the phases $\varphi_1,\ldots,\varphi_M$ after removing the transactions $tx_1,\ldots, tx_k$ as $\varphi'_i \triangleq \varphi_i \setminus \left\{tx_1,\ldots, tx_k\right\}$.
        We look at the phases after removal and distinguish between those that have changed, i.e. $\varphi_i\neq\varphi'_i$, and those that have not, i.e. $\varphi_i=\varphi'_i$. 
        We call the former group of phases \emph{mergeable phases} and the latter \emph{residual phase}s; and denote the residual phases as $\varphi_{r_1},\ldots,\varphi_{r_t}$ and the remaining mergeable phases as $\varphi'_{m_1},\ldots,\varphi'_{m_{M-t}}$ when $t$ is the number of residual phases.

        We now focus on producing a phase sequence from which we will derive a total order.
        The first step we take is to merge each of the mergeable phases $\varphi'_{m_i}$ with \emph{one} of the singletone phases represented by the transaction $tx_i$ such that $tx_i$ was originally in it (i.e., $tx_i \in \varphi_{m_i}$).
        Each one of the merge operations is safe since it does not create phases that contain conflicts.
        This is because the merge $\varphi'_{m_i}\cup\left\{tx_j\right\}$ is a subset of $\varphi_{m_i}$ that is conflict-free in itself.
        The second step we take is to add the residual phases at the beginning of the phase sequence.
        This creates the phase sequence $P=P_1,\ldots,P_t,P_{t+1},\ldots,P_{t+k}$ that \autoref{fig:motivation-proof-embeddings} depicts.
        \ifcoloredproof 
        \begin{center}
        \else 
        \begin{figure}[h]
        \centering
        \fi 
            \centering
            \includegraphics[width=0.8\linewidth]{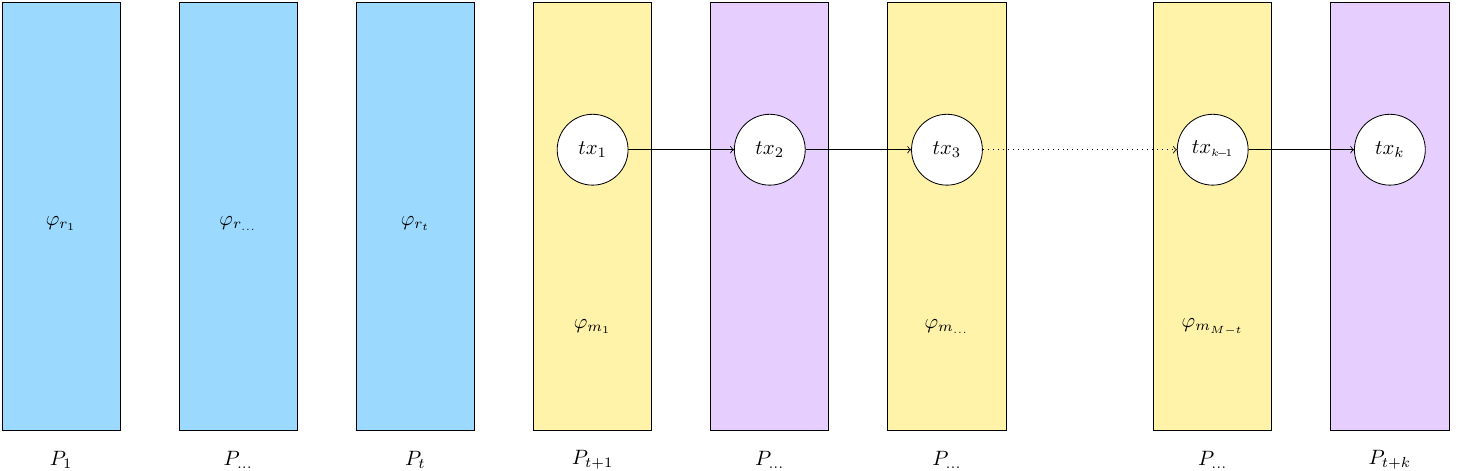}
            \ifcoloredproof 
                \captionof{figure}{Illustration of creating the phase sequnece with $t+k$ phases.}
            \else 
                \caption{Illustration of creating the phase sequnece with $t+k$ phases.}
            \fi 
            \label{fig:motivation-proof-embeddings}
        \ifcoloredproof 
        \end{center}
        \else 
        \end{figure}
        \fi 

    The phase sequence $P$ may be reduced to phase sequences that use fewer phases using the phase elminimation techniques we described in the above paragarphs. 
    After the elimination of phases, we reach some \emph{minimal} phase sequence $P^*=P^*_1,\ldots,P^*_\ell$.
    We use \autoref{fig:motivation-proof-smashed} to illustrate the minimal phase sequence $P^*$ that is created after eliminating some of the phases in $P$.
    We use $\ell$ to denote the number of phases in $P^*$.
    Obviously, $\ell \geq k$ since the conflict chain $tx_1 \conflicts tx_2 \conflicts tx_3 \conflicts \dots \conflicts tx_{k}$ cannot be executed in less than $k$ phases.
    Furthermore, the residual phases cannot be reduced to fewer phases due to the minimality of $M$.
    
    Thus, the only way for $\ell$ to be strictly less than $k+t$ is for us to merge the suffix of the residual phases $P_1,\ldots,P_t$ with a prefix of the other phases $P_{t+1},\ldots,P_{t+k}$; some of the transactions can be moved between phases; however, the conflict chain must be preserved with $k$ distinct phases.
    In total, the minimal phase sequence $P^*$ has a length of $\ell$ phases when $\ell=k+t_k$ for some $t_k\in\left\{0,\ldots,t\right\}$.
    For each $M \leq k \leq ch$, we use $t_k$ to denote the value determined by the length of $P^*$ created in the process described above. 

        \ifcoloredproof 
        \begin{center}
        \else 
        \begin{figure}[h]
        \centering
        \fi 
            \includegraphics[width={\dimexpr 0.68\linewidth}]{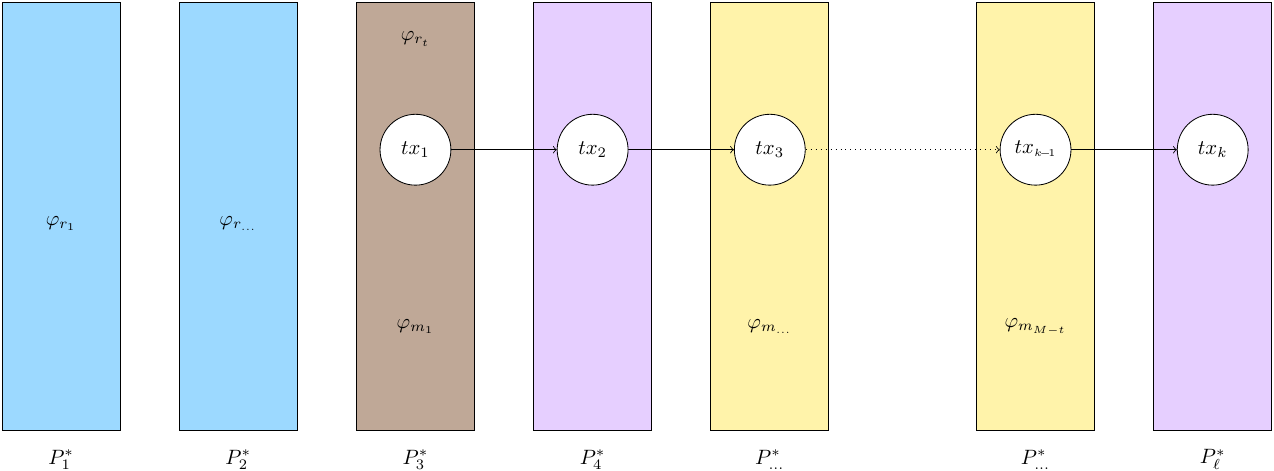}
            \ifcoloredproof 
                \captionof{figure}{Illustration of the compated phases sequence.}
            \else 
                \caption{Illustration of the compated phases sequence.}
            \fi 
            \label{fig:motivation-proof-smashed}
        \ifcoloredproof 
        \end{center}
        \else 
        \end{figure}
        \fi 

    Now, we attempt to estimate combinatorially how many distinct values $\left(k+t_k\right)$ there are.
    Note that for any $M \leq k \leq ch$, the value of $t_k$ must be at most the number of residual phases.
    We also know that this number cannot exceed $M-2$: In a non-trival conflict relation, the longest conflict chain must include at least one conflict between two transactions; each of these two transactions must be included in a different phase $\varphi_i$; therefore, there must be at least two mergeable phases.
    In other words, we are trying to estimate the number of values in the set $\left\{k+t_k\mid k=M,\ldots,ch\right\}$ when $t_k\in\left\{0,\ldots,M-2\right\}$.
    
        \ifcoloredproof 
        \begin{center}
        \else 
        \begin{figure}[h]
        \centering
        \fi 
            \includegraphics[width=\linewidth]{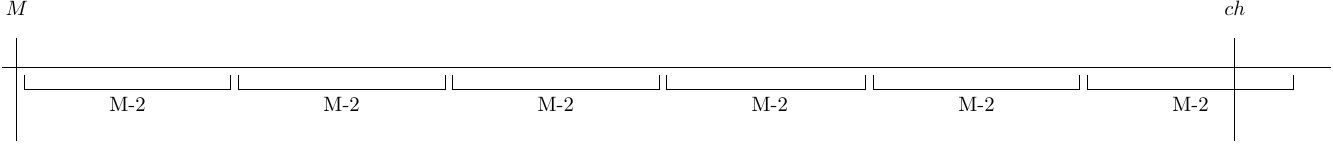}
            \ifcoloredproof 
                \captionof{figure}{Estimating the minimal number of distinct values for $\left\{ k+t_k \right\}$}
            \else 
                \caption{Estimating the minimal number of distinct values for $\left\{ k+t_k \right\}$}
            \fi 
            \label{fig:motivation-minimize}
        \ifcoloredproof 
        \end{center}
        \else 
        \end{figure}
        \fi 

        \autoref{fig:motivation-minimize} illustrates how to reach the minimum number of distinct values in the given scenario.
        This is done by collapsing every $M-1$ consecutive values of $k$ into one distinct value of $k+t_k$ by assigning $t_k$ the respective values $M-2, M-1, \ldots, 1, 0$.
        Thus, the number of distinct values is at least $\alpha$ when $$\alpha = \left\lceil \frac{\text{\# of values between }M\text{ and }ch}{\text{\# of values between }0\text{ and }M-2} \right\rceil = \left\lceil \frac{ch - M + 1}{(M-2) + 1} \right\rceil = \left\lceil \frac{ch - (M-1)}{M-1} \right\rceil$$
        
        Of course $\alpha$ is also a lower bound on the number of concurrency levels, by definition.
    \end{proof}

\section{On the Performance Vulnerability's Potential Damage}
\label{app:attack}

\subsection{Finding Longest Paths}
\srdsref{fig:gap} and \srdsref{sec:vulnerability} suggest that requiring a specific total order can result in inherently poor parallelism, especially when specially crafted by a malicious block creator.
In \appref{bound-proof}, we prove that the existence of a simple path of $k$ vertices in the conflict graph corresponds to total orders that imply a minimal latency of at least $k$ (for homogeneous blocks).
We also know from \srdsref{chap:homo} that the chromatic number of the conflict graph is the lowest possible latency for the corresponding set of transactions and conflicts.

The combination of these two facts can be used to evaluate the ratio between the worst latency and the best latency for a given block.
That is, assuming $\ell$ is a lower bound for the longest simple path in a given conflict graph and $\chi$ is an upper bound for its chromatic number, we can use $\frac{\ell+1}{\chi}$ as a lower bound on the ratio between the worst possible latency and the best attainable latency.

\subsection{Empirical Estimation}
\autoref{app:alg:simulation} uses a combination of a greedy coloring algorithm and a ``random-walk'' algorithm for heuristically searching for a longest simple path (the \textproc{est-$\ell$ path} function), based on the ideas presented above.
This corresponds to an honest, but uninformed, miner choosing an arbitrary total order for the block.
Hence, this algorithm is also an empirical estimation for the expected ratio as well as a \emph{non-strict} lower bound.

\subsection{Empirical Evaluation}
\begin{figure}[h]
    \begin{center}
        \hspace{-2.6em}
    \includegraphics[width=1.08\linewidth]{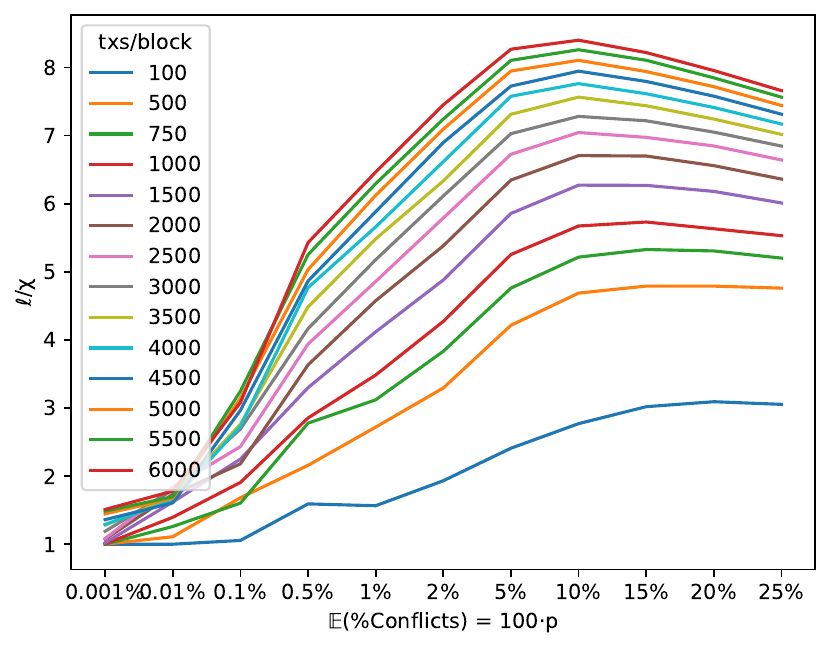}
    \end{center}
    \caption{Empirical lower-bound estimation of the ratio between the length of a longest simple path and the chromatic number.}
    \label{fig:gnp_k100}
\end{figure}

We executed \autoref{app:alg:simulation} using $100$ samples for each data point. 
We used 14 different block sizes ranging from $100$ txs/block and up to $6,000$ txs/block.
We randomly selected $G(n,p)$ graphs for the sample conflict graphs used by the algorithm, while using various values of $p$.
Since it is known that for $G(n,p)$ graphs, $n\cdot p$ is the expected number of edges in the graph, we chose the values of $p$ to represent the percentage of conflicts in the graph.
We tested values of $p$ that correspond to conflict percentages ranging from $0.001\%$ up to $25\%$.
The exact values we used are detailed in the X-axis of \autoref{fig:gnp_k100}.

\begin{algorithm}
        \begin{algorithmic}[1]
        \Function{est-$\ell$path}{$G=\left( V,E \right)$}
            \State \textbf{order} vertexes in $V$ as $v_1,v_2,\ldots,v_{n=\left|V\right|}$
            \State $\ell \gets (0,\overset{n}{\ldots\ldots},0)$
            \State $P \gets (v_1,\overset{n}{\ldots\ldots},v_n)$
            \For{$i=1,\ldots,n$}
                \For{$j=i+1,\ldots,n$}
                    \If{$(v_i,v_j) \in E$}
                        \State $\ell[j] \gets \max\left( \ell[j], \ell[i] + 1 \right)$
                        \State \algorithmicif\ $\ell[j] < \ell[i] + 1$ \algorithmicthen\ $P[j] \gets (P[i] \parallel v_j)$
                    \EndIf
                \EndFor
            \EndFor
            \State \Return $\max(\ell)$
        \EndFunctionExplicit
        \Function{est-$\chi$}{$G=\left( V,E \right)$}
            \State \Return \Call{GreedyColoring}{$G$} \textbf{using} vertex  order\Statex\hspace{4em} by descending degree
        \EndFunctionExplicit
        \Statex
        \State $R \gets []$
        \Repeat
            \State $G \gets$ random conflict graph
            \State $r \gets \left(\dfrac{\Call{est-$\ell$ path}{G}+1}{\Call{est-$\chromaticnumbersymbol$}{G}}\right)$
            \State $R \gets R \parallel [r]$
        \Until reached $K$ samples
        \State \textbf{yield} $\text{avg}(R)$
    \end{algorithmic}
    \caption{Pseudocode for empirical simulation}
    \label{app:alg:simulation}
\end{algorithm}

\subsection{Conclusions}
\autoref{fig:gnp_k100} depicts the results of this experiment.
The experiment shows that as the block size grows so does the ratio, for all values of $p$.
Also, the ratio is larger than $1$ for all instances expect for a few cases when both $p$ and the number of transactions are both small (the bottom left corner of \autoref{fig:gnp_k100}).
When the percent of conflicts is at least $0.5\%$, the difference is at least $56.50\%$ (for $p=1\%$ and $100$ tx/block). 
The difference peaks at $\bm{7,400.84\%}$ (for $p=10\%$ and $6,000$ txs/block).
On average the difference was $\bm{371.02\%}$.
Hence, even a non-malicious block creator that places transactions in a random order inside its blocks may cause non-negligible performance loss if the logical serialization order must respect this unintentional order.

\section{Generic Framework Definition \& Proofs}
\label{app:asmr-framework}
In our framework, blocks' transactions are executed using a swappable \gls{scheduler},
as defined~below:
    \begin{definition}[\Glsxtrshort{scheduler}]
        A \Gls{scheduler} $BR$ is an algorithm used to execute transactions in a block.
        It provides its functionality using three components:
        \begin{enumerate}
            \item \underline{Scheduling Decisions}: Defines how to create the schedule deterministically based on the block.
            It exports the following two methods:
            \begin{itemize}
                \item $BR\!::\!\textproc{make-schedule}(\texttt{transactions}, \texttt{constra}\-\texttt{ints}) \to \texttt{schedule}$ computes the schedule given the constraints (e.g. conflicts)
                \item $BR\!::\!\textproc{validate-schedule}(\texttt{transactions}, \texttt{cons}\-\texttt{traints}, \texttt{schedule}) \to \texttt{boolean}$ validates a schedule given the constraints
            \end{itemize}

            \item \underline{Scheduler}: Performs the execution of the transactions according to the given schedule.
            It handles operations invoked on objects in the global state; it also ensures that the invocation of an operation on an object considers the most recent version of the object according to the logical order of operations.
            It uses the following two methods:
            \begin{itemize}
                \item $BR\!::\!\textproc{init-execution}(\texttt{schedule}, \texttt{global sta}\-\texttt{te}) \to \texttt{execution}$ initializes the required data-structures for the execution of the transactions according to the schedule.
                \item $BR\!::\!\textproc{start-execution}(\texttt{execution})$ begins the execution.
            \end{itemize}

            \item \underline{Runtime}: Returns the results of the execution of transactions and the changes to the global state.
            Uses the following three methods:
            \begin{itemize}
                \item $BR\!::\!\textproc{is-execution-running}(\texttt{execution}) \to \texttt{boolean}$ test if the execution is still running or if transaction results have not been received yet
                \item $BR\!::\!\textproc{next-execution-results}(\texttt{execution}) \to \texttt{results}$ emits transactions' results as soon as the transactions finish executing
                \item $BR\!::\!\textproc{state-changes}(\texttt{execution}) \to \texttt{state c}\-\texttt{hanges}$ emits the global state changes
            \end{itemize}
        \end{enumerate}
    \end{definition}
    \begin{definition}[Execution of a Block]
        Given a \gls{scheduler} $\A$, an execution of $\A$ for some block $\Block$ is any of the executions of \textsc{ExecuteBlock} in \autoref{alg:exec-block} 
        given the block $\Block$ and some state.
    \end{definition}
    \begin{algorithm}[]
        \caption{\small Pseudocode for the main loop given the \gls{scheduler} $\A$}
        \label{alg:main-loop}
        \footnotesize
\begin{algorithmic}[1]
    \State $s \gets$ current global state
    \State Init consensus layer $CL$
    \Loop
        \State $\Block \gets$ next unprocessed block from $CL$ \label{alg:main-loop:fetch}
        \If{$\Block$ is not valid}
            \State \textbf{emit} errors for all transactions in $\Block$
            \State \textbf{go to} line~\ref*{alg:main-loop:term-batch}
        \EndIf
        \State state-changes, results $\gets \Call{execute-block}{\Block, s}$ using $\A$
        \State \textbf{send} transaction results to clients as necessary
        \State $s \gets s$ after \textbf{applying} state-changes
        \State \textbf{save} $s$ as the current global state
        \State \textbf{mark} block $\Block$ as processed \label{alg:main-loop:term-batch}
    \EndLoopExplicit
    \algstore{alg1}
\end{algorithmic}
        \label{alg:exec-block}
\newcommand{\C}{\mathcal C}
\begin{algorithmic}[1]
    \algrestore{alg1}
        \Procedure{execute-block}{$\Block$, state}\label{alg:main-loop:exec-block:start}
            \State $\T \gets $ transactions of block $\Block$
            \State $\C \gets $ prepare constraints for $\T$ of block $\Block$
            \State $\Schd \gets BR\!::\!\Call{make-schedule}{\T, \C}$
            \State exec $\gets BR\!::\!\Call{init-execution}{\Schd, \text{state}}$
            \State $BR\!::\!\Call{start-execution}{\text{exec}}$
            \While{$BR\!::\!\Call{is-execution-running}{\text{exec}}$} \label{alg:main-loop:exec-block:while-results}
                \State\textbf{emit} $BR\!::\!\Call{next-execution-results}{\text{exec}}$ as transactions results \label{alg:main-loop:exec-block:client-results}
            \EndWhile
            \State state-changes $\gets BR\!::\!\Call{state-changes}{\text{exec}}$
            \State \Return state-changes, all transaction results
        \EndProcedureExplicit \label{alg:main-loop:exec-block:end}
\end{algorithmic}
    \end{algorithm}
    \subsubsection{Proof of Correctness}
    \begin{lemma}
        \label{thm:detr-critirion2}
        \gls{scheduler} $\A$ is \emph{sequentially deterministic}, if and only if,
        for every block~$\Block$:
        \begin{enumerate*}[label={\smaller(}\roman*{\smaller)}]
            \item Every execution of $\A$ for $\Block$ is \gls{serializable} and,
            \item \Gls{scheduler} $\A$ is deterministic.
        \end{enumerate*}
    \end{lemma}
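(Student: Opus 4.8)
The plan is to prove the biconditional by unfolding the definition of sequential determinism, which by construction already decomposes into a determinism requirement together with the existence of a single serial execution equivalent to \emph{all} executions of a block. The only nontrivial ingredient is that execution equivalence $\equiv$ is a genuine equivalence relation; in particular it is transitive, since equivalence is defined as yielding the same results for every initial global state. I would state this transitivity explicitly up front, as it is the workhorse of the harder direction.

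For the forward direction, I would assume $\A$ is sequentially deterministic and read off both conditions. Condition~(ii) is immediate: determinism is literally part of the definition of sequential determinism. For condition~(i), the definition supplies, for each block $\Block$, a serial execution $\tau_\Block$ with $\sigma_\Block \equiv \tau_\Block$ for every execution $\sigma_\Block$ of $\A$ for $\Block$; since each $\sigma_\Block$ is then equivalent to a serial execution, each is \gls{serializable} by definition.

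For the reverse direction, I would fix a block $\Block$ and assume (i) and (ii). The goal is to exhibit a single serial execution $\tau_\Block$ equivalent to \emph{every} execution of $\A$ for $\Block$; determinism is then inherited directly from~(ii), completing sequential determinism. The key step is to promote one witness to a universal one: pick any execution $\sigma_\Block$ of $\A$ for $\Block$, and invoke~(i) to obtain a serial $\tau_\Block$ with $\sigma_\Block \equiv \tau_\Block$. For an arbitrary execution $\sigma'_\Block$, determinism~(ii) gives $\sigma'_\Block \equiv \sigma_\Block$, and transitivity of $\equiv$ then yields $\sigma'_\Block \equiv \tau_\Block$. Hence the single $\tau_\Block$ is a common serial witness, which is exactly what sequential determinism demands.

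The main obstacle is less a mathematical difficulty than a careful handling of the degenerate case in which a block admits no executions at all: there the existential ``there exists a serial $\tau_\Block$'' has no witnessing $\sigma_\Block$ to seed it. I would dispatch this either by noting that the framework's main loop (\autoref{alg:exec-block}) always produces at least one execution per block, so the case does not arise, or by observing that the requirement is vacuously met when the set of executions is empty. Beyond this bookkeeping, no heavy machinery is needed, since everything reduces to transitivity of $\equiv$ applied once.
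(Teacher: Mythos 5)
Your proof is correct. The paper states this lemma without providing a proof at all, treating it as immediate from the definitions, so there is no argument to diverge from; your unfolding of the definition --- reading off determinism and serializability in the forward direction, and using determinism together with transitivity of $\equiv$ to promote a single execution's serial witness to a common witness for all executions of the block in the reverse direction --- is exactly the intended (omitted) reasoning, and your remark on the vacuous case of a block with no executions is a harmless piece of extra care.
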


    \begin{observation}
        \label{thm:detr-serial-equiv}
        Given a \gls{scheduler} $\A$ that is sequentially deterministic and some block $\Block$.
        Then, there is some serial execution $\tau_\Block$ for block $\Block$ such that for any execution $\sigma$ of $\A$ for $\Block$, $\sigma \equiv \tau_\Block$.
    \end{observation}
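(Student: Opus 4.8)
The plan is to derive the existence of a single serial reference execution from the two properties packaged in \autoref{thm:detr-critirion2}: that every execution of $\A$ for $\Block$ is \gls{serializable}, and that $\A$ is deterministic. If one works directly from the definition of a sequentially deterministic \gls{scheduler}, the claim is literally the second conjunct of that definition and there is nothing to do; the informative route, and the one I would take, is to recover it from the serializable-plus-deterministic criterion, since that is the form in which the property is typically established for concrete schedulers.

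First I would fix the block $\Block$ and select any one concrete execution $\sigma_0$ of $\A$ for $\Block$; since the main loop of \autoref{alg:exec-block} always yields at least one execution for a given block and starting state, such a $\sigma_0$ exists. Because $\A$ is sequentially deterministic, $\sigma_0$ is serializable, so there is a serial execution $\tau_\Block$ with $\sigma_0 \equiv \tau_\Block$. I would then fix this $\tau_\Block$ as the claimed common serial execution, anchored once and for all to $\sigma_0$.

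The second step is to show $\sigma \equiv \tau_\Block$ for an arbitrary execution $\sigma$ of $\A$ for $\Block$. Here I would invoke determinism: any two executions of $\A$ for the same block are equivalent, so in particular $\sigma \equiv \sigma_0$. Combining this with $\sigma_0 \equiv \tau_\Block$, and using that execution equivalence $\equiv$ is an equivalence relation (it is defined as producing identical results from every initial global state, hence reflexive, symmetric, and transitive), transitivity gives $\sigma \equiv \tau_\Block$, as required.

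The only point needing care is the quantifier order: the \emph{same} $\tau_\Block$ must work for \emph{every} $\sigma$. This is precisely why I would anchor $\tau_\Block$ to the fixed $\sigma_0$ rather than to the varying $\sigma$, and route every other execution through $\sigma_0$ by determinism. With that anchoring the argument reduces to a short chain of equivalences, and I do not expect any genuine obstacle beyond confirming transitivity of $\equiv$, which is immediate from its definition.
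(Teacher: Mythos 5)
Your argument is correct, but it is a longer route than the paper takes: the paper offers no proof at all for this statement because, as you yourself note in passing, it is verbatim the second conjunct of the definition of a sequentially deterministic \gls{scheduler} --- that is precisely why it is labelled an observation rather than a lemma. What you have actually written out is, in substance, the nontrivial (``if'') direction of \autoref{thm:detr-critirion2}: that serializability of every execution plus determinism of $\A$ together yield a single common serial witness $\tau_\Block$. Your handling of the quantifier order there is exactly right --- anchoring $\tau_\Block$ to one fixed execution $\sigma_0$ via its serializability witness and routing every other $\sigma$ through $\sigma_0$ by determinism and transitivity of $\equiv$ is the step that makes that direction of the criterion work, and it is worth recording (the paper leaves \autoref{thm:detr-critirion2} unproven as well). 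Two minor caveats: your derivation implicitly uses the forward direction of \autoref{thm:detr-critirion2} to extract ``serializable and deterministic'' from ``sequentially deterministic'' before reassembling the conclusion, which is a detour one would normally skip when the conclusion is already a clause of the hypothesis' definition; and the case where $\A$ admits no execution for $\Block$ should be dispatched as vacuous (any serial execution then serves as $\tau_\Block$) rather than by asserting that $\sigma_0$ exists.
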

    \begin{theorem}
        \label{thm:main-determinisitic}
        Assume that \gls{scheduler} $\A$ is deterministic.
        Then, \autoref{alg:main-loop} is also deterministic, i.e., for every possible (infinite) sequence of blocks, all executions generated by \autoref{alg:main-loop} for the given block sequence are pairwise equivalent.
    \end{theorem}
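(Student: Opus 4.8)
The plan is to prove the claim by induction on the length of the processed prefix of the (infinite) block sequence, exploiting that \autoref{alg:main-loop} executes blocks strictly sequentially. By the control flow of \textsc{execute-block}, the processing of a block begins only after the state changes of the previous block have been committed, so operations belonging to different blocks never overlap. Hence any execution $\sigma$ of the main loop decomposes canonically into per-block sub-executions $\sigma=\sigma_1\sigma_2\cdots$, and it suffices to control each $\sigma_i$ together with the committed global state passed on to block $i+1$.

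First I would observe that the schedule used for each block is fixed across all runs. The constraints $\mathcal{C}$ and the schedule $\Schd_i=\A\!::\!\textproc{make-schedule}(\T_i,\mathcal{C})$ are computed from the block alone --- its transactions and their conflicts --- and validation tests only syntactic quantities (sequence number, previous-block hash, and the like) that are determined by the fixed block sequence rather than by the execution state. Therefore, for a given block sequence, every run reaches the same validation verdict on each $\Block_i$ and, whenever $\Block_i$ is valid, feeds $\A$ exactly the same schedule $\Schd_i$.

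The induction hypothesis is that for two executions $\sigma,\sigma'$ of the main loop on the same block sequence, the committed global state after block $i$ is identical in both and $\sigma_1\cdots\sigma_i\equiv\sigma'_1\cdots\sigma'_i$. The base case is the shared initial state $s$. For the inductive step, both runs fetch the same $\Block_{i+1}$ and agree on its validity. If it is invalid, the loop only emits errors and performs no state change, so the committed state is unchanged and identical. If it is valid, both runs invoke $\A$ on $\Block_{i+1}$ from the common committed state $s_i$; by the assumed determinism of the \gls{scheduler} $\A$ --- any two executions of $\A$ for a fixed block are equivalent --- we have $\sigma_{i+1}\equiv\sigma'_{i+1}$, and since equivalent executions yield identical results and identical state changes from the common initial state $s_i$, the committed state $s_{i+1}$ is again identical in both runs. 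Because no operation of block $i+1$ overlaps with an earlier block, concatenating this with the hypothesis preserves equivalence, closing the induction; equivalence of the infinite executions then follows since it is witnessed on every finite prefix.

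The step I expect to be the main obstacle is bridging the definition of $\equiv$ --- phrased as having the same results for every initial global state --- with the state threading that drives the induction. Concretely, I must argue that two equivalent block-executions, when \emph{both} started from the same committed state $s_i$, produce not only the same client-visible results but also the same state changes, so that block $i+1$ is guaranteed to observe a common state in both runs. This is exactly the content of execution (conflict/view) equivalence in database theory, and I would isolate it as a small lemma feeding the inductive step. The remaining ingredients --- deterministic validation, the absence of cross-block interleaving, and the prefix-wise passage to the infinite sequence --- are then routine.
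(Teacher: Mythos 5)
Your proposal is correct and follows essentially the same route as the paper's proof: both decompose each run into per-block sub-executions (justified by the strictly sequential block processing of the main loop), apply the blockwise determinism of $\A$ to obtain equivalence of the corresponding sub-executions, and recombine by concatenation. The only difference is presentational --- where the paper invokes ``equivalence of concatenation'' in a single step, you unfold that step into an induction on prefixes that explicitly threads the committed global state, thereby surfacing the auxiliary fact (equivalent block-executions started from a common state yield identical state changes) that the paper leaves implicit.
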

    \begin{proof}
        Let $\sigma$ and $\sigma'$ be two executions of \autoref{alg:main-loop} using $\A$ for a given sequence of blocks $\left(\Block_i\right)_{i=1}^\infty$.
        Observe that for any pair of transactions $tx_1$ and $tx_2$, if $\earlierthan{tx_1}{tx_2}$, then the block in which $tx_1$ appears must be before the block of $tx_2$ in the total ordering (lines \ref{alg:main-loop:term-batch} and \ref{alg:main-loop:fetch}).
        In addition, all transactions belonging to the same block $\Block$ are concurrent, by definition, and line~\ref{alg:main-loop:fetch}.
        Therefore, each block $\Block$ is a contiguous set of transactions that are concurrent in $\sigma$.
        We can deduce from here that there exist two corresponding equivalent
        executions $\tau = \tau_{\Block_1}\cdot\tau_{\Block_2}\cdot\ldots\equiv\sigma$ and $\tau' = \tau'_{\Block_1}\cdot\tau'_{\Block_2}\cdot\ldots\equiv\sigma'$ such that for each block $\Block_i$, both $\tau_{\Block_i}, \tau'_{\Block_i}$ are two
        executions equivalent to executions of $\A$ for block $\Block_i$.

        Since $\A$ is deterministic, we know that $\forall_{\Block_i} \: : \: \tau_{\Block_i} \equiv \tau'_{\Block_i}$.
        Thus, from the equivalence of concatenation and the assumptions, we know that
        $ 
            \sigma \equiv
            \tau =
            \tau_{\Block_1}\cdot\tau_{\Block_2}\cdot\ldots \equiv
            \tau'_{\Block_1}\cdot\tau'_{\Block_2}\cdot\ldots =
            \tau' \equiv
            \sigma'
        $ 
        and from transitivity, we can deduce that $\sigma \equiv \sigma'$ as necessary.
    \end{proof}

    \autoref{thm:main-determinisitic} implies that when using a deterministic \gls{scheduler}, all replicas compute the same results for each transaction, and thus the framework makes for a valid replicated service.

    \begin{theorem}
        \label{thm:main-serializable}
        Assume that \gls{scheduler} $\A$ is serializable.
        Then, every execution generated by \autoref{alg:main-loop} is strictly \gls{serializable} when using a single replica.
    \end{theorem}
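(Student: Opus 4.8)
The plan is to lift the per-block serializability guarantee to a global strict serializability statement by exploiting the sequential way blocks are processed in \autoref{alg:main-loop}. First I would reuse the decomposition already established in the proof of \autoref{thm:main-determinisitic}: fix an execution $\sigma$ of the main loop for a block sequence $\left(\Block_i\right)_{i=1}^\infty$, and recall that because block $\Block_{i+1}$ is fetched and executed only after $\Block_i$ has been fully processed, $\sigma$ splits into contiguous per-block segments $\sigma_{\Block_1}\cdot\sigma_{\Block_2}\cdots$, where each $\sigma_{\Block_i}$ is an execution of $\A$ for $\Block_i$. Since $\A$ is serializable, each $\sigma_{\Block_i}$ is conflict-equivalent to a serial execution $\tau_{\Block_i}$; concatenating these and invoking the equivalence-of-concatenation property used in \autoref{thm:main-determinisitic} yields a serial execution $\tau \triangleq \tau_{\Block_1}\cdot\tau_{\Block_2}\cdots$ with $\sigma \equiv \tau$. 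This already shows $\sigma$ is serializable, so the remaining work is to show that $\tau$ respects the real-time partial order.

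To do that, I would pin down, for every transaction $tx$ in a block $\Block_i$, that its invocation precedes the execution of $\Block_i$ (a transaction must be submitted before its block is finalized by consensus) while its response follows it (results for $tx$ are emitted only after $tx$ runs during the processing of $\Block_i$). Two cases then cover every pair. For $tx, tx' \in \Block_i$ in the same block, both invocation--response intervals contain the instant at which $\Block_i$ is executed, so the intervals overlap and neither $\earlierthan{tx}{tx'}$ nor $\earlierthan{tx'}{tx}$ holds; hence the intra-block serial order $\tau_{\Block_i}$ vacuously respects real-time, matching the framework's stance that all transactions in one block are concurrent. For $tx \in \Block_i$ and $tx' \in \Block_j$ with $i < j$, the order in $\tau$ places $tx$ before $tx'$, so I only need to rule out a backward real-time precedence $\earlierthan{tx'}{tx}$: since blocks are processed sequentially, $\Block_i$ finishes before $\Block_j$ begins, so $tx'$'s response (after $\Block_j$ executes) cannot precede $tx$'s invocation (before $\Block_i$ executes). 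Thus no real-time edge runs from a later block to an earlier one, and $\tau$ respects every precedence.

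Combining the two parts, $\sigma \equiv \tau$ with $\tau$ a serial execution consistent with the real-time partial order, which is exactly strict serializability. The single-replica hypothesis is what makes this clean: it gives a single well-defined global timeline of invocations and responses, so the order to be respected is an unambiguous partial order. I expect the main obstacle to be the temporal bookkeeping in the cross-block case --- formally justifying, from the fetch/mark-processed control flow of \autoref{alg:main-loop} together with the assumption that intra-block transactions are concurrent, that a transaction's invocation strictly precedes and its response strictly follows its own block's execution window. Once those interval endpoints are fixed relative to the sequential block schedule, both the overlap argument and the no-backward-edge argument are immediate.
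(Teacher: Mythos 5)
Your proposal matches the paper's proof in essence: both decompose $\sigma$ into per-block subexecutions, invoke per-block serializability of $\A$ to obtain serial $\tau_{\Block_i}$, concatenate these in consensus order to get $\tau \equiv \sigma$, and observe that the real-time partial order is respected because only cross-block pairs carry real-time precedence and $\tau$ preserves the block order. Your additional bookkeeping about invocation/response intervals straddling each block's execution window is a more explicit justification of the step the paper states in one sentence (``the construction of $\tau$ respects the block order''), but it is not a different argument.
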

    \begin{proof}
        To prove the theorem, we need to show that for every execution $\sigma$ generated by \autoref{alg:main-loop} there exists a sequential execution $\tau$ that obeys $\sigma$.
        We claim that extending the order in which transactions are executed in $\sigma$ so that all transactions in the same block are placed one after the other is the desired sequential execution $\tau$.

        Using similar arguments to the ones made in the beginning of the proof~of~\autoref{thm:main-determinisitic} we know that the subexecution of $\sigma$ obtained only by considering transactions of some block $\Block$, denoted $\sigma_\Block$, is also a possible execution of~$\A$, for the block $\Block$.

        Now, consider some block $\Block$.
        By \autoref{thm:detr-critirion2}~(i) for $\A$, there exists an equivalent serial execution $\tau_\Block$ that is equivalent to the subexecution $\sigma_\Block$.
        Thus, we can construct $\tau$ incrementally by appending $\tau_\Block$ to $\tau$ for each block $\Block$ in the order produced by the consensus layer.

        The order of the transactions in $\tau$ respects the partial order of $\sigma$ because the construction of $\tau$ respects the block order.
        It is also easy to see that $\tau \equiv \sigma$ because the concatenation of equivalent executions produces equivalent executions.
        In summary, $\tau$ is a valid sequential execution that is equivalent to $\sigma$, and therefore $\sigma$ is strictly \gls{serializable}~\cite{DB-textbook}.
    \end{proof} 

    \subsubsection{Proof of Completeness}
    \begin{theorem}
        \label{thm:main-determinisitic-anti}
        Assume that \autoref{alg:main-loop} is deterministic, that is, for every possible (infinite) sequence of blocks, all executions generated by \autoref{alg:main-loop} for the given block sequence are pairwise equivalent.
        Then, \gls{scheduler} $\A$ is deterministic.
    \end{theorem}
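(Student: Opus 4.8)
The plan is to argue by contraposition: I will show that if $\A$ is \emph{not} deterministic, then \autoref{alg:main-loop} cannot be deterministic either. Suppose, then, that $\A$ is not deterministic. By the definition of a deterministic \gls{scheduler}, there exist a single block $\Block$ and two executions $\sigma_\Block$ and $\sigma'_\Block$ of $\A$ for $\Block$ with $\sigma_\Block \not\equiv \sigma'_\Block$. Unfolding the definition of execution equivalence (same results for \emph{every} initial global state), the relation $\sigma_\Block \not\equiv \sigma'_\Block$ means precisely that there is some initial global state $s_0$ on which the two executions disagree, i.e.\ they emit different transaction results, or leave the global state in different configurations, when both are started from $s_0$.

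Next I would lift this single bad block into a full run of the main loop. Consider any block sequence whose first block is $\Block$ (for instance, its one-element prefix followed by an arbitrary fixed continuation that is processed identically in both runs), arrange that $\Block$ passes the syntactic validation performed by \autoref{alg:main-loop}, and take the loop's initial global state to be $s_0$. The crucial correspondence is the converse of the sub-execution argument used in the proof of \autoref{thm:main-determinisitic}: there, the restriction of a main-loop execution to a single block was shown to be a legitimate execution of $\A$ for that block; here I need the reverse embedding, namely that \emph{every} execution of $\A$ for $\Block$ is realized as the \textsc{execute-block} invocation on $\Block$ inside some main-loop execution. Since $\Block$ is the first block and is processed starting exactly from $s_0$, I can therefore exhibit two main-loop executions $\Sigma$ and $\Sigma'$ that behave identically except that their \textsc{execute-block} call on $\Block$ follows the interleaving of $\sigma_\Block$ and of $\sigma'_\Block$ respectively.

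It then remains to see that the disagreement between $\sigma_\Block$ and $\sigma'_\Block$ survives in $\Sigma$ and $\Sigma'$. This is immediate from the structure of \textsc{execute-block}: the transaction results it emits are exactly those produced by the scheduler execution (via \textsc{next-execution-results}), and the committed global state is exactly the scheduler's \textsc{state-changes} applied to $s_0$. Hence, on the initial state $s_0$, the runs $\Sigma$ and $\Sigma'$ emit different results or commit a different state, so $\Sigma \not\equiv \Sigma'$. This contradicts the hypothesis that all executions of \autoref{alg:main-loop} for the chosen block sequence are pairwise equivalent, and therefore $\A$ must be deterministic.

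The main obstacle is the embedding step: I must make precise that an arbitrary nondeterministic execution of $\A$ on a block occurs verbatim as the per-block portion of a genuine main-loop execution, and that the per-block results and state changes are transmitted by the loop without being overwritten or masked by the processing of later blocks. This is exactly the reverse direction of the sub-execution decomposition already invoked for \autoref{thm:main-determinisitic}; formalizing it reduces to the observation that \textsc{execute-block} is invoked once per block with the then-current state as its sole state input, so the first block is always run on the loop's initial state $s_0$ and its emitted results form an observable prefix of the whole run. Everything else — choosing $s_0$ as the witness of non-equivalence and making $\Block$ the leading, valid block — is routine.
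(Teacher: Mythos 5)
Your proposal is correct and takes essentially the same route as the paper: both arguments hinge on embedding two executions of $\A$ for a single block $\Block$ as the per-block portions of two main-loop executions over a block sequence led by $\Block$, and then invoking the assumed determinism of \autoref{alg:main-loop}. The only difference is presentational --- the paper argues directly (any two executions of $\A$ for $\Block$ lift to equivalent main-loop executions, hence are equivalent) while you argue by contraposition, and you spell out the embedding step that the paper dismisses as obvious.
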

    \begin{proof}
        We prove this by showing that for any given state $s$ and any given block $\Block$, all executions of $\A$ for $\Block$ are equivalent.
        Fix some $s$ and some $\Block$.
        Let two executions of $\A$, $\sigma$, and $\sigma'$ for block $\Block$.

        It is obvious that $\sigma$ and $\sigma'$ can be converted into equivalent corresponding executions of \autoref{alg:main-loop}, $\tau$ and $\tau'$, given the block sequence containing only the single block $\Block$.
        By the assumption we know that $\tau\equiv\tau'$, and thus $\sigma\equiv\sigma'$.
    \end{proof}

    \begin{theorem}
        \label{thm:main-serializable-anti}
        Assume that every execution generated by \autoref{alg:main-loop} is strictly \gls{serializable} when using a single replica.
        Then, \gls{scheduler} $\A$ is serializable.
    \end{theorem}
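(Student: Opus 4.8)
The plan is to mirror the completeness argument of \autoref{thm:main-determinisitic-anti}, reducing a statement about arbitrary executions of $\A$ to the single-block behavior of \autoref{alg:main-loop}. Here ``$\A$ is serializable'' means that every execution of $\A$ for every block is \gls{serializable} (as in \autoref{thm:detr-critirion2}). So I would fix an arbitrary block $\Block$, an arbitrary initial state $s$, and an arbitrary execution $\sigma$ of $\A$ for $\Block$ from $s$, and aim to produce a serial execution $\tau$ with $\tau \equiv \sigma$.

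First I would lift $\sigma$ to an execution of the full main loop. Consider the block sequence that contains only the single block $\Block$, run from the initial state $s$. Because \textsc{execute-block} drives the block purely through the interface methods of $\A$ (\textsc{make-schedule}, \textsc{init-execution}, \textsc{start-execution}, and the runtime queries), every execution of $\A$ for $\Block$ extends to a corresponding execution $\hat\sigma$ of \autoref{alg:main-loop} on this one-block sequence; the consensus, fetch, and validation steps add no transaction-level effects, so $\hat\sigma \equiv \sigma$. This is exactly the correspondence already used, in the opposite direction, inside the proofs of \autoref{thm:main-serializable} and \autoref{thm:main-determinisitic-anti}.

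Next I would invoke the hypothesis: $\hat\sigma$ is \emph{strictly} \gls{serializable} when using a single replica. Strict \gls{serializability} is strictly stronger than \gls{serializability} --- it merely adds the requirement that the serial witness also respect the real-time partial order --- so in particular $\hat\sigma$ is \gls{serializable}. Hence there is a serial execution $\tau$ with $\tau \equiv \hat\sigma$. By transitivity of $\equiv$ we obtain $\tau \equiv \hat\sigma \equiv \sigma$, so $\tau$ is a serial execution equivalent to $\sigma$; that is, $\sigma$ is \gls{serializable}. Since $\Block$, $s$, and $\sigma$ were arbitrary, $\A$ is serializable.

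The only delicate point --- the main obstacle --- is justifying the lifting $\sigma \mapsto \hat\sigma$ with $\hat\sigma \equiv \sigma$ rigorously: one must argue that the auxiliary main-loop steps (fetching from the consensus layer, block validation, committing state changes) are transaction-semantics-neutral, and that passing between an $\A$-execution for $\Block$ and a single-block execution of \autoref{alg:main-loop} preserves equivalence in both directions. This is precisely the bookkeeping established at the beginning of the proof of \autoref{thm:main-determinisitic}, so I would reuse it rather than redo it. Everything else is an immediate application of the definitions (strict serializability $\Rightarrow$ serializability) together with transitivity of $\equiv$.
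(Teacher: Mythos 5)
Your proposal is correct and follows essentially the same route as the paper's proof: fix an arbitrary execution $\sigma$ of $\A$ for a block $\Block$, lift it to an equivalent single-block execution of \autoref{alg:main-loop}, apply the hypothesis (noting strict \gls{serializability} implies \gls{serializability}), and transfer the serial witness back via transitivity of $\equiv$. The paper treats the lifting step as obvious where you flag it as the delicate point, but the argument is the same.
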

    \begin{proof}
        We prove this by showing that for any given state $s$ and any given block $\Block$, all executions of $\A$ for $\Block$ are serializable.
        Fix some $s$ and some $\Block$.
        Let some execution $\sigma$ of $\A$ for $\Block$.

        Here again, it is obvious that $\sigma$ can be converted into an equivalent corresponding execution of \autoref{alg:main-loop}, $\tau$, given the block sequence that contains only one block, $\Block$.
        By the assumption, we know that $\tau$ is serializable, and thus $\sigma$ is also serializable.
    \end{proof} 

    Using these two theorems, we know that \emph{deterministic sequentiality} is the weakest condition required for guaranteeing strict serializability of the framework.

\section{Missing Proofs from \srdsref{sec:greedy}}
\label{app:greedy-proofs}

\subsection{Proof of \srdsref{lemma:valid-greedy-schedule}}
\label{app:valid-greedy-proof}

    \begin{proof}
        Let two transactions be such that $tx_i \conflicts tx_j$. \Gls{wlog} assume that $tx_i \in B_i \neq B_j \ni tx_j$.
        Since $tx_i \conflicts tx_j$, some iteration of \hyperref[alg:greedy-schedule:iter_start]{lines~\ref*{alg:greedy-schedule:iter_start}}~--~\ref{alg:greedy-schedule:iter_end} checks if $\Schd$ already contains a path between them and otherwise adds a direct edge between them (line~\ref{alg:greedy-schedule:iter_end}).
        Moreover, for two $tx, tx' \in B_i$ we know that $tx \notconflicts tx'$ because $B_i$ is \gls{conflict-free}.
        Thus, for any two transactions $tx_i \conflicts tx_j$ the \gls{schedule} contains a directed path between them, so the \gls{schedule} is valid.
    \end{proof}

\subsection{Proof of \srdsref{thm:convert-to-partition}}
\label{app:convert-to-partition}
    \begin{proof}
        For the proof, we assign each vertex $v\in V$, the (highest) color that represents the depth of its corresponding node in $\GSchd$ using the procedure \textproc{ConvertToColoring}.
        The construction of $c$ is as follows: First, all nodes $v\in V$ with $\indeg[\GSchd]{v} = 0$ get the color $c\left(v\right) := 1$.
        Next, all nodes $v\in V$ with an incoming edge $\left(u, v\right)$ s.t. $c\left(u\right) = 1$ get the color $c\left(v\right) := 2$.
        We repeat this process until all the vertices are given a color.
        Note that in the process we described, some vertices may theoretically be assigned multiple colors.
        To make the process well-defined, we give each vertex the \emph{highest} color it can get from the process above, i.e. its \emph{depth}.
        Recall that in a \gls{DAG} there is a finite number of possible paths and that $\GSchd$ is a \gls{DAG}; thus, the coloring is well-defined.

        Now, the coloring $c$ must be legal because for any two given vertexes $v, v' \in V$ with an undirected edge $\left(v, v'\right) \in E$ we also know that $v \conflicts v'$.
        By the validity of $\Schd$, there must be a directed path between them in $\confGraph{T}$; \gls{wlog} there is a path $v \longrightsquiglearrow_{\GSchd} v'$ (and $v' \not\mathrel{\longrightsquiglearrow}_{\GSchd} v$).
        Thus, for any path $ * \longrightsquiglearrow_{\GSchd} v$, we can create a strictly longer path $ * \longrightsquiglearrow_{\GSchd} v \longrightsquiglearrow_{\GSchd} v'$.
        Therefore, $c\left(v\right) < c\left(v'\right)$.
    \end{proof}

\subsection{Proof of \srdsref{thm:greedy-Completeness}}
\label{app:greedy-main-proof}

    \begin{proof}
        Using \Cref{thm:convert-to-partition}, we create a legal partition $\bigsqcup_{i=1}^k B_i = \T$ from $\Schd$ using the procedure \textproc{ConvertToColoring}.
        Each set $B_i$ represents the $i$-th color, i.e., all transactions that were assigned with the number $i$.
        \newcommand{\GS}{\Schd'}
        Now denote the level schedule produced from the partition above $\GS = \textproc{LevelSchedule}\!\left(\sqcup_{i=1}^k B_i\right)$.

        We show that (\ref{eq:transative-path}) holds, i.e., every path $P=\left(v_1, v_2, \ldots, v_l\right) \in \GS$ is contained in $\GSchd$ as a transitive sub-path ($\forall i\;\; v_i \rightsquiglearrow_{} v_{i+1}$).
        \newcommand{\squigarr}{\longrightsquiglearrow}
        \begin{equation} \label{eq:transative-path}
        \small 
        \begin{gathered}
            v_1 \rightarrow_{\GS} v_2 \rightarrow v_3 \rightarrow_{\GS}
            \cdots
            \rightarrow_{\GS} v_{l-2} \rightarrow_{\GS} v_{l-1} \rightarrow_{\GS} v_l
            \\
            \Downarrow
            \\
            v_1 \squigarr_{\Schd} v_2 \squigarr v_3 \squigarr_{\Schd}
            \cdots \squigarr_{\Schd} v_{l-2}
            \squigarr_{\Schd} v_{l-1} \squigarr_{\Schd} v_l
        \end{gathered}
        \end{equation}
        Fix some path $P=\left(v_1, v_2, \ldots, v_{l-1}, v_l\right) \subseteq \GS$ and consider an arbitrary edge $v_i\rightarrow_{\GS} v_{i+1}$ in $P$.
        Now we prove that $v_i\longrightsquiglearrow_{\Schd} v_{i+1}$.

        Since $\GS$ was created using the \gls{Greedy Schedule}, we know that the edge $v_i\rightarrow v_{i+1}$ could only have been added to $\GS$ as a result of line~\ref{alg:greedy-schedule:confpairs}, for some two sets $B_m\ni v_i$ and $B_M\ni v_{i+1}$ s.t.~$m < M$.
        Thus, $v_i \conflicts v_{i+1}$ and because $\Schd$ is a valid schedule inducing a \gls{DAG}, we know that either $v_i\longrightsquiglearrow_{\Schd} v_{i+1}$ or $v_{i+1} \longrightsquiglearrow_{\Schd} v_i$.
        Suppose, for the sake of contradiction, that $v_{i+1} \longrightsquiglearrow_{\Schd} v_i$.
        According to the procedure \textproc{ConvertToColoring}, $v_i \in B_m$ implies that $\depth[\GSchd]{v_i} = m$\footnote{The depth is the length of the longest simple path (that ends in $v_i$) in vertices.}.
        Similarly, $\depth[\GSchd]{v_{i+1}} = M$.
        Therefore, $\depth[\GSchd]{v_i} = m < M = \depth[\GSchd]{v_{i+1}}$.
        On the other hand, $\GSchd$ is a \gls{DAG} so if $v_{i+1} \longrightsquiglearrow_{\Schd} v_i$ then $\depth[\GSchd]{v_i} > \depth[\GSchd]{v_{i+1}}$, and we reach a contradiction \lightning.

        In summary, for every path in $\GS$, (\ref{eq:transative-path}) holds. Now we show that the latency of $\Schd'$ is no worse than the latency of $\Schd$.
        \newcommand{\PS}{\mathcal{P}}
        \newcommand{\PGS}{\mathcal{P}'}
        For convenience, denote the sets of all simple paths in $\GSchd$ and $\SchG{\GS}$ as the corresponding $\PS$ and $\PGS$.
        Since (\ref{eq:transative-path}) holds, we know that:
        \begin{equation} \label{eq:transative-path-subsets}
        \forall P_{\GS} \in \PGS \ \exists P_{\Schd} \in \PS \: : \: P_{\GS} \subseteq P_{\Schd}.
        \end{equation}

        Therefore, in total we can use (\ref{eq:transative-path-subsets}) to split a maximal path in $\PS$ into two a sum of two (positive-summed) disjoint sets; one containing all vertices from a maximal path in $\PGS$ and the other containing the rest.
        We can then deduce the corresponding inequality. 
        Thus, $\Lt[\len]{\Schd} \geq \Lt[\len]{\GS} = \Call{LevelSchedule}{\sqcup_{i=1}^k B_i}$ as required.
    \end{proof}

\section{NP-Hardness (Missing Proofs from \srdsref{sec:nphardness})}
\label{app:nphardness}
    In this part, we prove that determining minimal latency and optimizing it is \gls{NPH}.
    This is done by reduction from the $\symb{Color}$~(\ref{def:color}) Graph Vertex Coloring problem.
    \begin{equation}
        \label{def:color}
        \symb{Color} \triangleq \big\{\left(G, k\right) \Big|\, \exists \: c \text{ coloring of } G \text{ using less than } k \text{ colors} \big\}
    \end{equation}

    \paragraph{Transformation Function.} The function \textproc{transform}, given an undirected graph $G$, creates an input for the scheduling problem.
    \begin{algorithmic}[1]
    \footnotesize 
        \State \textbf{given} some arbitrary constant, $c\in\symb{N+}$
        \Function{transform}{$G = \left(V, E\right)$ : Undirected Graph}
            \State \textbf{let} $\left(\T,\: \conflicts,\: \len:\T\to\symb{N+}\right)$
            \ForAll{$v \in V$}
                \State $\T \gets \T \cup \left\{v\right\}$
                \State $\lenOf{v} \gets c$ \Comment{Assign the same length for all transactions}
            \EndFor
            \ForAll{$\left(u, v\right) \in E$}
                \State \textbf{fix} $u \conflicts v$\Comment{Adjust the relation $\conflicts$ accordingly}
            \EndFor
            \State \Return $\left(\T, \conflicts, \len\right)$
        \EndFunctionExplicit
    \end{algorithmic}

    \begin{lemma}\label{lemma:transform-poly}
        The function \textproc{transform} is computable in polynomial time in $\left|G\right|$ (the size of the encoding of $G$).
    \end{lemma}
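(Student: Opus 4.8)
The plan is to bound the running time by directly analyzing the two loops that constitute \textproc{transform}, observing that each performs only a constant amount of work per graph element. First I would note that the function processes $G = (V,E)$ in two passes: the first iterates once over every vertex $v \in V$, and the second iterates once over every edge $(u,v) \in E$. Since these passes are sequential and contain no nested iteration, the total cost is simply the sum of the per-element costs, and the whole analysis reduces to checking that each individual step is cheap.

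For the vertex pass, each iteration performs two elementary operations: inserting $v$ into $\T$ and assigning $\lenOf{v} \gets c$. Both take time bounded by a constant plus the length needed to write the fixed value $c$; since $c$ is a constant chosen independently of $G$, its encoding has fixed size, so each vertex contributes $O(1)$ work and the entire vertex pass costs $O(|V|)$. For the edge pass, each iteration records a single conflict $u \conflicts v$ in the relation $\conflicts$, which is again an $O(1)$ operation (or at worst polylogarithmic once we account for indexing into whatever adjacency representation is used). Thus the edge pass costs $O(|E|)$.

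Summing the two passes, the total running time is $O(|V| + |E|)$, which is linear in $|G|$ and in particular polynomial, since any reasonable encoding of $G$ has size at least $|V| + |E|$. I would also remark, for completeness, that the output $(\T, \conflicts, \len)$ has encoding size $O(|V| + |E|)$ as well (the length field stores the fixed constant $c$ at each vertex), so the map is legitimately a polynomial-time transformation and may later be composed into a polynomial reduction. There is no genuine obstacle here; the only point that merits an explicit sentence is confirming that assigning the constant length $c$ does not inflate the running time or the output size, which it cannot precisely because $c$ is fixed and thus has bounded encoding length.
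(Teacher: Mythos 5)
Your proposal is correct and matches the paper's own (one-line) proof sketch, which likewise argues that \textproc{transform} makes a single linear pass over the vertices and edges with $\Theta(1)$ work per item. Your added remarks about the fixed encoding size of the constant $c$ and the size of the output are harmless elaborations of the same argument.
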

    \begin{proofsketch}
        The function makes a linear pass over all vertices and edges, and for each item the iteration takes $\Theta\left(1\right)$ time. $\square$
    \end{proofsketch}

    \begin{lemma} \label{lemma:greedy-schedule-depth}
        Given a legal partition with $k$ sets, the \gls{schedule} $\Schd$ the \Gls{Greedy Schedule} creates a scheduling graph $\GSchd$ such that $\Depth{\GSchd} \leq k$.
    \end{lemma}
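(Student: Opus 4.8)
The plan is to exhibit a simple monotonicity invariant on block indices and use it to bound the length of every directed simple path. Write $\beta(v) = i$ for the index of the block $B_i$ containing the transaction $v$; this is well-defined since $\bigsqcup_{i=1}^{k} B_i = \T$ is a partition into $k$ nonempty sets. The central claim I would prove is that every edge of the scheduling graph is \emph{strictly increasing} in block index, that is, $(u,v) \in \Schd$ implies $\beta(u) < \beta(v)$.

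To establish the claim, I would inspect the only place where edges are inserted into $\Schd$, namely line~\ref{alg:greedy-schedule:iter_end}, where $\Schd$ is augmented by $E \setminus P$. By its construction on line~\ref{alg:greedy-schedule:confpairs}, the set $E$ is a subset of $B_j \times B_i$, and in the inner loop $j$ ranges over $i-1, \ldots, 0$, so $j < i$ always holds. Hence every candidate edge $(tx_j, tx_i)$ already satisfies $\beta(tx_j) = j < i = \beta(tx_i)$. Passing to $E \setminus P$ can only discard edges, never create new ones, so the invariant is maintained throughout the run. (No edge ever touches $B_0 = \emptyset$, and no intra-block edge can appear, both because $j < i$ in the loop and because each $B_i$ is conflict-free and thus contributes no conflicting pair.)

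With the claim in hand the bound is immediate. Take any simple path $v_1 \rightarrow v_2 \rightarrow \cdots \rightarrow v_m$ in $\GSchd$; applying the claim to each of its edges yields the strict chain $\beta(v_1) < \beta(v_2) < \cdots < \beta(v_m)$. Since $\beta$ takes values in $\{1, \ldots, k\}$, a strictly increasing sequence of block indices has at most $k$ terms, so $m \leq k$. As this holds for every simple path, the maximal number of vertices on a simple path is at most $k$, i.e.\ $\Depth{\GSchd} \leq k$.

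I expect no serious obstacle here: the whole argument rests on reading off from the pseudocode that an inserted edge always points from a lower-indexed block to a higher-indexed one. The only points needing a little care are confirming that the set difference $E \setminus P$ introduces no extra edges and recalling that $\Depth{\cdot}$ counts \emph{vertices} (not edges), so that a path confined to at most $k$ distinct block indices has depth at most $k$.
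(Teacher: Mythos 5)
Your proof is correct and is essentially the paper's own argument, just written out in more detail: the paper likewise observes that every inserted edge goes from some $B_j$ to some $B_i$ with $j<i$, so a path has at most $k-1$ edges and hence at most $k$ vertices. Your added care about $E\setminus P$ only removing edges and about $B_0=\emptyset$ is a fair elaboration but not a different route.
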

    \begin{proof}
        \autoref{alg:greedy-schedule} may add directed edges between a vertex of a set $B_j$ to a vertex of a set $B_i$ only if $j < i$.
        Since there are exactly $k$ groups, a path in $\GSchd$ may have at most $k-1$ edges, therefore $\Depth{\GSchd} \leq k$.
    \end{proof}

    \begin{theorem} \label{thm:optimalschedule-nph}
        $\symb{OptimalSchedule}$ is \gls{NPH}.
    \end{theorem}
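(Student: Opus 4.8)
The plan is to exhibit a polynomial-time Turing reduction from the \gls{NPH} Graph Vertex Coloring decision problem $\symb{Color}$~(\ref{def:color}) to $\symb{OptimalSchedule}$. Since $\symb{OptimalSchedule}$ is a search problem, I would assume a hypothetical polynomial-time solver $O$ that, given a block $\Block$, returns an optimal schedule $\Schd$, and then show how a single call to $O$ plus polynomially-many extra steps suffices to decide $\symb{Color}$. The conceptual heavy lifting has already been done by \autoref{thm:minhomolt-eq-chromatic}, which identifies the optimal homogeneous latency with the chromatic number; the present reduction mainly has to transport an arbitrary coloring instance into a \emph{homogeneous} scheduling instance and then read the chromatic number back out of the optimal latency.

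Concretely, given an instance $(G,k)$ of $\symb{Color}$, I would first compute $\Block = (\T, \conflicts, \len) \gets \textproc{transform}(G)$, which by \autoref{lemma:transform-poly} takes polynomial time and yields a homogeneous block whose conflict graph $\confGraph{\Block}$ is exactly $G$ and whose length function assigns the constant $c$ to every transaction. Next I would invoke the solver to obtain $\Schd \gets O(\Block)$ and compute its latency $L \gets \Lt[\len]{\Schd}$; by the definition of optimality $L = \MinLt[\len]{\Block}$, and evaluating $L$ is itself polynomial because it is a weighted longest-path computation in the \gls{DAG} $\GSchd$. Finally, I would output ``yes'' exactly when $L/c < k$.

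For correctness I would argue that $\chromatic{G} = L/c$. Because every transaction shares the identical length $c$, the weighted depth of any schedule is $c$ times its unweighted depth, so the same relation holds between the minima, giving $\MinLt[\len]{\Block} = c \cdot \MinLt[\unitlen]{\Block}$. Applying \autoref{thm:minhomolt-eq-chromatic} to $\confGraph{\Block} = G$ gives $\MinLt[\unitlen]{\Block} = \chromatic{G}$, hence $L/c = \chromatic{G}$. Thus $(G,k) \in \symb{Color}$ iff $\chromatic{G} < k$ iff $L/c < k$, so the procedure decides $\symb{Color}$; since every step, including the single oracle call, runs in polynomial time, $\symb{OptimalSchedule}$ is \gls{NPH}.

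The main obstacle is bookkeeping rather than deep argument: I must be careful that $\symb{OptimalSchedule}$ returns a \emph{schedule} and not a latency value, and therefore justify that recovering the optimal latency from the returned schedule, and hence the chromatic number after dividing by the uniform length $c$, is itself a polynomial-time operation so that the composed reduction remains polynomial. The genuinely hard content has been deferred to \autoref{thm:minhomolt-eq-chromatic}; once that equivalence is in hand, what remains is exactly this reduction plumbing together with the uniform-length scaling by $c$.
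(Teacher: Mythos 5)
Your reduction is the same one the paper uses: apply \textproc{transform} to $(G,k)$, make a single call to an $\symb{OptimalSchedule}$ oracle, recover the optimal latency by a longest-path computation on the returned \gls{DAG}, and compare it against $k\cdot c$. Where you differ is in how correctness is discharged. The paper argues both directions from first principles: for one direction it builds a schedule of depth at most $k$ from a $k$-coloring via \textproc{LevelSchedule} (\Cref{lemma:valid-greedy-schedule} and \Cref{lemma:greedy-schedule-depth}), and for the other it converts the oracle's schedule back into a coloring via \textproc{ConvertToColoring} (\Cref{thm:convert-to-partition}) and bounds the number of colors by the latency. You instead black-box the entire equivalence as $\chromatic{G}=\MinLt[\unitlen]{G}$ (\autoref{thm:minhomolt-eq-chromatic}) together with the uniform-length scaling $\MinLt[\len]{\Block}=c\cdot\MinLt[\unitlen]{\Block}$. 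This is legitimate and non-circular --- the proof of \autoref{thm:minhomolt-eq-chromatic} rests on \Cref{thm:lt-numbercolors}, \Cref{thm:lower-bound}, and \Cref{thm:convert-to-partition}, none of which depend on NP-hardness --- but be aware that it inverts the paper's expository order, in which the hardness reduction is presented first and the equivalence theorem is offered afterward as a strengthening; in effect your proof and the paper's lean on the same two constructions, you just route them through an intermediate theorem. One further remark in your favor: your acceptance condition $L/c<k$ matches the strict inequality in the paper's stated definition~(\ref{def:color}), whereas the paper's own reduction tests $\Lt{\Schd}\leq k\cdot c$ and its correctness argument quantifies over colorings with at most $k$ colors; either convention yields NP-hardness, but yours is the more faithful to the definition as written.
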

    \begin{proof}
    We show this using a Cook reduction~\cite{npc-guide} from the $\symb{Color}\in\symb{NPC}$ problem.
    Given an input $\left(G=\left(V, E\right), k\right)$ to the $\symb{Color}$ (\ref{def:color}) problem, we define the following reduction.
    \begin{algorithmic}[1]
        \footnotesize 
        \Function{reduction}{$\left(G, k\right)$}
            \State \textbf{calculate} $\Block=\left(\T,\conflicts, \len\right) \gets$ \Call{transform}{$G$}
            \State \textbf{use} $\symb{OptimalSchedule}$ \textbf{oracle} for $\Block$
            \State $\Schd \gets$ \textbf{oracle answer}
            \If{$\Lt{\Schd} \leq k \cdot c$}
                 \Return \emph{true}
            \Else\
                \Return \emph{false}
            \EndIf
        \EndFunctionExplicit
    \end{algorithmic}

    \paragraph{Polynomial Time.} This follows from \Cref{lemma:transform-poly} and the fact that calculating the depth of a \gls{DAG} can be done in linear time; we conclude that the reduction is computable in polynomial time. $\square$

    \paragraph{Correctness.} By proving that (\ref{eq:reduction-condition}) holds.
    \begin{equation} \label{eq:reduction-condition}
        \left(G,k\right) \in \symb{Color}
        \Longleftrightarrow
        \Call{reduction}{G,k} = \text{true}
    \end{equation}

    \subparagraph{($\Rightarrow$)} Assume that a valid coloring $c:V \to \left\{1, \ldots, k\right\}$ of $G$ exists.
    We then show that for any minimal \gls{schedule} $\Schd$ of the corresponding block $\Block$, the following  $\Lt{\Schd} \leq k \cdot c$ holds, and thus \Call{reduction}{$G,k$} returns \emph{true}.
    \Gls{wlog} assume that the coloring uses exactly $k$ colors.
    Now, denote the set $B_i \subset V=\T$ as the subset of nodes assigned the color $i$, for all colors $1,\ldots, k$.
    Since $c$ is a valid coloring, we know that $\bigsqcup_{i=1}^k B_i = V$ is a partition of $V$ to $k$ disjoint \gls{IS}[s] in $G$.
    Also, we see that according to the \textproc{transaform} function $G=\confGraph{\Block}$, and thus the union $\bigsqcup_{i=1}^k B_i = \T$ is a partition of $\T$ to $k$ \gls{conflict-free} sets.
    Thus, the \gls{schedule} $\Schd'$, created by \gls{Greedy Schedule} for the above partition, is valid according to \Cref{lemma:valid-greedy-schedule}.
    By \Cref{lemma:greedy-schedule-depth}, we conclude that $\Depth{\SchG{\Schd'}} \leq k$; and since $\forall_{tx\in\T}\,\lenOf{tx}=c$ we know that $\Lt[\len]{\Schd'} \leq k \cdot c$.
    Thus, $\Lt{\Schd} \leq k \cdot c$, because the oracle returned the schedule $\Schd$ that has minimal latency. $\square$

    \subparagraph{($\Leftarrow$)} Assume that \Call{reduction}{$G,k$} returns \emph{true}.
    Let $\Schd$ be the valid \gls{schedule} returned by the oracle for which $\Lt{\Schd} \leq k \cdot c$.
    Using, $\Schd$ we can construct the coloring $c:V\to\symb{N+}$ from $\GSchd$.
    We construct the legal coloring using the process described in \Cref{thm:convert-to-partition}.

    The last thing left to show is that $c$ uses at most $k$ different colors.
    Let $t$ be the number of colors used by $c$.
    By construction, it is obvious that some path in $\GSchd$ has exactly $t$ vertices and that there is no longer path with at least $t+1$ vertices in $\GSchd$.
    The sum of the path's vertexes is $t \cdot c$, and by the definition of $\Lt{\Schd}$ we know that $ t \cdot c \leq \Lt{\Schd} \leq k \cdot c$.
    Therefore, $t \leq k$ as~required. $\square$
    \end{proof}

\section{Missing Proofs from \srdsref{sec:lower}}
\label{app:homo-proofs}

\subsection{Proof of \srdsref{thm:lt-numbercolors}}
    \begin{proof}
        Assume, by contradiction, that $\Lt{\Schd_{c_{\min}}} < k$. Hence, by construction, $\Depth{\SchG{\Schd_{c_{\min}}}} < k$.
        We use $\Schd_{c_{\min}}$ as a basis for an alternative coloring, using the same technique as in \Cref{thm:convert-to-partition}.
        Denote by $l$ the largest value assigned to a transaction in this particular process.
        By the assumption that $\Depth{\SchG{\Schd_{c_{\min}}}} < k$ and \Cref{thm:convert-to-partition-num-colors}, we can deduce that $l < k$.
        Thus, we found a contradiction \lightning, since the alternative coloring uses fewer colors than the minimal number of colors needed.
    \end{proof}

\subsection{Proof of \srdsref{thm:lower-bound}}
\begin{proof}
        We know by \Cref{thm:lt-numbercolors} that $\Lt{\Schd_{c_{\min}}} = k$.
        Now assume, by contradiction, that there exists another \gls{schedule} $\Schd'$ whose latency is lower than the \gls{schedule} $\Schd_{c_{\min}}$, and recall that $k$ is the minimal number of colors needed to color $\confGraph{\Block}$.
        Since $\Schd'$ is a (valid) \gls{schedule}, it forms a \gls{DAG} over the set of transactions $\T$ such that there is a directed path between every pair of conflicting transactions.
        We now use $\Schd'$ as a basis for an alternative coloring in a similar fashion to the above lemma.
        Denote $l$ the largest value of the process.
        Thus, it is a valid coloring of $\confGraph{\Block}$ with $l$~colors.

        Also, by definition, since the longest path in $\Schd'$ is shorter than that of $\Schd$ and given the contradiction assumption, $l < k$.
        Therefore, we found a coloring of the conflict graph that uses fewer colors than its minimal coloring, a contradiction. \lightning
    \end{proof}

\section{Proof of \srdsref{thm:minhomolt-eq-chromatic}}
\label{app:minhomolt-eq-chromatic}
The proof of the theorem follows directly from the combination of the following two lemmas:

    \begin{lemma}
        \label{thm:conv-minvc-minlt}
        Consider some block of \gls{homogeneous transactions} represented by the conflict graph $\Block =\left(\T,\conflicts\right)$, and a minimal coloring $c:\T\to\left\{1,\ldots,k\right\}$ of $\confGraph{\Block}$.
        Then, $c$ can be converted to an optimal \gls{schedule} $\Schd$ such that $\Lt{\Schd} = k$, in polynomial time.
    \end{lemma}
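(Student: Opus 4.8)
The plan is to realize the required schedule as exactly the level schedule $\Schd_{c_{\min}}$ built from $c$ in \autoref{sec:lower}, and then to read off its latency and optimality from the lemmas already established there. Since the heavy lifting has been done by \Cref{thm:lt-numbercolors} and \Cref{thm:lower-bound}, most of the work here is packaging; the conceptual core is the schedule/coloring duality that drives optimality.

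First I would turn the coloring $c$ into an ordered legal partition. Each color class $T_i \triangleq \left\{tx \in \T \mid c(tx) = i\right\}$ is an independent set in $\confGraph{\Block}$, hence \gls{conflict-free}, so $\bigsqcup_{i=1}^k T_i = \T$ is a legal partition. Setting $\Schd \triangleq \textproc{LevelSchedule}\!\left(\left[T_1, \ldots, T_k\right]\right)$ yields a valid \gls{schedule} by \Cref{lemma:valid-greedy-schedule}. Computing the classes $T_i$ from $c$ is a single $O(\left|\T\right|)$ pass and \textproc{LevelSchedule} runs in polynomial time, so the whole conversion is polynomial, as required.

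Next I would pin down the latency. By \Cref{thm:lt-numbercolors}, the level schedule built from a minimal coloring satisfies $\Lt[\unitlen]{\Schd} = k$, which is exactly the claimed value. The $\leq k$ direction is immediate from \Cref{lemma:greedy-schedule-depth}, since a level schedule over $k$ classes has $\Depth{\SchG{\Schd}} \leq k$ and, in the homogeneous setting, $\Lt[\unitlen]{\Schd} = \Depth{\SchG{\Schd}}$.

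The main obstacle — already dispatched by \Cref{thm:lower-bound}, but worth isolating — is optimality: ruling out any valid schedule with latency below $k$. The key is the duality furnished by \Cref{thm:convert-to-partition} together with \Cref{thm:convert-to-partition-num-colors}: any valid schedule $\Schd'$ can be converted back into a legal coloring of $\confGraph{\Block}$ whose number of colors is exactly $\Depth{\SchG{\Schd'}}$. Since $\Lt[\unitlen]{\Schd'} = \Depth{\SchG{\Schd'}}$ for homogeneous transactions, a schedule with $\Lt[\unitlen]{\Schd'} < k$ would produce a legal coloring using fewer than $k$ colors, contradicting the minimality of $c$ (i.e.\ $k = \chromatic{\confGraph{\Block}}$). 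Hence $\MinLt[\unitlen]{\T} = k$, so $\Schd$ attains the optimum and $\Lt{\Schd} = k$ is the minimal latency, completing the claim.
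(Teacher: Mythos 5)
Your proposal is correct and follows essentially the same route as the paper, which simply points to \autoref{sec:lower} and invokes the \gls{Greedy Schedule} construction together with \Cref{thm:lt-numbercolors} (latency equals $k$) and \Cref{thm:lower-bound} (optimality). Your write-up merely makes explicit the intermediate steps the paper leaves implicit (legality of the color-class partition, validity via \Cref{lemma:valid-greedy-schedule}, and the polynomial running time), so there is no substantive difference.
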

    \begin{proofsketch}
        Based on \autoref{sec:lower} we can use the \gls{Greedy Schedule} to create the \gls{schedule} in polynomial time. $\square$
    \end{proofsketch}

    \begin{lemma}
        \label{thm:conv-minlt-minvc}
        Consider some graph $G =\left(V,E\right)$ and denote the induced block of \gls{homogeneous transactions} $\Block$ represented by the conflict graph $\confGraph{\Block} = G$.
        Now consider some optimal homogeneous \gls{schedule} $\Schd$ for $\Block$ such that $\Lt[\mathds{1}]{\Schd} = l$.
        Then, $\Schd$ can be converted to a minimal coloring with $l$ colors in linear time.
    \end{lemma}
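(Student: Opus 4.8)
The plan is to feed the optimal schedule $\Schd$ into the \textproc{ConvertToColoring} procedure of \autoref{alg:converttocoloring} and then show that the coloring it returns is valid, uses exactly $l$ colors, and is minimal; this is precisely the converse of the construction in \Cref{thm:conv-minvc-minlt}, so the two lemmas together establish $\chromatic{G}=\MinLt[\mathds{1}]{\Block}$.

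First I would apply \Cref{thm:convert-to-partition}: since $\Schd$ is a valid schedule of $\Block$, applying \textproc{ConvertToColoring} to $\Schd$ yields a legal partition of $\T$, i.e.\ a valid vertex coloring $c$ of $G=\confGraph{\Block}$. Next I would count its colors using \Cref{thm:convert-to-partition-num-colors}, which states that this process uses exactly $\Depth{\SchG{\Schd}}$ colors. Because the block is homogeneous and we weight with $\mathds{1}$, each vertex contributes $1$ to a path's weight, so the weighted depth degenerates into the vertex-count of the longest simple path; that is, $\Lt[\mathds{1}]{\Schd}=\Depth{\SchG{\Schd}}$. Hence $c$ uses exactly $\Depth{\SchG{\Schd}}=\Lt[\mathds{1}]{\Schd}=l$ colors. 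The linear-time bound is immediate, as \textproc{ConvertToColoring} is a single breadth-first sweep over the \gls{DAG} $\SchG{\Schd}$.

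The remaining and most delicate step is minimality, i.e.\ proving $l=\chromatic{G}$ rather than merely $\chromatic{G}\le l$. I would argue by contradiction using the companion direction. Suppose $c$ were not minimal, so $G$ admits a coloring with $m<l$ colors. Feeding that coloring's color classes into the \gls{Greedy Schedule} (exactly as in \Cref{thm:conv-minvc-minlt}) produces a valid schedule whose latency is at most the number of color classes, namely at most $m<l$. But $\Schd$ is an \emph{optimal} schedule with $\Lt[\mathds{1}]{\Schd}=l$, so $\MinLt[\mathds{1}]{\Block}=l$; the existence of a valid schedule of strictly smaller latency contradicts this optimality. Therefore no coloring can beat $l$ colors, $c$ is minimal, and $l=\chromatic{G}$.

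I expect the minimality argument to be the main obstacle: the construction and its color count are handed to us by \Cref{thm:convert-to-partition} and \Cref{thm:convert-to-partition-num-colors}, whereas showing that $l$ colors cannot be improved is what genuinely ties the scheduling optimum to the chromatic number. The underlying reason this works is a duality — any coloring induces a schedule of matching latency, and any valid schedule induces a coloring of matching color count — so optimality of $\Schd$ forces minimality of $c$. The one technical point to verify is that the depth appearing in \Cref{thm:convert-to-partition-num-colors} is the unweighted (vertex-count) depth, which coincides with $\Lt[\mathds{1}]{\cdot}$ only because of homogeneity; this alignment is exactly what makes the color count and the latency the same integer $l$.
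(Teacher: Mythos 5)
Your proposal is correct and follows essentially the same route as the paper: apply \textproc{ConvertToColoring} to $\Schd$, use \Cref{thm:convert-to-partition} for legality of the resulting coloring and \Cref{thm:convert-to-partition-num-colors} together with homogeneity to count exactly $l$ colors, and then argue minimality. The only (immaterial) difference is the last step: the paper cites \Cref{thm:lt-numbercolors} and \Cref{thm:lower-bound} to conclude $\chromatic{G}=\MinLt[\mathds{1}]{\Block}=l$ directly, whereas you re-derive the same fact by the dual contradiction (a coloring with $m<l$ colors would yield, via \textproc{LevelSchedule} and the depth bound of \Cref{lemma:greedy-schedule-depth}, a valid schedule of latency at most $m$, contradicting the optimality of $\Schd$) --- both arguments are valid and rest on the same duality.
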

    \begin{proof}
        We use the same function \textproc{ConvertToColoring} with $\Schd$ from \Cref{thm:convert-to-partition} and obtain the legal coloring $c:V\to\left\{1, \ldots, k\right\}$.
        This process takes linear time.
        It is left to show that the coloring created from $\Schd$ is minimal, i.e., we need to show that $\chromatic{G} = k$.
        By 
        \Cref{thm:lt-numbercolors} and \Cref{thm:lower-bound} we know that $\chromatic{G}=l$.
        Now, we continue by showing that $l=k$: \Cref{thm:convert-to-partition-num-colors} implies $k = \Depth{\SchG{S}}$.
        Also, $\Depth{\SchG{S}} = \max_{tx \in \SchG{S}} \depth[\unitlen]{tx}$ $= \Lt[\mathds{1}]{\Schd} = l$\ by definition.
        In total $k=l$, and thus $\chromatic{G} = k$.
    \end{proof}

Now, from both \Cref{thm:conv-minvc-minlt} and \Cref{thm:conv-minlt-minvc}, we can now deduce \Cref{thm:minhomolt-eq-chromatic}.

\section{Missing Proofs from \srdsref{sec:homotail-proofs}}
\label{app:homotail-proofs}

\subsection{Proof of \srdsref{thm:reorder-coloring-homo}}
    \begin{proof}
        From \Cref{lemma:greedy-schedule-depth} we can immediately deduce that $\Lt{\Schd} \leq k$.
        The reordered partition $\sqcup_{i=1}^k T_{\sigma\left(i\right)}$ also uses $k$ sets, since it uses the same exact sets as the original one.
        Thus, \Cref{lemma:greedy-schedule-depth} can also be applied for the new partition proving that $\Lt{\Schd_\sigma} \leq k$.
    \end{proof}

\subsection{Proof of \srdsref{thm:reorder-mincoloring-homo}}
    \begin{proof}
        Both $\Schd$ and $\Schd_\sigma$ were created from minimal coloring, so the conditions of \autoref{thm:lower-bound} apply to both of them.
        Thus, we have $\Lt{\Schd} = \MinLt{\T} = \Lt{\Schd_\sigma}$, thereby finishing the proof.
    \end{proof}

    \section{Batch Scheduling}
    \label{sec:batched-schdlr}
    \label{app:batched}
    We present here a simple batch \gls{scheduler}, which divides each block of transactions into subsets of non-conflicting transactions and then executes these subsets one after the other.
    A benefit of the batch \gls{scheduler} is that it does not need any synchronization operations.
    It is especially effective when the transactions are homogeneous and when there are enough cores to execute all transactions belonging to the same subset at the same time.

    \begin{figure}
        \centering
        \includegraphics[width=0.4\linewidth]{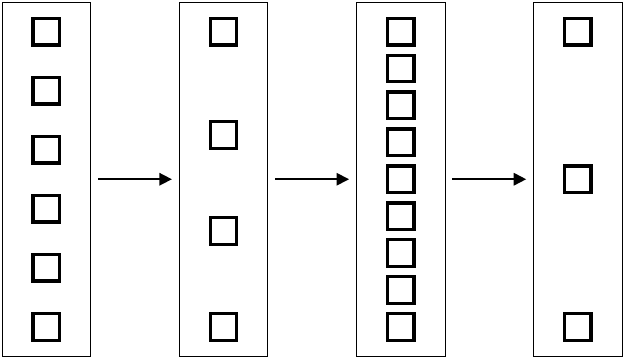}
        \caption{A depiction of a batch-oriented \gls{scheduler}. }
        \label{fig:batched-depiction}
    \end{figure}

    \paragraph{Batch \Gls{schedule}[s].} \autoref{alg:batched:make-schedule} depicts the creation of a batch \gls{schedule}.
    The algorithm requires a partition of the transactions $\symb{disjointunion}\bigsqcup_{i=1}^{k-1} \T_i = \T$ such that each subset is \gls{conflict-free}.
    Similarly to $\GREEDYA$, the partition used must be deterministic in the sense that all partitions used must be equal regardless of the orders of the sets and the transactions. 
    After the partition is obtained, a deterministic order is chosen.
    The partition is returned in that order as the schedule representing a sequence of batches.
    This \gls{scheduler} is denoted~$\BATCHA$.

    \begin{algorithm}
        \caption{Pseudocode for Batch-Oriented \gls{scheduler} $\BATCHA$}
        \label{alg:batched:make-schedule}
        \footnotesize
    \begin{algorithmic}[1]
        \Procedure{MakeSchedule}{Block $\Block=\left(\T, \conflicts\right)$}
            \State Obtain a partition of $\T$ with $k$ groups: $\bigsqcup_{i=1}^{k-1} \T_i = \T$ deterministically
            \Statex \hspace{1em}\ \ \textbf{Require:} $\T_i$ is \glsxtrshort{conflict-free}, i.e., $\forall tx,tx'\in \T_i \: : \: tx \notconflicts tx'$
            \State order $\T_1, \ldots, \T_k$ in some deterministic way \label{alg:batched:make-schedule:ordered-part}
            \State $\Schd \gets \left[\T_1, \ldots, \T_k\right]$
            \State \Return $\Schd$
        \EndProcedure
    \end{algorithmic}
    \end{algorithm}

    \paragraph{Batch Execution.}
    \autoref{alg:batched:execute-schedule} depicts the execution of a batch \gls{schedule}.
    Batches are executed one after the other, while transactions within the same batch are executed concurrently.
    The next batch starts executing only after all transactions in the current batch finish executing.

    \begin{algorithm}[t]
        \caption{Pseudocode for Batch-Oriented \gls{scheduler} $\BATCHA$}
        \label{alg:batched:init-schedule}
        \footnotesize
    \begin{algorithmic}[1]
        \Function{$\GRAPHA\!::$init-execution}{schedule $\Schd = \left[\T_1, \ldots, \T_k\right]$, global-state s}
            \State \textbf{add} dummy batches $\T_0, \T_{k+1}$
            \State temp-store $\gets$ current value of global state s
            \Spawn
                \For{$\T_i = \T_1,\ldots, \T_{k+1}$ ($i$ increases)} \label{alg:batched:execute-schedule:loop}
                    \State \textbf{wait for} all transactions in $\T_{i-1}$ to complete \label{alg:batched:execute-schedule:wait}
                    \State \textbf{schedule} all transactions in $\T_i$ to run \textbf{with} temp-store \textbf{as global state}
                    \State temp-store $\gets$ temp-store $\cup$ (state changes from $\T_i$)
                    \State \textbf{emit} transaction results for those in $\T_i$
                \EndFor
            \EndSpawn
            \State exec $\gets$ ($\T_0$, $\T_{k+1}$, temp-store)
            \State \Return exec
        \EndFunctionExplicit
    \end{algorithmic}
        \label{alg:batched:execute-schedule}
    \begin{algorithmic}[1]
        \Procedure{$\BATCHA\!::$init-execution}{execution e}
            \State \textbf{complete} execution of batch $\T_0$
        \EndProcedure

        \Function{$\BATCHA\!::$is-execution-running}{execution e}
            \If{execution of batch $\T_{k+1}$ completed} \Return true
            \Else \ \Return false
            \EndIf
        \EndFunction

        \Function{$\BATCHA\!::$next-execution-results}{execution e}
            \State \Return transaction results that have been emitted
        \EndFunction

        \Function{$\BATCHA\!::$state-changes}{execution e}
            \State \Return state changes applied to temp-store
        \EndFunction
    \end{algorithmic}
    \end{algorithm}

    \begin{theorem}
        \Gls{scheduler} $\BATCHA$ is sequentially deterministic.
    \end{theorem}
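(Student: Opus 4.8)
The plan is to invoke the characterization in \autoref{thm:detr-critirion2}, which reduces sequential determinism to two separate obligations for every block $\Block$: that each execution of $\BATCHA$ for $\Block$ is \gls{serializable}, and that $\BATCHA$ is deterministic. Both obligations will follow from a single canonical serial witness built from the deterministic ordered partition, mirroring the structure of the proof of \autoref{thm:graph-sequentially-deterministic}. Concretely, fix $\Block$ and let $\bigsqcup_{i=1}^{k} \T_i = \T$ be the ordered, \gls{conflict-free} partition that \textsc{MakeSchedule} produces; recall this partition and its order are deterministic, hence identical across all executions. Define the candidate serial execution $\tau_\Block$ by concatenating the batches in order, i.e.\ first all transactions of $\T_1$, then all of $\T_2$, and so on, the internal order within a batch being immaterial by the argument below.

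For serializability, take any execution $\sigma$ of $\BATCHA$ for $\Block$ and show it is conflict-equivalent to $\tau_\Block$. The key observation is that two conflicting transactions $tx \conflicts tx'$ cannot lie in the same batch, since each $\T_i$ is \gls{conflict-free}; hence \Gls{wlog} $tx \in \T_i$ and $tx' \in \T_j$ with $i < j$. In $\sigma$, the loop at line~\ref{alg:batched:execute-schedule:loop} starts batch $\T_j$ only after the barrier at line~\ref{alg:batched:execute-schedule:wait} confirms that every transaction of the preceding batches—in particular $tx$—has completed, so $tx$ finishes before $tx'$ begins, matching their order in $\tau_\Block$. Because \texttt{temp-store} advances only at batch boundaries, every read in batch $\T_j$ observes exactly the writes of batches with index $< j$, so each transaction reads the same versions it would read in $\tau_\Block$; for non-conflicting pairs the relative order is irrelevant to the values read and written. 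Thus $\sigma$ and $\tau_\Block$ order all conflicts identically and agree on all reads and writes, so $\sigma \equiv \tau_\Block$ and $\sigma$ is \gls{serializable}.

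For determinism, let $\sigma$ and $\sigma'$ be any two executions of $\BATCHA$ for $\Block$. Since the partition and its order are deterministic, both run the same batches in the same order, and by the previous paragraph both are conflict-equivalent to the very same witness $\tau_\Block$; transitivity of $\equiv$ then gives $\sigma \equiv \tau_\Block \equiv \sigma'$. This simultaneously discharges conditions (i) and (ii) of \autoref{thm:detr-critirion2}, and because the same $\tau_\Block$ serves every execution, it directly establishes sequential determinism.

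I expect the main obstacle to be the careful handling of the object-version semantics across the batch barrier: one must argue that serving reads from \texttt{temp-store}, which is updated only once an entire batch completes, yields exactly the latest-version semantics of the serial order $\tau_\Block$, including the corner case of a read-write or write-read conflict spanning non-adjacent batches. The synchronization argument itself is simpler than in \autoref{thm:graph-sequentially-deterministic}, since the single global barrier replaces the per-edge signalling, but establishing that the resulting state equals that of $\tau_\Block$ is the step warranting the most care.
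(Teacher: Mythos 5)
Your proposal is correct and follows essentially the same route as the paper's proof: decomposing via \autoref{thm:detr-critirion2}, using the \gls{conflict-free} property of each batch plus the barrier at line~\ref{alg:batched:execute-schedule:wait} to fix the order of all conflicting pairs, and concluding by conflict equivalence. The only (harmless) difference is that you exhibit a single canonical serial witness $\tau_\Block$ up front and route both obligations through it by transitivity, whereas the paper extends each execution $\sigma$ to its own serial order and then argues determinism directly between two executions.
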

    \begin{proof}
        We prove this theorem by showing the criteria in \autoref{thm:detr-critirion2}.
        Fix some block~$\Block$.
        
        {(\texttt{serializable})} We first prove that each execution of $\BATCHA$ for the given block $\Block$ is \gls{serializable}.
        Consider some execution $\sigma$ of $\BATCHA$ for $\Block$.
        Let the schedule $\Schd$ be the batch sequence $[\T_1,...,\T_k]$ that \autoref{alg:batched:make-schedule} returned for the block $\Block$.
        Let $\tau$ be a sequential execution obtained by extending the partial order of transactions in $\sigma$ to a total order such that all transactions of the same batch are placed one after the~other.

        By definition, all transactions in the same batch $\T_i$ are non-conflicting.
        Hence, the values read and written by such transactions are independent of their relative order and, in particular, are the same in both $\tau$ and in $\sigma$.
        Additionally, by construction, all conflicting transactions in $\Block$ are ordered in the same manner in both $\tau$ and $\sigma$.
        In summary, $\tau$ is a valid sequential execution that is \emph{conflict equivalent} to $\sigma$, and therefore $\sigma$ is \gls{serializable}~\cite{DB-textbook} as required.

            {(\texttt{deterministic})} Next, we show that $\BATCHA$ is deterministic.
        Consider two executions $\sigma$ and $\sigma'$ of $\BATCHA$ for $\Block$.
        Since \autoref{alg:batched:make-schedule} is deterministic, the above partition is the same partition used for $\sigma$ and $\sigma'$.
        Let two conflicting transactions $tx_i \in \T_i$  and $tx_j \in \T_j$.
        Since $tx_i \conflicts tx_j$, we deduce that $\T_i \neq \T_j$.
        \Gls{wlog}, assume that $i < j$.
        The loop~(line~\ref{alg:batched:execute-schedule:loop}) on line~\ref{alg:batched:execute-schedule:wait} causes the \gls{scheduler} to wait for batch $\T_i$ to finish completely before starting to execute batch $\T_{i+1}$ (or higher).
        In particular, all transactions in batch $\T_j$ do not start before all transactions in batch $\T_i$ finish, in both $\sigma$ and $\sigma'$.
        This includes both $tx_i$ and $tx_j$; therefore, $tx_i$ ends before $tx_j$ starts in both $\sigma$ and $\sigma'$.
        Hence, all conflicting transactions in the same block are ordered in the same manner in $\sigma$ and $\sigma'$.
        For similar arguments as above, for all non-conflicting transactions, the values read and written by them are independent of their relative order, and in particular are the same in $\sigma$ and in $\sigma'$.
        In summary, $\sigma$ is \emph{conflict equivalent} to $\sigma'$, thus $\sigma \equiv \sigma'$ as necessary.
    \end{proof}

    \subsection{Latency and Optimality}
    In \autoref{sec:graph-latency} we introduced the latency property for a graph schedule and demonstrated how it captures the abstract concept of block execution duration.
    Then, in \autoref{chap:optimal} we defined what an optimal schedule is, based on its latency.
    We can introduce corresponding definitions for batch schedules in a similar fashion.
    
    \begin{definition}[Latency (Batch Schedules)]
        The latency of a batch \gls{schedule} $\Schd = \left[B_1, \ldots, B_k\right]$ is the sum of the length of its batches, denoted:
        \begin{equation}
            \BLt[\len]{\Schd} \triangleq
            \sum_{i=1}^k \lenOf{B_i} = 
            \sum_{i=1}^k \max_{tx \in B_i} \lenOf{tx}
        \end{equation}
        when the length of a batch is the length of its longest transaction, $\lenOf{B} \triangleq \max_{tx \in B} \lenOf{v}$.
    \end{definition}
    
    One can extend the principles we used to show how the latency properly represents the execution duration of graph schedules using $\GRAPHA$, to batch schedules and $\BATCHA$.
    The primary reason for this is that in $\BATCHA$ batches execute non-concurrently and wait until the previous batch finishes. Of course, the execution of a batch finishes at least after its longest transaction ends.

    Similarly, the definitions for optimal batch \gls{schedule}[s] and for optimal latency are identical to the versions for graph \gls{schedule}[s].
    We use the notation $\MinBLt[\len]{\Block}$ to denote the optimal (batch) latency of the block $\Block$.

    \subsection{Relation to Graphed Execution} \label{sec:batch-graph-formality}
    \begin{figure}
        \centering
        \subfloat[\normalfont{Original Batch Scheduler}]{
            \includegraphics[width=0.45\linewidth]{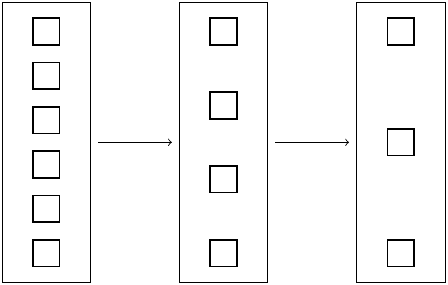}
        }
        \hfill
        \subfloat[\normalfont{Equivalent Batch Scheduler}]{
            \includegraphics[width=0.45\linewidth]{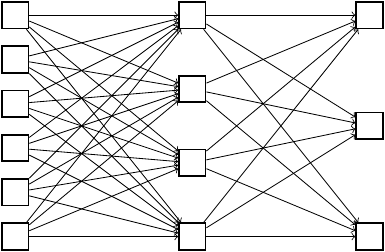}
        }
        \caption{A depiction of the conversion between a batch-oriented scheduler to an equivalent scheduling-graph-oriented scheduler.}
    \end{figure}
    Obviously, batch schedules, i.e., schedules produced by \autoref{alg:batched:make-schedule}, do not strictly follow the formal definition of a graph \gls{schedule} (\autoref{def:schedule}).
    However, we can show that a batched \gls{schedule} $[\T_i]= [\T_1,\ldots,\T_k]$ can be transformed into a corresponding graph schedule $\Schd_{[\T_i]}$ as shown in \autoref{eq:batch-edges}.
    \begin{equation} \label{eq:batch-edges}
    \small 
    \Schd_{[\T_i]} = \bigsqcup_{i=1}^{k-1}\left\{\left(tx_i, tx_{i+1}\right) \mid tx_i \in \T_i \land tx_{i+1} \in \T_{i+1} \right\} =  \bigsqcup_{i=1}^{k-1} \T_i \times \T_{i+1}
    \end{equation}
    When $\Schd_{[\T_i]}$ is combined with $\GRAPHA$, we achieve an equivalent behavior to that of $\BATCHA$.
    Execution results are the same for both schedules, since the construction of $\Schd_{[\T_i]}$ forces that only transactions from the same batch may run concurrently and that the batch order is respected.
    The latency is also preserved in this transformation, that is, $\BLt[\len]{[\T_i]} = \Lt[\len]{\Schd_{[\T_i]}}$, since every path of $k$ transactions in $\Schd_{[\T_i]}$ can be correlated with a selection of the \emph{same} $k$ transactions from batches, so each is selected from a different batch.
    Maximizing the length of the path is actually maximizing the length of the selected transactions, that is, selecting the longest transaction from each batch.
    This is also true vice versa.

    The explanation above allows us to view batch scheduling as a restricted version of graph scheduling, where only batch schedules are considered.
    This also helps us to immediately deduce properties of batch scheduling, for example $\MinLt{\Block} \leq \MinBLt{\Block}$ for any block.
    
    \subsection{Homogeneous Blocks}
    We now discuss the feasibility of optimal batch scheduling for homogenous blocks using a minimal coloring.
    Here, we build on the correctness of the $\MINCOLORA$ \gls{scheduler} from \autoref{chap:homo}.

    Assume that we have some homogeneous block $\Block$ and some optimal graph schedule $\Schd$ of it.
    We transform this schedule into a batch sequence by grouping transactions by their depths, i.e., $tx$ with $\depth[\Schd]{tx}=i$ goes to batch $\T_i$.
    Since all transactions have the same execution lengths, this new batch schedule has the same latency as the original schedule $\Schd$.
    Thus, it is also optimal as a batch schedule.

    The above transformation process is similar to the procedure $\textproc{ConvertToColoring}$ (\autoref{alg:converttocoloring}).
    We use this relationship to explain why a minimal coloring also helps create an optimal batch schedules as well when the block is homogenous.

\end{document}